\documentclass[11pt]{article}

\usepackage[margin=1in]{geometry}
\usepackage{amsmath,amssymb,amsthm,mathtools}
\usepackage{enumitem}
\usepackage{aliascnt}
\usepackage{xcolor}
\usepackage{hyperref}
\usepackage{cleveref}
\usepackage{microtype}

\newcommand{\rev}[1]{#1}
\newcommand{\Dpar}[1]{D^{\parallel}(#1)}
\newcommand{\EA}{\ensuremath{\mathrm{EA}}}

\newtheorem{Thm}{Theorem}[section]
\newtheorem*{Thm*}{Theorem}

\newaliascnt{Prop}{Thm}
\newtheorem{Prop}[Prop]{Proposition}
\aliascntresetthe{Prop}

\newaliascnt{Lem}{Thm}
\newtheorem{Lem}[Lem]{Lemma}
\aliascntresetthe{Lem}

\newaliascnt{Cor}{Thm}
\newtheorem{Cor}[Cor]{Corollary}
\aliascntresetthe{Cor}

\newaliascnt{Fact}{Thm}
\newtheorem{Fact}[Fact]{Fact}
\aliascntresetthe{Fact}

\newtheorem{Def}{Definition}
\newtheorem*{Con*}{Condition}

\crefname{Thm}{Theorem}{Theorems}
\crefname{Prop}{Proposition}{Propositions}
\crefname{Lem}{Lemma}{Lemmas}
\crefname{Fact}{Fact}{Facts}
\crefname{Def}{Definition}{Definitions}
\crefname{Cor}{Corollary}{Corollaries}
\crefname{Con}{Condition}{Conditions}
\crefname{equation}{Equation}{Equations}
\crefname{section}{Section}{Sections}
\crefname{appendix}{Appendix}{Appendices}

\NewCommandCopy{\RSMPcrefOriginal}{\cref}
\NewCommandCopy{\RSMPCrefOriginal}{\Cref}
\RenewDocumentCommand{\cref}{s m}{%
  \IfBlankTF{#2}
    {\IfBooleanTF{#1}{\ref*{}}{\ref{}}}
    {\IfBooleanTF{#1}{\RSMPcrefOriginal*{#2}}{\RSMPcrefOriginal{#2}}}%
}
\RenewDocumentCommand{\Cref}{s m}{%
  \IfBlankTF{#2}
    {\IfBooleanTF{#1}{\ref*{}}{\ref{}}}
    {\IfBooleanTF{#1}{\RSMPCrefOriginal*{#2}}{\RSMPCrefOriginal{#2}}}%
}

\newlist{conditions}{enumerate}{1}
\setlist[conditions,1]{
  label=\textup{(C.\Roman*)},
  ref=C.\Roman*,
  leftmargin=*
}

\crefformat{conditionsi}{#2(#1)#3}
\Crefformat{conditionsi}{#2(#1)#3}

\newcommand{\SMP}{\mathsf{SMP}}
\newcommand{\Bset}{\{0,1\}}
\newcommand{\Ext}{\operatorname{Ext}}
\newcommand{\ER}{\overline{R}}
\newcommand{\EC}{\overline{C}}
\newcommand{\Qbar}{\overline{Q}}
\newcommand{\Mopt}{\mathsf{opt}}
\newcommand{\E}{\mathbf{E}}
\newcommand{\Cost}{\mathrm{Cost}}
\newcommand{\Prb}{\mathbb{P}}
\newcommand{\Rel}{\mathcal R}
\newcommand{\AvSP}{\mathsf{Succ}}
\newcommand{\Rg}{\operatorname{Row}(g)}
\newcommand{\Cg}{\operatorname{Col}(g)}
\newcommand{\Rf}{\operatorname{Row}(f)}

\newcommand{\Xg}{\mathcal{X}_g}

\newcommand{\Unif}{\operatorname{Unif}}
\newcommand{\UnifXYng}{\Unif(\mathcal X^n_g \times \mathcal Y^n_g)}
\newcommand{\UnifXYnf}{\Unif(\mathcal X^n_f \times \mathcal Y^n_f)}
\DeclareMathOperator{\Tr}{Tr}

\title{Exact Asymptotic Rates and an Exponential Strong Converse\\
for quantum SMP and One-Way Communication}
\author{Daiki Suruga\\IQC, University of Waterloo}
\date{}

\begin{document}
\maketitle

\begin{abstract}
	Computing many instances jointly can reduce the communication required per
instance. We ask whether such savings are possible in the
simultaneous-message-passing (SMP) model and how they depend on quantum
messages and shared resources.
For every finite total function $f$ and every fixed error
$0\le\varepsilon<1$, we prove that the optimal worst-case communication
per instance for $f^n$ converges to
\[
 D^\parallel(f)=\log|\operatorname{Row}(f)|
                 +\log|\operatorname{Col}(f)|,
\]
where $\operatorname{Row}(f)$ and $\operatorname{Col}(f)$ are the sets of
distinct rows and columns of the table $(f(x,y))_{x,y}$.
This characterization holds for classical and quantum SMP without shared
entanglement, with or without shared randomness.
Thus, directly transmitting the row and column indices is asymptotically
optimal, even when bounded error, joint computation, and quantum messages
are allowed.
In particular, the exponential advantage achieved by quantum
fingerprinting for a single instance of equality without shared
randomness disappears in the asymptotic rate.

The same threshold governs an exponential strong converse: every fixed
(worst-case) communication rate below it forces the probability of computing all
$n$ outputs correctly to decay as $2^{-\Omega(n)}$, even on average
under a fixed product input distribution.
With shared randomness, we also determine the optimal expected
communication rate, which is $(1-\varepsilon)D^\parallel(f)$.
Entanglement shared between each sender and the referee halves both
rates, and arbitrary tripartite entanglement gives no further reduction.
The results extend to a class of total relations and yield corresponding
one-way characterizations, with $D^\parallel(f)$ replaced by
$\log|\operatorname{Row}(f)|$.

Our proof reduces SMP to one-way communication and relates successful
computation to sampling based on quantum min-entropy; we also give a simpler
proof of the key lemma using the pretty-good measurement.
\end{abstract}

\section{Introduction}
\label{sec:introduction}

Communication complexity studies the communication required to compute a
function whose inputs are distributed among different parties.
In the \emph{simultaneous-message-passing (SMP) model}~\cite{Yao79},
Alice and Bob receive inputs $x$ and $y$, respectively, and each sends a
message to a referee, who outputs $f(x,y)$.
The senders cannot communicate with one another, and the cost is the sum
of their message lengths.
This simple restriction makes SMP a useful setting for studying the power
of randomness~\cite{NS96,BK97}, quantum messages~\cite{BCWdW01,GKdW06,GGJL25},
and shared entanglement~\cite{GKRdW06,Gav08,AG23}.
The model also has connections to nonlocality~\cite{BCMdW10},
privacy~\cite{AHMS20,KN21}, and computational complexity~\cite{GRT22}.

These resources can substantially reduce the communication needed for a
single instance, while it is less clear how much they help when the senders
must \emph{solve many instances jointly}.
For a fixed finite total function
$f:\mathcal X\times\mathcal Y\to\mathcal Z$, we consider the task
\[
 f^n(x^n,y^n):=\bigl(f(x_1,y_1),\ldots,f(x_n,y_n)\bigr),
\]
requiring a correct output with probability at least $1-\varepsilon$ on
every input.
Each sender may use a message that depends on all $n$ inputs in this case, so a
protocol need not handle the instances separately.
We ask for the optimal communication per instance as $n\to\infty$
with $f$ fixed, and investigate carefully how the answer depends on error, quantum messages,
and shared resources.

\emph{Equality} shows why the answer need not follow from the complexity
of one instance.
For $x,y\in\{0,1\}^m$, let $\operatorname{EQ}_m(x,y)=1$ if and only if
$x=y$.
At fixed constant error, classical SMP with public randomness uses only
$O(1)$ bits~\cite{NS96}, while quantum fingerprinting uses $O(\log m)$
qubits without shared randomness or entanglement~\cite{BCWdW01}.
Without shared randomness, the latter gives an exponential advantage over
classical SMP~\cite{NS96,BCWdW01}.
These protocols are also much cheaper than classical zero-error SMP,
which requires $2m$ bits.
When $m$ is fixed and $n$ grows, do these savings persist in the optimal
rate for $\operatorname{EQ}_m^n$?

\emph{Our characterization} answers this question for every finite total
function.
For classical SMP and quantum SMP without shared entanglement, the optimal
worst-case communication per instance converges, for every fixed
$0\le\varepsilon<1$, to
\[
 D^\parallel(f):=\log|\operatorname{Row}(f)|
                 +\log|\operatorname{Col}(f)|,
\]
where $\operatorname{Row}(f)$ and $\operatorname{Col}(f)$ are the sets of
distinct rows $f(x,\cdot)$ and columns $f(\cdot,y)$.
All logarithms are to base two.
This rate is achieved with zero error by encoding the sequence of row
indices and the sequence of column indices, using
$nD^\parallel(f)+O(1)$ bits in total.
Our lower bound shows that bounded error, shared randomness, quantum
messages, and joint computation do not improve on this rate.
Even shared entanglement does not change the answer drastically: entanglement
between each sender and the referee only halves the rate, as achieved by
superdense coding, and arbitrary tripartite entanglement gives no further
reduction.
Thus, across the models considered here, the asymptotic quantum advantage
is at most a factor of two.

For equality, the rate is therefore $2m$ bits or qubits without
shared entanglement, and is $m$ bits with shared entanglement.
The exponential separation between classical and quantum SMP for one
instance disappears in the asymptotic rate.

\emph{The exact rate also determines an exponential transition in success
probability.}
Let $R_*$ denote the corresponding worst-case threshold:
$D^\parallel(f)$ without entanglement and $D^\parallel(f)/2$ with
sender--referee or arbitrary tripartite entanglement.
We prove that, for every fixed $\Delta>0$,
\[
 \operatorname{cost}(\Pi_n)\le n(R_*-\Delta)
 \quad\Longrightarrow\quad
 \min_{x^n,y^n}
 \Pr\!\left[\Pi_n(x^n,y^n)=f^n(x^n,y^n)\right]
 \le 2^{-\Omega(n)}.
\]
In fact, this bound holds for average success probability under a fixed
product input distribution.
Together with the zero-error protocols at rate $R_*+o(1)$, this bound
establishes an \emph{exponential strong converse}: every fixed saving
per instance below the optimal rate makes success exponentially unlikely.
This formulation connects our question to the study of strong converses
in information theory~\cite{Wol57,Ari73,ON99,LWD16,TW19}.
For expected communication, shared randomness permits a different
tradeoff. Running a zero-error protocol with probability $1-\varepsilon$
and otherwise sending no messages achieves rate $(1-\varepsilon)R_*$;
we prove that this rate is optimal as well.

\emph{Direct sum and direct product theorems} provide the closest
comparisons within communication complexity because they also ask how
the cost and success probability change when many instances must be
solved together.
A direct sum theorem seeks a lower bound of $\Omega(nC_0)$, where $C_0$
is the bounded-error complexity of one instance in the relevant model;
such questions have been studied in SMP and other communication
models~\cite{CSWY01,JRS03d,JK09,BR11,Bra15}.
A strong direct product theorem further establishes exponential decay
below a specified communication bound.
A typical form is
\[
 \operatorname{cost}(\Pi_n)\le cnL(f)
 \quad\Longrightarrow\quad
 \Pr[\Pi_n\text{ succeeds}]\le 2^{-\Omega(n)},
\]
where $c>0$ is sufficiently small and $L(f)$ is the complexity measure
appearing in the theorem~\cite{Kla10,She11SDP,JK21,JK22}.
These theorems do not determine the optimal asymptotic rate or give
exponential decay at every rate below it.
Equality makes the distinction concrete: its public-coin bounded-error
complexity is $O(1)$, whereas its classical SMP threshold is $2m$.
Identifying the threshold therefore requires more than improving a
constant factor in a bound expressed through single-instance
bounded-error complexity.

We next state our results for a class of total relations containing all
total functions, then explain the proof and compare the relevant prior
results in more detail.

\subsection{Main results}
\label{subsec:main-results}

Our extension to relations uses two assumptions.
The first allows us to derive a lower bound from an underlying function;
the second ensures that a matching protocol exists.
Let $\Rel$ be a finite total relation, and suppose that a fixed
postprocessing map recovers a function value $g(x,y)$ from every valid
output of $\Rel(x,y)$.
This is Condition~\cref{cond:1}, stated formally in \cref{sec:setup}.
Any protocol for $\Rel$ then gives a protocol for $g$, allowing a lower
bound for $g$ to apply to $\Rel$.
Write
\[
 \Dpar{g}:=\log|\Rg|+\log|\Cg|.
\]
For the matching upper bound, Condition~\cref{cond:2} ensures that the row
$g(x,\cdot)$ and column $g(\cdot,y)$ determine a valid output of $\Rel(x,y)$.
Every total function satisfies both conditions with $g=f$.

Shared randomness is allowed throughout this subsection.
We use the notation SR for quantum SMP without shared entanglement, BE for independent
bipartite entanglement between each sender and the referee, and TE for
arbitrary tripartite entanglement.
Communication is measured in bits for classical protocols and qubits for
quantum protocols; in both cases, the cost is the sum of the message
lengths.
Success for $\Rel^n$ means producing valid outputs in all $n$ coordinates.

\paragraph{First result: Exponential strong converse.}
Our first result establishes an exponential strong converse for
relations satisfying the first assumption.
\begin{Thm}[Informal; see \cref{thm:general-relation-strong}]
\label{thm:intro-strong-converse}
Let $\Rel$ and $g$ satisfy Condition~\cref{cond:1}, and fix $\Delta>0$.
Every SMP protocol for $\Rel^n$ with worst-case communication at most
\[
 \begin{cases}
 n\bigl(\Dpar{g}-\Delta\bigr),
   &\text{classical or quantum without entanglement (SR)},\\[2pt]
 n\bigl(\Dpar{g}/2-\Delta\bigr),
   &\text{quantum with prior entanglement (BE, TE)},
 \end{cases}
\]
has average success probability at most $2^{-\kappa n}$ for all sufficiently
large $n$, where $\kappa>0$ depends only on $g$, $\Delta$, and the model.
\end{Thm}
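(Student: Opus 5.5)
First, I reduce from $\Rel$ to $g$. By Condition~\cref{cond:1}, the referee can apply the fixed postprocessing map coordinatewise, so any protocol for $\Rel^n$ becomes a protocol for $g^n$ with the same cost and at least the same success probability. It therefore suffices to prove the bound for $g$.

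Next, I restrict to distinct rows and columns. Pick one representative input for each distinct row and each distinct column of $g$. Let $\mu$ be uniform on $\Rg\times\Cg$, viewed as a product distribution on these representatives, and consider $\mu^{\otimes n}$, i.e.\ $\UnifXYng$. A protocol that outputs $g^n(x^n,y^n)$ correctly in all coordinates with average probability $p$ must distinguish the inputs in the following sense. Fix Bob's message. The referee's output, together with Bob's known input $y^n$, determines the row pattern $(g(x_i,y_i))_i$. Since distinct rows differ in some column, averaging over $y^n$ reduces the problem to a one-way problem. So I will reduce SMP to one-way communication: condition on Bob's input and message, and regard Bob plus the referee as a single receiver. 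This bounds Alice's contribution. The symmetric argument bounds Bob's.

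The real difficulty is that the two messages are sent jointly and the costs add, so I need a single bound of the form
\[
p \le 2^{-(n D^\parallel(g)-\text{cost})}\cdot(\text{correction}),
\]
not two separate bounds. The plan is to show that success implies the referee can sample, or guess, the pair (row index sequence, column index sequence) with probability roughly $p\cdot 2^{-n\cdot\text{slack}}$, and that the combined message state has conditional min-entropy at most the message length. The key lemma I would prove is a one-way statement. If a receiver holding $y^n\sim$ uniform over column representatives and a $c$-qubit message computes $g^n$ correctly with probability $p$, then Alice's row sequence can be guessed with probability at least about $p^{O(1)}\,2^{-n\delta}$. The proof goes as follows. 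Typicality of $y^n$ means each distinct row pair is separated by a constant fraction of coordinates. A pretty-good-measurement argument then converts correct outputs on random $y^n$ into identification of $x^n$ up to an exponentially small set of confusable sequences. Combined with the min-entropy bound $p_{\rm guess}\le 2^{c}/|\Rg|^n$ (or $2^{2c}/|\Rg|^n$ with entanglement, which gives the halved threshold via the standard min-entropy chain rule for $c$ qubits), this yields decay. For SMP, I chain the two: apply the lemma to Alice's side with Bob's message as side information, then to Bob's side. Because the senders share no entanglement in SR, the joint state is separable across the two messages, and the guessing probability for $(x^n,y^n)$ is at most $2^{c_A+c_B}/(|\Rg||\Cg|)^n$. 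In the TE model, the min-entropy chain rule still gives a factor $2^{2(c_A+c_B)}$.

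I expect the main obstacle to be the key lemma. Success of $g^n$ does not obviously identify $x^n$, because a single $y^n$ only reveals $x_i$'s row on one column per coordinate. I would first try an amplification trick: the receiver can in principle run the output computation for several independent fresh $y^n$, but it cannot clone the quantum message. So instead I would use the pretty-good measurement on the ensemble $\{\rho_{x^n}\}$, and bound its error through the pairwise fidelities that success forces to be small on separated coordinates, absorbing the union over the $2^{o(n)}$-sized confusable neighborhoods into the $\Delta$ slack.
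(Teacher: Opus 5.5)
The step that fails is how you combine the two senders. Your claim that two separate one-sender bounds cannot suffice is wrong. For each shared-randomness value $r$, the constraint $q_A(r)+q_B(r)\le n(\Dpar{g}/\alpha_{\mathrm{res}}-\Delta)$ forces either $q_A(r)\le n(\log|\Rg|/\alpha_{\mathrm{res}}-\Delta/2)$ or $q_B(r)\le n(\log|\Cg|/\alpha_{\mathrm{res}}-\Delta/2)$. The one-way bound for whichever side is short (with the other sender merged into the referee) has a constant uniform in $r$, so it already gives $2^{-\kappa n}$ after averaging over $r$. This pigeonhole step is the missing idea, and it is exactly the paper's argument.

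The joint-guessing chain you propose instead has no mechanism in TE:
\begin{itemize}
\item Alice's one-way receiver holds $M_AB_0C_0$, which contains Bob's \emph{pre-encoding} share $B_0$. The referee actually holds $M_AM_BC_0$.
\item The two one-sender views never coexist, and the referee's state does not factor across the senders. So you cannot run Alice's and Bob's measurements together and multiply their success probabilities. That works only in SR and BE, where the relevant state is a product.
\item You also cannot feed Bob's message to Alice's lemma as side information. That register depends on $Y^n$, which destroys the independence of $Y^n$ from $(X^n,U)$ on which the one-way argument rests.
\end{itemize}
Your TE upper bound $2^{2(c_A+c_B)}/(|\Rg||\Cg|)^n$ is fine, but the lower bound from success that should be compared with it is never produced. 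If you want a bound in terms of total cost, take the better of the two one-way bounds and use $\min(a,b)\le\sqrt{ab}$ with $a=2^{\alpha_{\mathrm{res}}c_A}/|\Rg|^n$ and $b=2^{\alpha_{\mathrm{res}}c_B}/|\Cg|^n$. No joint measurement is needed.

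Your one-way lemma is essentially the right statement; it is the \cref{app:simplified-proof} route, whereas the main text uses Wullschleger's min-entropy sampling. However, the mechanism you sketch for it would not work. Bounding the PGM error by pairwise fidelities that success ``forces to be small'' is vacuous when $p$ is exponentially small. With message dimension far below $|\Rg|^n$, most pairs of message states can be nearly identical.

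What works is the square-root optimality of the PGM after coarse-graining:
\begin{itemize}
\item For each $y^n$, group the outcomes of the PGM for $X^n$ by $x\mapsto g^n(x,y^n)$. This yields exactly the PGM of the coarse-grained ensemble.
\item That PGM succeeds with probability at least $s_{y^n}^2$, where $s_{y^n}$ is the protocol's success probability at fixed $y^n$. Jensen then gives $\Prb[g^n(\widetilde X^n,Y^n)=g^n(X^n,Y^n)]\ge p^2$.
\item Pairs $(X^n,\widetilde X^n)$ within Hamming radius $r$ contribute at most the ball size times $p_{\mathrm{guess}}(X^n\mid U)$.
\item Farther pairs agree on the independent uniform $Y^n$ with probability at most $e^{-(r+1)/|\mathcal Y|}$.
\end{itemize}
The resulting inequality carries this additive term, which your ``$p^{O(1)}2^{-n\delta}$'' omits; the omission is harmless for exponential decay. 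You also need the fixed-length reduction before speaking of a $c$-qubit message for variable-length protocols.
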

The average is taken over
$(X^n,Y^n)\sim\Unif(\mathcal X_g^n\times\mathcal Y_g^n)$.
Here $\mathcal X_g$ contains one fixed input for each distinct row of $g$,
and $\mathcal Y_g$ contains one fixed input for each distinct column.
Thus the input distribution depends only on $g$, and the same distribution
works for every protocol.
Since some input has success probability no larger than this average,
the theorem also bounds worst-case success probability.

\paragraph{Second result: Exact asymptotic rates.}
Under Condition~\cref{cond:2}, sending the sequences of row and column
indices solves $\Rel^n$ with zero error using $n\Dpar{g}+O(1)$ bits.
Superdense coding halves this cost when each sender shares entanglement
with the referee.
These protocols match the strong converse for worst-case communication;
a truncation argument also gives the matching lower bounds for expected
communication, as shown in \cref{sec:tight-characterization}.

Write $\ER^{\SMP}_\varepsilon(\Rel^n)$ and
$R^{\SMP,\mathrm{wc}}_\varepsilon(\Rel^n)$ for the optimal expected and
worst-case classical costs with worst-case error at most $\varepsilon$.
Expected cost means the maximum, over inputs, of the expected number of
bits communicated.
The corresponding quantum costs are $\Qbar^{\mathrm{res}}_\varepsilon(\Rel^n)$
and $Q^{\mathrm{res},\mathrm{wc}}_\varepsilon(\Rel^n)$, where
$\mathrm{res}\in\{\mathrm{SR},\mathrm{BE},\mathrm{TE}\}$.

\noindent\begin{minipage}{\linewidth}
\begin{Thm}[Informal; see \cref{thm:matching-relation-rates}]
\label{thm:intro-matching-rates}
Let $\Rel$ and $g$ satisfy Conditions~\cref{cond:1} and \cref{cond:2}.
For every fixed $0\le\varepsilon<1$, the following limits hold:
\begin{center}
\renewcommand{\arraystretch}{2.1}
\setlength{\tabcolsep}{5pt}
\begin{tabular}{@{}lcc@{}}
\hline
Model & Expected rate & Worst-case rate \\
\hline 
\noalign{\vskip 6pt}
Classical
 & $\displaystyle\lim_{n\to\infty}\frac{\ER^{\SMP}_\varepsilon(\Rel^n)}{n}
       =(1-\varepsilon)\Dpar{g}$
 & $\displaystyle\lim_{n\to\infty}\frac{R^{\SMP,\mathrm{wc}}_\varepsilon(\Rel^n)}{n}
       =\Dpar{g}$ \\[6pt]
Quantum (SR)
 & $\displaystyle\lim_{n\to\infty}\frac{\Qbar^{\mathrm{SR}}_\varepsilon(\Rel^n)}{n}
       =(1-\varepsilon)\Dpar{g}$
 & $\displaystyle\lim_{n\to\infty}\frac{Q^{\mathrm{SR},\mathrm{wc}}_\varepsilon(\Rel^n)}{n}
       =\Dpar{g}$ \\[6pt]
Quantum (BE, TE)
 & $\displaystyle\lim_{n\to\infty}\frac{\Qbar^{\mathrm{res}}_\varepsilon(\Rel^n)}{n}
       =\frac{1-\varepsilon}{2}\Dpar{g}$
 & $\displaystyle\lim_{n\to\infty}\frac{Q^{\mathrm{res},\mathrm{wc}}_\varepsilon(\Rel^n)}{n}
       =\frac{\Dpar{g}}{2}$ \\[6pt]
\hline
\end{tabular}
\end{center}
In the last row, $\mathrm{res}\in\{\mathrm{BE},\mathrm{TE}\}$.
\end{Thm}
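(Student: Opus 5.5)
The plan is to derive each of the six limits from matching upper and lower bounds: explicit zero-error protocols give the upper bounds, and the exponential strong converse (\cref{thm:intro-strong-converse}) gives the lower bounds. Expected-cost lower bounds need an extra truncation argument.

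\emph{Upper bounds.} By Condition~\cref{cond:2}, it suffices for the referee to learn the sequence of row classes of $g$ on $x^n$ and the sequence of column classes on $y^n$. Alice encodes $(\text{row index of }x_i)_{i\le n}$ as an element of $\Rg^n$ using $\lceil n\log|\Rg|\rceil$ bits, and Bob does the same for columns. This is a zero-error classical protocol of cost $n\Dpar{g}+2$, and it is also a valid SR protocol. With bipartite entanglement (and hence also TE), superdense coding sends these strings using $\lceil\cdot/2\rceil$ qubits each.

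For expected cost with error $\varepsilon$, shared randomness is used to run the zero-error protocol with probability $1-\varepsilon$ and to send nothing otherwise. This gives expected cost $(1-\varepsilon)(n\Dpar{g}+O(1))$, respectively half of that with entanglement, and worst-case error at most $\varepsilon$. Dividing by $n$ yields the $\limsup$ bounds in every cell of the table. The worst-case upper bounds are the $\varepsilon=0$ protocols.

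\emph{Worst-case lower bounds.} Suppose that for some $\Delta>0$ there are infinitely many $n$ with $R^{\SMP,\mathrm{wc}}_\varepsilon(\Rel^n)\le n(\Dpar{g}-\Delta)$. Take an optimal protocol. Its worst-case success probability is at least $1-\varepsilon>0$, so its average success under $\Unif(\mathcal X_g^n\times\mathcal Y_g^n)$ is also at least $1-\varepsilon$. \Cref{thm:intro-strong-converse} bounds this average by $2^{-\kappa n}$, which is a contradiction for large $n$. The same argument applies to the quantum SR, BE, and TE models with the thresholds $\Dpar{g}$ and $\Dpar{g}/2$. Hence $\liminf$ is at least the claimed rate.

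\emph{Expected-cost lower bounds (main obstacle).} Fix a protocol with expected cost $C$ on every input and error at most $\varepsilon$. Pick $\delta>0$ and truncate: abort, and let the referee output anything, whenever the realized communication exceeds $n(R_*-\Delta)$. In the classical model the message lengths are determined by the shared randomness and the inputs, so this truncation is well defined. In the quantum setting I would model variable length as the number of qubits actually sent, and assume the paper's definition lets the length register be measured or fixed classically per randomness outcome. Averaging over the product distribution, let $p$ be the probability that the length is at most $n(R_*-\Delta)$. The truncated protocol has worst-case cost $n(R_*-\Delta)$, and its average success is at least $(1-\varepsilon)-(1-p)$. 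By \cref{thm:intro-strong-converse} this is at most $2^{-\kappa n}$, so $1-p\ge 1-\varepsilon-2^{-\kappa n}$.

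The expected cost under the distribution is then at least $n(R_*-\Delta)(1-\varepsilon-2^{-\kappa n})$, and the worst-case expected cost is at least this average. Dividing by $n$ and letting $n\to\infty$ and then $\Delta\to0$ gives $\liminf\ge(1-\varepsilon)R_*$.

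The delicate point is to make sure the truncated protocol is a legitimate protocol in each model, in particular in the quantum models with entanglement, where length must be well defined. It also has to be checked that \cref{thm:intro-strong-converse} applies to the truncated protocol's average success without any per-input error guarantee. The theorem's hypothesis is only worst-case cost, so this should hold.
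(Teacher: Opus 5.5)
Your upper bounds and worst-case lower bounds are correct and match the paper's. The gap is in the expected-cost lower bound, at the truncation step. In an SMP protocol the cost is $q_A(x^n,r)+q_B(y^n,r)$. Alice knows only her own length and Bob only his, so neither can tell whether the \emph{total} exceeds $n(R_*-\Delta)$. The referee cannot help either, because the cost is already incurred once the messages are sent. So ``abort whenever the realized communication exceeds $n(R_*-\Delta)$'' is not a legitimate protocol of worst-case cost $n(R_*-\Delta)$.

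Any implementable truncation imposes separate caps $a_0,b_0$ with $a_0+b_0\le n(R_*-\Delta)$. It preserves success only on $\{L_A\le a_0,\ L_B\le b_0\}$, which can miss most of $\{L_A+L_B\le n(R_*-\Delta)\}$: inputs whose splits $(L_A,L_B)$ lie at different points of the line $a+b=n(R_*-\Delta)$ cannot all be covered by one pair of caps. Hence your bound ``average success $\ge(1-\varepsilon)-(1-p)$'' does not hold for any single truncated protocol, and one application of the strong converse is not enough.

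The missing idea is to split by the two individual lengths and use a union bound. For each pair $(a,b)$ with $a+b\le T_n$, let Alice send her message, embedded in $a+1$ qubits, only if her own length equals $a$; otherwise she sends an orthogonal flag state $\lvert\bot\rangle$. Bob does the same with $b$. This is a local operation, since each length is a deterministic function of that sender's input and $r$; that fact also settles your worry about quantum lengths. The resulting protocol has worst-case cost $a+b+2$ and succeeds with probability at least $\Pr[S,\,L_A=a,\,L_B=b]$; this is Lemma~\ref{lem:exact-length-filtering}.

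The strong converse applies to every such protocol with the same $\kappa$ and the same threshold on $n$, because both depend only on $g$, $\Delta$, and the resource model. Therefore $\Pr[S,\,L\le T_n]\le N_n2^{-\kappa n}$ with $N_n=O(n^2)$. Markov's inequality then gives $1-\varepsilon\le C_n/T_n+N_n2^{-\kappa n}$, and the polynomial factor is absorbed by the exponential decay. (Using caps $(a_0,\,T_n-a_0)$ for $a_0=0,\dots,T_n$ would also work, with $O(n)$ terms.)
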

\end{minipage}

The expected rate follows the same threshold, with the factor
$1-\varepsilon$ accounting exactly for the option of not running the
zero-error protocol.
For worst-case communication, any fixed positive success probability
requires the full rate.
These worst-case characterizations also hold without shared randomness:
the lower bounds allow it, while the matching zero-error protocols do
not use it.
Finally, the identical BE and TE rates show that arbitrary tripartite
entanglement offers no asymptotic improvement over independent
sender--referee entanglement.

\paragraph{Consequences for one-way communication.}
When Alice sends a message directly to Bob, the same strong converse and
exact rates hold with $\Dpar{g}$ replaced by
\[
 D_{\to}(g):=\log|\Rg|;
\]
see \cref{cor:one-way-relation-strong,cor:one-way-relation-rates}.
Only the row term remains because Bob already knows his input and
produces the output himself.

\paragraph{Examples and limitations.}
Beyond total functions, the assumptions cover several relations with
multiple valid outputs.
For example, consider the relation that outputs $\mathsf{disj}$ when
two subsets of $[m]$ are disjoint and otherwise outputs an element of
their intersection:
\[
 \Rel_{\cap}(x,y):=
 \begin{cases}
  \{\mathsf{disj}\},&x\cap y=\varnothing,\\
  x\cap y,&x\cap y\ne\varnothing.
 \end{cases}
\]
The results also apply to relations of the form
$\Rel_h(x,y)=\{(a,b)\in G^2:a+b=h(x,y)\}$, where $G$ is a finite
abelian group, as in linear games~\cite{RAM16}.
The conditions are restrictive, however: the magic-square
relation~\cite{Mer90,Per90} falls outside the characterization.
\Cref{sec:applications} gives the applications and explains these
limitations.

\subsection{Proof technique}
\label{subsec:proof-technique}

The main task is to prove the exponential strong converse.
Once it is established, matching protocols and truncation yield the exact
rates; see \cref{sec:tight-characterization}.
We explain the argument first for fixed-length quantum messages without
shared entanglement or shared randomness, then describe how to incorporate
these resources and variable message lengths.

\paragraph{Reduction to one-way communication.}
An SMP protocol can save communication only if at least one sender's
message is short.
More precisely, if its message lengths satisfy
$q_A+q_B\le n(\Dpar{g}-\Delta)$, then
\[
 q_A\le n\bigl(\log|\Rg|-\Delta/2\bigr)
 \quad\text{or}\quad
 q_B\le n\bigl(\log|\Cg|-\Delta/2\bigr).
\]
In the first case, group Bob and the referee into one receiver.
This receiver knows $y^n$, so he can prepare Bob's message and perform
the referee's measurement after receiving Alice's message.
The resulting one-way protocol has the same success probability as the
SMP protocol but communicates only $q_A$ qubits.
The second case is symmetric; see \cref{subsec:one-sender-views}.
Condition~\cref{cond:1} then converts successful computation of $\Rel^n$
into successful computation of $g^n$.
It therefore suffices to prove a one-way strong converse for $g^n$.

\paragraph{The one-way statement.}
Let Alice's $q_A$-qubit message of a one-way protocol $\Pi$ be $Q$, and suppose that, for a fixed
$\Delta>0$,
\begin{equation}
 q_A\le n\bigl(\log|\Rg|-\Delta\bigr).
 \label{eq:intro-one-way-gap}
\end{equation}
Take $X^n$ and $Y^n$ independently and uniformly from
$\mathcal X_g^n$ and $\mathcal Y_g^n$.
A short message makes Alice's entire input hard to recover, but this
alone does not bound the probability of computing $g^n(X^n,Y^n)$:
the output may reveal much less than $X^n$.
To bridge this gap, we encode the row truth tables determined by $X^n$
as a binary string $V^n=V_n(X^n)$ of length $N=\Theta(n)$.
This encoding is injective on $\mathcal X_g^n$, and its entries can be
recovered from suitable function values.
For a random set $T\subseteq[N]$, write $V_T$ for the sampled bits.
The argument proceeds through three implications:
\begin{center}
\begin{minipage}{0.94\linewidth}
\begin{enumerate}[label=\textbf{Step \arabic*:},leftmargin=*]
 \item The communication deficit gives
 $H_{\min}(X^n\mid Q)\ge\Delta n$.
 \item This entropy bound makes a random substring hard to recover:
 $p_{\mathrm{guess}}(V_T\mid TQ)\le 2^{-\Omega(|T|)}$.
 \item A protocol that computes $g^n$ yields a decoder for $V_T$,
 implying exponentially small success probability for the protocol.
\end{enumerate}
\end{minipage}
\end{center}

\paragraph{Step 1: Short communication leaves linear min-entropy.}
Conditional min-entropy measures the difficulty of recovering a classical
input from a quantum register:
\[
 H_{\min}(X^n\mid Q)=-\log p_{\mathrm{guess}}(X^n\mid Q),
\]
where $p_{\mathrm{guess}}$ is the optimal success probability over all
measurements on $Q$.
Since $X^n$ is uniform and $Q$ contains $q_A$ qubits,
we show in~\cref{lem:sr-entropy} that
\[
 H_{\min}(X^n\mid Q)
 \ge n\log|\Rg|-q_A
 \ge\Delta n.
\]
The injective encoding preserves this uncertainty, so the same lower
bound holds for $H_{\min}(V^n\mid Q)$.
This supplies the hypothesis needed for sampling.

\paragraph{Step 2: A random substring remains hard to guess.}
A sampling theorem from Ref.~\cite{Wul11} converts the conditional
min-entropy bound from Step~1 into a bound on the probability of guessing
randomly selected bits.
We state the required form in \cref{lem:wullschleger} and give its proof
in \cref{app:wullschleger}.
For sufficiently small constant $\eta>0$, let $T$ be a uniformly random
$k$-element subset of $[N]$, independent of $X^nQ$, where
$k=\lfloor\eta n\rfloor$.
The sampling bound gives
\begin{equation}
 p_{\mathrm{guess}}(V_T\mid TQ)\le 2^{-\Omega(k)}.
 \label{eq:intro_step2}
\end{equation}
Thus, even after learning which positions $T$ were sampled, the receiver
has exponentially small probability of recovering all their values.
To use this bound, we must relate the sampled bits to outputs that a
single execution of the protocol can provide.

\paragraph{Step 3: Successful computation recovers the substring.}
For a suitable set $\mathcal G$ of samples $T$, we can choose Bob's input
so that the protocol's outputs determine every bit in $V_T$.
The construction in \cref{lem:functional-sampling} also preserves the
original input distribution after averaging over these samples.
It follows that the resulting decoder for $V_T$ based on the one-way protocol $\Pi$ succeeds with probability at least
\[
 \Pr[T\in\mathcal G]\,\Pr[\Pi\text{ succeeds}].
\]
The optimal guessing probability is at least this large.
We therefore obtain
\[
 \Pr[\Pi\text{ succeeds}]
 \le\frac{p_{\mathrm{guess}}(V_T\mid TQ)}{\Pr[T\in\mathcal G]}.
\]
The remaining issue is the probability of obtaining a suitable sample.
We show that $\Pr[T\in\mathcal G]\ge 2^{-O(k^2/n)}$, and hence
\[
 \Pr[\Pi\text{ succeeds}]
 \le 2^{-\Omega(k)+O(k^2/n)}.
\]
For $k=\lfloor\eta n\rfloor$, the negative term has order $\eta n$,
whereas the loss has order $\eta^2n$.
Choosing $\eta$ sufficiently small preserves exponential decay.
\Cref{lem:functional-sampling} formalizes the sampling and decoding
argument, and \cref{prop:one-way-quantum} combines it with Step~1.

\paragraph{Shared resources and variable message lengths.}
With prior entanglement, the receiver's side information includes his
share of the entangled state.
The entropy bound in Step~1 then has communication term $2q_A$ in place
of $q_A$, yielding half the threshold.
Grouping the other sender and the referee into one receiver makes this
argument applicable to both BE and TE.
For shared randomness, we apply the bound for each fixed randomness
value and average; its constants are uniform over those values.
Finally, \cref{lem:fixed-register-reduction} converts variable message
lengths to fixed registers with only $O(1)$ additional qubits, which does
not change the asymptotic threshold or exponential decay.

After the completion of this manuscript, a simpler proof of
\cref{lem:functional-sampling} was found.
It uses the pretty-good measurement directly and avoids introducing the
random sample $T$.
We include this proof in \cref{app:simplified-proof}, while retaining the
original sampling proof in the main text.

\subsection{Related work}

The comparisons below distinguish the range of relations covered, the
shared resources allowed, and the communication bound at which each
converse applies. These distinctions explain both the scope of the earlier
results and the additional conclusions provided by our characterization.

\emph{Direct sums in SMP.}
Ref.~\cite{CSWY01} introduced an information-theoretic approach to direct
sum problems in SMP.
Ref.~\cite{JK09} proves classical and quantum direct sum theorems for
arbitrary relations when shared resources are available separately
between each sender and the referee, with no shared resource between
the senders.
For fixed error parameters, these results give lower bounds of order
$\Omega(nC_0)$, where $C_0$ is the corresponding single-instance
bounded-error complexity.
They establish linear growth but do not identify its exact coefficient
or determine the success probability below the optimal rate.
Our results answer both questions for the stated class of relations,
allowing randomness shared by all parties and arbitrary tripartite
entanglement in the quantum setting.
The earlier direct sum results apply to relations outside this class.

\emph{Direct product theorems.}
Refs.~\cite{Jai15,JK21} prove strong direct product theorems for classical
public-coin and entanglement-assisted quantum one-way communication,
respectively, for arbitrary relations.
For fixed error parameters, their bounds give exponential decay when
communication is at most $n(cC_0-b)$, for suitable constants $c>0$ and
$b\ge0$, where $C_0$ is the relevant bounded-error complexity of one
instance.
Our one-way threshold is instead $\log|\Rg|$ without entanglement and
half this value with entanglement, with decay at every fixed rate below it.
For example, public-coin one-way equality has bounded-error complexity
$O(1)$, but its classical asymptotic threshold is $m$.
This is a difference in the quantity governing the threshold, in
addition to the range of rates covered by the converse.

Strong direct product results also hold for broader communication
models~\cite{Kla10,She11SDP,BW15,JK22}.
Ref.~\cite{BW15} gives a classical interactive result in terms of
information complexity, while Ref.~\cite{JK22} treats
entanglement-assisted multiparty quantum communication, including SMP
protocols as a special case.
The bounds for these broader models need not match the optimal SMP rate.
Our characterization uses the restriction to simultaneous messages to
obtain an explicit threshold for every total function.

\emph{Exact asymptotic rates.}
There are several results that use information theoretic techniques to characterize asymptotic communication in
classical interactive models under per-instance error
criteria~\cite{BR11,Bra15}, with a quantum counterpart developed in
Ref.~\cite{Tou15}.
Ref.~\cite{AT16} gives an exact characterization of asymptotic expected
communication in classical SMP under a per-instance error criterion,
with independent randomness shared between each sender and the referee.
These results address a different error requirement: each coordinate
must be correct with the prescribed probability, without requiring all
coordinates to be correct simultaneously with that probability.
For the latter requirement, Ref.~\cite{Sur26} characterizes the exact
asymptotic expected cost in classical interactive communication and the
worst-case cost within a constant factor.
To our knowledge, the present work gives the first exact
characterizations of both rates for all total functions in the SMP
models considered here.

\emph{Information-theoretic strong converses.}
Ref.~\cite{TW19} proves strong converses for classical bounded-round
interactive function computation under i.i.d.\ input distributions;
its proofs also yield exponential strong converses.
The threshold is the optimal distributional rate under that model's
communication and output conventions.
Such a result gives a worst-case lower bound because worst-case success
probability is at most distributional success probability.
Adapting the output convention to require an answer only at the receiver
also connects this approach to the classical one-way setting.
Our results treat the stated classical and quantum SMP models and give
explicit thresholds together with matching expected and worst-case
rates.

\paragraph{Open problems.}
The exact SMP characterization raises the question of how far these
results extend to interactive communication.
Beyond the bounded-round classical setting of Ref.~\cite{TW19}, can one
determine the exact asymptotic worst-case communication rates and prove
exponential strong converses for classical and quantum two-way protocols?

Even in SMP, the threshold does not determine how quickly the optimal
success probability decays at a given rate below it.
Determining the optimal exponent would quantify what can still be
achieved with insufficient communication.
For relations, a further question is whether an analogous
characterization is possible without Conditions~\cref{cond:1} and
\cref{cond:2}, which allow us to pass through a function value.

The location of shared entanglement presents a separate question.
Suppose Alice and Bob share entanglement with each other, but neither
shares entanglement with the referee.
Our lower bound for arbitrary tripartite entanglement still applies,
but the superdense-coding protocol attaining it requires
sender--referee entanglement.
What determines the exact rate when only the senders share entanglement,
and does an exponential strong converse hold at that rate?

\paragraph{AI disclosure}
The author contributed (sadly?) only the research ideas and questions, while ChatGPT (5.6/Astra, accessed through ChatGPT Plus) developed and completed all proofs. 
The author subsequently verified every proof step by step and wrote the manuscript with further AI assistance to improve readability. 
The author takes full responsibility for the correctness and presentation of the paper. 
The original AI outputs are archived on \href{https://osf.io/qewp4/overview?view_only=56400ca0ef034fcc8e68d208c9fc3142}{OSF}.

\subsection{Organization}
\begin{description}[
  style=multiline,
  leftmargin=2.5cm,
  labelwidth=2.2cm,
  align=left,
  font=\normalfont,
  nosep
]
 \item[\Cref{sec:setup}]
 Communication models and assumptions on the relations considered.
 \item[\Cref{sec:strong-converse}]
	 Exponential strong converse~(\cref{thm:intro-strong-converse}) for SMP communication.
 \item[\Cref{sec:tight-characterization}]
 Exact asymptotic rates~(\cref{thm:intro-matching-rates}) for worst-case and expected communication.
 \item[\Cref{sec:one-way}]
 Corresponding results for one-way communication.
 \item[\Cref{sec:applications}]
 Applications and limitations.
 \item[\Cref{app:wullschleger}]
 Proof of the min-entropy sampling lemma.
 \item[{\Cref{app:simplified-proof}}]
An alternative proof of \cref{lem:functional-sampling}
 using the pretty-good measurement.
\end{description}

\section{Preliminaries and formal setup}
\label{sec:setup}
All logarithms are to base two.
Throughout the paper, $\mathcal X,\mathcal Y,\mathcal Z,\mathcal W$ are finite nonempty sets.
\subsection{Relations and row--column structure}

A total relation is a map
\[
 \Rel:\mathcal X\times\mathcal Y\longrightarrow 2^{\mathcal Z}\setminus\{\varnothing\}.
\]
Its $n$-fold version is
\[
 \Rel^n(x^n,y^n):=\Rel(x_1,y_1)\times\cdots\times\Rel(x_n,y_n)
\]
where ``$\times$'' denotes the Cartesian product.
Let $g:\mathcal X\times\mathcal Y\to\mathcal W$ be a function.  For $x\in\mathcal X$ and
$y\in\mathcal Y$, write $g_x:=g(x,\cdot), ~ g^y:=g(\cdot,y)$,
and define 
\[
 \operatorname{Row}(g):=\{g_x:x\in\mathcal X\},
 \qquad
 \operatorname{Col}(g):=\{g^y:y\in\mathcal Y\}.
\]
These are the sets of distinct rows and columns of the table of $g$.

Our main results below concern the following conditions: 

\begin{Con*}
Let $\Rel$ be a total relation. Consider the following conditions.

\begin{conditions}
\item\label{cond:1}
There are maps
$g:\mathcal X\times\mathcal Y\to\mathcal W$ and $\phi:\mathcal Z\to\mathcal W$
such that, for every $(x,y)\in\mathcal X\times\mathcal Y$,
\begin{equation}
\Rel(x,y)
\subseteq
\{z\in\mathcal Z:\phi(z)=g(x,y)\}.
\label{eq:relation-structure_0}
\end{equation}

\item\label{cond:2}
There are maps $g, \phi$ satisfying \cref{cond:1}
and an additional map $s:\operatorname{Row}(g)\times\operatorname{Col}(g)\to\mathcal Z$
satisfying
\begin{equation}
s(g_x,g^y)\in\Rel(x,y)
\label{eq:relation-structure}
\end{equation}
for every $(x,y)\in\mathcal X\times\mathcal Y$.
\end{conditions}
\end{Con*}

\cref{cond:1} gives the lower bound in~\cref{thm:general-relation-strong}, while
\cref{cond:2} supplies the matching zero-error protocol used in~\cref{thm:matching-relation-rates}.
A particularly important subclass, that satisfies \cref{cond:2}, is
\begin{equation}
 \Rel(x,y)=\{z\in\mathcal Z:\phi(z)=g(x,y)\}
 \label{eq:function-defined-relation}
\end{equation}
This subclass contains all total functions $f$.\footnote{ Take $g := f$ and $\phi$ being identity.}
Indeed, for a total relation of this form, a map $s$ in \cref{cond:2} is obtained by fixing
one element $t(w)\in\phi^{-1}(w)$ for each $w$ in the range of $g$ and setting
$s(g_x,g^y):=t(g(x,y))$.  This is well-defined because $g(x,y)$ depends only on the pair
$(g_x,g^y)$. Thus the relation $\Rel$ satisfies \cref{cond:2}. 

As shown in \cref{lem:relation-to-function}, any SMP protocol for relations satisfying~\cref{cond:1} is reduced to that for computing $g$:
\begin{Lem}
\label{lem:relation-to-function}
Suppose that \cref{cond:1} holds. For any $n$, any SMP protocol for $\Rel^n$
can be converted into a protocol for $g^n$ with the same messages, shared resource, and
communication.  On every fixed input,
\begin{equation}
 \Prb\bigl[\Pi_\Rel(x^n,y^n)\in\Rel^n(x^n,y^n)\bigr] 
 \le
 \Prb\bigl[\phi^n(\Pi_\Rel(x^n,y^n))=g^n(x^n,y^n)\bigr].
 \label{eq:relation-success-below-function}
\end{equation}
\end{Lem}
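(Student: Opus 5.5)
The conversion is pure classical postprocessing at the referee. Let $\Pi_\Rel$ be any SMP protocol for $\Rel^n$: classical or quantum, with whatever shared randomness or entanglement, and with fixed or variable message lengths. Define $\Pi_g$ as follows. The senders behave exactly as in $\Pi_\Rel$, sending the same messages using the same shared resource. The referee runs the referee of $\Pi_\Rel$ to obtain an output $z^n=(z_1,\ldots,z_n)\in\mathcal Z^n$, then outputs
\[
 \phi^n(z^n):=\bigl(\phi(z_1),\ldots,\phi(z_n)\bigr)\in\mathcal W^n .
\]
Nothing about the messages, the shared resource, or the message lengths changes, so the communication, worst-case or expected, is identical. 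The output of $\Pi_g$ is the random variable $\phi^n(\Pi_\Rel(x^n,y^n))$, whose distribution is the pushforward of that of $\Pi_\Rel(x^n,y^n)$ under $\phi^n$.

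For the inequality, fix $(x^n,y^n)$. It suffices to prove the event inclusion
\[
 \bigl\{z^n\in\Rel^n(x^n,y^n)\bigr\}\subseteq\bigl\{\phi^n(z^n)=g^n(x^n,y^n)\bigr\}
\]
and then take probabilities under the output distribution of $\Pi_\Rel(x^n,y^n)$. Suppose $z^n\in\Rel^n(x^n,y^n)=\Rel(x_1,y_1)\times\cdots\times\Rel(x_n,y_n)$. Then $z_i\in\Rel(x_i,y_i)$ for every $i$. By \cref{eq:relation-structure_0} in \cref{cond:1}, this gives $\phi(z_i)=g(x_i,y_i)$ for every $i$, that is, $\phi^n(z^n)=g^n(x^n,y^n)$. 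Monotonicity of probability then yields \cref{eq:relation-success-below-function}.

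There is no substantive obstacle. The only point needing care is that the postprocessing must be local to the referee and must not use any resource beyond what $\Pi_\Rel$ already uses. This holds because $\phi$ is a fixed deterministic map applied to the referee's classical output.
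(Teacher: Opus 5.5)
Your proposal is correct and follows essentially the same route as the paper's proof. Both keep the encoders, shared resource, and referee measurement unchanged, postprocess the classical output by $\phi^n$, and use Condition~(C.I) coordinatewise to obtain the event inclusion, which then gives the probability inequality.
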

\begin{proof}
Keep the encoders, shared resource, and referee's measurement unchanged, and postprocess the
classical output by $\phi^n$.  If $z^n$ is valid for the relation, then
\cref{cond:1} holds in every coordinate, so
$\phi^n(z^n)=g^n(x^n,y^n)$.  The transformation changes no message or quantum message register.
\end{proof}

For any (finite) function $g$, let
\begin{equation}
 \Dpar{g}:=\log |\Rg| +\log |\Cg|
 \label{eq:Dg}
\end{equation}
for SMP communication, and put
\begin{equation}
 D_{\to}(g):=\log |\Rg|.
 \label{eq:Dto}
\end{equation}
for one-way communication from Alice to Bob.

Denote sets $\mathcal{X}_g\subseteq\mathcal X$ and $\mathcal Y_g\subseteq\mathcal Y$ containing one
equivalence class of each distinct row and column of $g$. Therefore $x, x' \in \mathcal{X}_g$ and $x \neq x'$ means $g(x, \cdot) \neq g(x', \cdot)$ and $\Rg = \{g_x : x \in \mathcal{X}_g\}$.

\subsection{SMP protocols and communication costs}

\paragraph{Classical protocols.}
  A public-coin classical SMP protocol $\Pi$ uses shared randomness $R$ shared by
Alice, Bob, and the referee, as well as independent private randomness.  For every fixed $R=r$,
If the lengths of their messages $M_A, M_B$ are
$\ell_A(M_A)$ and $\ell_B(M_B)$, define
\[
 \Cost_{\rev{\mathrm{exp}}}(\Pi)
 :=\max_{x^n,y^n}
 \E_{R,R_A,R_B}
 \bigl[\ell_A(M_A(x^n,R,R_A))+\ell_B(M_B(y^n,R,R_B))\bigr],
\]
and
\[
 \Cost_{\mathrm{wc}}(\Pi)
 :=\max_{x^n,y^n,r,r_A,r_B}
 \bigl(\ell_A(M_A(x^n,r,r_A))+\ell_B(M_B(y^n,r,r_B))\bigr),
\]
where the maximum is over the supports of the random variables.  A protocol solves $\Rel^n$ with
global error at most $\varepsilon$ if, for every $(x^n,y^n)$,
\[
 \Prb\bigl[\Pi(x^n,y^n)\in\Rel^n(x^n,y^n)\bigr]\ge1-\varepsilon.
\]
For a function, success means equality with the complete function-value vector.  For either a
relation or a function $P$, write
\[
 \ER^{\SMP}_\varepsilon(P^n),
 \qquad
 R^{\SMP,\mathrm{wc}}_\varepsilon(P^n),
\]
for the optimal classical expected and worst-case costs, respectively.

\paragraph{Quantum protocols.}
In a quantum SMP protocol, Alice, Bob, and the referee share randomness $R$.  Conditioned on
$R=r$, they also share a state satisfying one of the following resource restrictions:
\begin{enumerate}[label=(\roman*)]
 \item SR: shared randomness and no entanglement;
 \item BE: independent entangled states between each sender and the referee;
 \item TE: arbitrary tripartite entanglement.
\end{enumerate}
For fixed inputs $x,y$ and shared-randomness value $r$, Alice and Bob choose nonnegative integers
$q_A(x,r)$ and $q_B(y,r)$.  Each sender adjoins a local register in a fixed pure state, applies an
input- and $r$-dependent unitary operator, sends the designated $q_A(x,r)$ or $q_B(y,r)$ output
qubits, and retains the remaining output registers. 
The referee knows $r$ and the two received message lengths
and may choose its measurement accordingly.  The message lengths are deterministic once the
input and $r$ are fixed.

Define
\[
 \operatorname{QCost}_{\rev{\mathrm{exp}}}(\Pi)
 :=\max_{x,y}\E_R\bigl[q_A(x,R)+q_B(y,R)\bigr],
 \qquad
 \operatorname{QCost}_{\mathrm{wc}}(\Pi)
 :=\max_{x,y,r}\bigl(q_A(x,r)+q_B(y,r)\bigr),
\]
where the maximum in the worst-case cost is over the support of $R$.  We write
\[
 \Qbar^{\mathrm{res}}_\varepsilon(P^n),\qquad
 Q^{\mathrm{res},\mathrm{wc}}_\varepsilon(P^n),
 \qquad \mathrm{res}\in\{\mathrm{SR},\mathrm{BE},\mathrm{TE}\},
\]
for the optimal expected and worst-case quantum costs.
Finally, put
\begin{equation}
 \alpha_{\mathrm{SR}}:=1,
 \qquad
 \alpha_{\EA}=\alpha_{\mathrm{BE}}=\alpha_{\mathrm{TE}}:=2.
 \label{eq:alpha-res}
\end{equation}
We use $\alpha_{\mathrm{SR}}$ and $\alpha_{\EA}$ for the one-way models as well.

\subsection{One-way protocols}

In a one-way protocol, Alice receives $x$, Bob receives $y$, Alice sends one message to Bob, and
Bob produces the output.  A classical protocol may use public and private randomness, and Alice's
message is a finite binary string.  Its
expected and worst-case costs are defined by
\[
 \Cost^{\to}_{\rev{\mathrm{exp}}}(\Pi)
 :=\max_{x^n}\E[\ell(M(x^n))],
 \qquad
 \Cost^{\to}_{\mathrm{wc}}(\Pi)
 :=\max_{x^n,r,r_A}\ell(M(x^n,r,r_A)).
\]
For quantum one-way protocols we consider SR, meaning shared randomness and no entanglement, and
\EA, meaning prior entanglement between Alice and Bob.  Conditioned on her input $x$ and
the shared-randomness value $R=r$, Alice applies an input- and $r$-dependent unitary operator as
above and sends $q(x,r)$ qubits.  We put
\[
 \operatorname{QCost}^{\to}_{\rev{\mathrm{exp}}}(\Pi)
 :=\max_x\E_R[q(x,R)],
 \qquad
 \operatorname{QCost}^{\to}_{\mathrm{wc}}(\Pi)
 :=\max_{x,r}q(x,r).
\]
Global error is defined pointwise as above, with Bob's output replacing the referee's output.  We
write
\[
 \EC^{\to}_\varepsilon(P^n),\quad
 C^{\to,\mathrm{wc}}_\varepsilon(P^n),\quad
 \Qbar^{\to,\mathrm{res}}_\varepsilon(P^n),\quad
 Q^{\to,\mathrm{res},\mathrm{wc}}_\varepsilon(P^n),
 \qquad \mathrm{res}\in\{\mathrm{SR},\EA\},
\]
for the corresponding optimal costs.

As always a classical message can be transmitted as an orthogonal quantum state:
\begin{Lem}
\label{lem:classical-as-sr-quantum}
For every finite relation or function $P$, every $n$, and every $\varepsilon\ge0$,
\begin{align*}
 \Qbar^{\mathrm{SR}}_\varepsilon(P^n)
 &\le \ER^{\SMP}_\varepsilon(P^n),
 &
 Q^{\mathrm{SR},\mathrm{wc}}_\varepsilon(P^n)
 &\le R^{\SMP,\mathrm{wc}}_\varepsilon(P^n),\\
 \Qbar^{\to,\mathrm{SR}}_\varepsilon(P^n)
 &\le \EC^{\to}_\varepsilon(P^n),
 &
 Q^{\to,\mathrm{SR},\mathrm{wc}}_\varepsilon(P^n)
 &\le C^{\to,\mathrm{wc}}_\varepsilon(P^n).
\end{align*}
\end{Lem}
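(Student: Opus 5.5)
The plan is a direct simulation: I will turn any classical protocol into an SR quantum protocol with the same success probability and pointwise identical message lengths. Each inequality then follows by taking the infimum over classical protocols with global error at most $\varepsilon$.

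Fix a classical public-coin SMP protocol $\Pi$ for $P^n$ with global error at most $\varepsilon$. The shared randomness $R$ of $\Pi$ becomes the shared randomness of the quantum protocol; no entanglement is used, so the resource is SR. Private randomness $R_A$ is the one point needing care, because the model fixes the message length $q_A(x,r)$ once the input and $r$ are fixed, whereas the classical length $\ell_A(M_A(x^n,r,R_A))$ may still vary with $R_A$. I will absorb the private coins into the shared randomness. Concretely, I take $R' := (R,R_A,R_B)$ as the new shared randomness, where $R_A$ and $R_B$ are independent of $R$ and of each other. Alice uses only the components $(R,R_A)$ and Bob only $(R,R_B)$.

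The referee in the classical model does not see $R_A$ or $R_B$, but in the quantum model he is allowed to know $r'$. This extra knowledge is harmless: he simply ignores these components and applies the original decision rule, which may also use his own private coins. Conditioned on $R'=r'$, Alice's message $m_A = M_A(x^n,r,r_A)$ is a fixed binary string of length $\ell_A(m_A)$. She prepares the computational basis state $|m_A\rangle$ on $q_A(x^n,r') := \ell_A(m_A)$ qubits; preparing a basis state from $|0\cdots0\rangle$ is a permutation unitary, and no qubits need to be retained. Bob does the same with $m_B$. The referee knows the received lengths, measures both registers in the computational basis, recovers $(m_A,m_B)$ exactly, and outputs what $\Pi$'s referee would.

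Since the measurement recovers the classical messages with certainty, the output distribution on every input equals that of $\Pi$, so the global error is still at most $\varepsilon$. For every input the message lengths are pointwise equal in distribution, so
\[
\max_{x,y}\E_{R'}\bigl[q_A+q_B\bigr] = \Cost_{\mathrm{exp}}(\Pi),
\]
and the maximum over the support of $R'$ equals $\Cost_{\mathrm{wc}}(\Pi)$. Taking the infimum over $\Pi$ gives the two SMP inequalities. The one-way case is identical, with Bob measuring Alice's register and then running the classical Bob's output rule. The only nontrivial point in the whole argument is the bookkeeping that makes the message lengths deterministic given the shared randomness; it is handled by moving the private coins into $R'$ as above.
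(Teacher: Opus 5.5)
Your proposal is correct and follows the paper's argument: the paper likewise sends each classical message as an orthogonal (computational-basis) state and simulates private randomness by shared randomness. Your choice $R'=(R,R_A,R_B)$ is exactly that step, since it makes the message lengths deterministic given $r'$; you also spell out the cost bookkeeping and the referee ignoring the extra components, which the paper leaves implicit.
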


\begin{proof}
	Only one clarification is made regarding the difference of private randomness and shared randomness;
	In our quantum SMP model we only use shared randomness. As private randomness is perfectly simulated by shared randomness, we obtain the statements.
\end{proof}

\subsection{Conditional min-entropy}
This paper also uses some standard facts on conditional min-entropy. For references, see~\cite{Tom15,KRS09}.
For a classical--quantum state
\[
 \rho_{VQ}=\sum_v p(v)|v\rangle\!\langle v|\otimes\rho_v^Q,
\]
define
\[
 p_{\mathrm{guess}}(V\mid Q)_\rho
 :=\max_{\{E_v\}}\sum_v p(v)\operatorname{Tr}(E_v\rho_v),
 \qquad
 H_{\min}(V\mid Q)_\rho:=-\log p_{\mathrm{guess}}(V\mid Q)_\rho,
\]
where the maximum is over all POVMs on the entire register $Q$. 

If $Q$ is a
composite register $Q_1 Q_2$, the measurement need not be a product or separable measurement.
Nevertheless (classical) adaptive measurements suffices when one of $Q_1$ and $Q_2$ are classical:
\begin{Fact}[]\label{Fact_guess_CCQ}
For a state $\rho_{VQ_1 Q_2}$ defined as 
\[
 \rho_{VQ_1 Q_2}=\sum_{v,q} p(v, q)|v\rangle\!\langle v|_V \otimes|q\rangle\!\langle q|_{Q_1}\otimes \rho_{v,q}^{Q_2},
\]
it holds that 
\begin{equation*}
 p_{\mathrm{guess}}(V\mid Q_1 Q_2)_\rho
 =\max_{\{E_v^q\} \text{~on~} Q_2}\sum_q \Pr(q) \sum_v p(v|q)\operatorname{Tr}(E_v^q\rho^{Q_2}_{v,q})
\end{equation*}
where the maximum is over all adaptive POVMs on $Q_2$ that may depend on $q$.
\end{Fact}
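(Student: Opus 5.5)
We have to prove \cref{Fact_guess_CCQ}: when $Q_1$ is classical, the optimal joint measurement on $Q_1Q_2$ can be taken to read $q$ first and then measure $Q_2$ with a POVM that depends on $q$. We show the two inequalities separately.

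\emph{The ``$\ge$'' direction.} Take an adaptive family $\{E_v^q\}_v$ and define $E_v:=\sum_q |q\rangle\!\langle q|_{Q_1}\otimes E_v^q$. Each $E_v$ is positive semidefinite, and
\[
\sum_v E_v=\sum_q |q\rangle\!\langle q|\otimes I_{Q_2}=I_{Q_1Q_2}.
\]
Hence $\{E_v\}$ is a valid POVM on $Q_1Q_2$. Since the state $\rho_{v}^{Q_1Q_2}$ is block diagonal in the basis $\{|q\rangle\}$, its success probability is $\sum_{v,q}p(v,q)\operatorname{Tr}(E_v^q\rho_{v,q}^{Q_2})$. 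Writing $p(v,q)=\Pr(q)p(v|q)$ turns this into the right-hand side.

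\emph{The ``$\le$'' direction.} Let $\{F_v\}$ be an arbitrary POVM on $Q_1Q_2$, and set
\[
E_v^q:=(\langle q|\otimes I)\,F_v\,(|q\rangle\otimes I).
\]
This operator is positive semidefinite because it is a compression of a positive operator. It also satisfies $\sum_v E_v^q=\langle q|I|q\rangle\otimes I_{Q_2}=I_{Q_2}$, so $\{E_v^q\}_v$ is a POVM on $Q_2$ for each $q$.

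Because $\rho_{v,q}$ appears in the joint state tensored with $|q\rangle\!\langle q|$, we have
\[
\operatorname{Tr}\bigl(F_v(|q\rangle\!\langle q|\otimes\rho_{v,q}^{Q_2})\bigr)=\operatorname{Tr}(E_v^q\rho_{v,q}^{Q_2}).
\]
Summing over $v$ and $q$ with weights $p(v,q)$ shows that $\{F_v\}$ achieves exactly the value of the adaptive strategy $\{E_v^q\}$. Equivalently, replacing $F_v$ by its block-diagonal pinching $\sum_q |q\rangle\!\langle q|F_v|q\rangle\!\langle q|$ leaves the success probability unchanged.

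Taking the maximum over both sides gives equality. The maxima are attained because the sets of POVMs are compact and the objective is linear in the POVM. The only point needing care is checking that the compressed operators still sum to the identity; this is immediate from $\langle q|q\rangle=1$.
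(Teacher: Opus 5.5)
Your proof is correct. Embedding an adaptive family as the block-diagonal POVM $E_v=\sum_q |q\rangle\langle q|\otimes E_v^q$ gives ``$\ge$''. Compressing an arbitrary POVM to its diagonal blocks $(\langle q|\otimes I)F_v(|q\rangle\otimes I)$ preserves positivity, completeness, and the success probability on the block-diagonal state, which gives ``$\le$''.

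The paper records this as a standard fact without proof, pointing to the min-entropy literature, so there is no paper proof to compare against. Your compression argument is the usual justification and covers both directions the paper relies on later. The ``$\ge$'' direction is used when building the decoder from $T$ in \cref{lem:functional-sampling}. The ``$\le$'' direction is used when fixing per-$t$ optimal measurements in \cref{lem:parity-to-substring}.
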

The following fact is also well-known.
\begin{Fact}\label{Fact_cq-state_up}
	For a cq-state $\rho_{VQ}$, $H_\mathrm{min}(V\mid Q)_\rho \le \log |V|$.
\end{Fact}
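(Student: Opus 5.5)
We need to prove Fact 2.6: for a cq-state $\rho_{VQ}$, $H_{\min}(V\mid Q)_\rho\le\log|V|$. Equivalently, we need $p_{\mathrm{guess}}(V\mid Q)_\rho\ge 1/|V|$.

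The plan is to exhibit one explicit measurement that already achieves this guessing probability. The maximum in the definition of $p_{\mathrm{guess}}$ can only be larger, so this suffices.

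\textbf{Step 1: choose a trivial POVM.} Ignore the register $Q$ and guess uniformly at random. Formally, take $E_v:=\frac{1}{|V|}\mathbb{1}_Q$ for every $v$ in the alphabet of $V$. Each $E_v$ is positive semidefinite, and
\[
\sum_v E_v=\mathbb{1}_Q,
\]
so $\{E_v\}$ is a valid POVM on $Q$.

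\textbf{Step 2: compute its success probability.} Each $\rho_v$ is a density operator, so $\Tr\rho_v=1$. Hence
\[
\sum_v p(v)\Tr(E_v\rho_v)=\frac{1}{|V|}\sum_v p(v)\Tr\rho_v=\frac{1}{|V|}.
\]
It follows that $p_{\mathrm{guess}}(V\mid Q)_\rho\ge 1/|V|$.

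\textbf{Step 3: take logarithms.} Since $-\log$ is decreasing,
\[
H_{\min}(V\mid Q)_\rho=-\log p_{\mathrm{guess}}(V\mid Q)_\rho\le\log|V|.
\]

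There is no real obstacle here. The only point to state explicitly is the convention for $|V|$. It should be the size of the alphabet over which the sum defining $\rho_{VQ}$ runs. If one prefers to use only the support of $p$, restrict the POVM to that support and the same computation goes through. Letters $v$ with $p(v)=0$ contribute nothing to the sum, so the bound $\log|V|$ holds for either convention.
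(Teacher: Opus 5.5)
Your proof is correct: the POVM $E_v=\frac{1}{|V|}\mathbb{1}_Q$ achieves success probability exactly $1/|V|$, and taking $-\log$ gives the bound. The paper states this fact as well known and gives no proof; your uniform-guessing argument is the standard justification, and it fits the paper's later use with $|V|=2^{nL}$ to conclude $h\le L$.
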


\section{Exponential strong converse}
\label{sec:strong-converse}
We state and prove the formal version of \cref{thm:intro-strong-converse}.
\begin{Thm}
\label{thm:general-relation-strong}
Let $\Rel$ be a finite total relation satisfying Condition~\cref{cond:1}, and $g, \phi$ be the corresponding functions. Fix $\mathrm{res}\in\{\mathrm{SR},\mathrm{BE},\mathrm{TE}\}$ and $\Delta>0$.  There is a constant
$\kappa=\kappa(g,\Delta,\mathrm{res})>0$ such that, for all sufficiently large $n$, every quantum
SMP protocol $\Pi_n$ with
\begin{equation}
 \operatorname{QCost}_{\mathrm{wc}}(\Pi_n)
 \le
 n\left(\frac{\Dpar{g}}{\alpha_{\mathrm{res}}}-\Delta\right)
 \label{eq:general-relation-strong-cost}
\end{equation}
has, for $(X^n,Y^n)\sim\Unif(\mathcal X_g^n\times\mathcal Y_g^n)$,
\begin{equation}
 \Prb\bigl[
   \Pi_n(X^n,Y^n)\in\Rel^n(X^n,Y^n)
 \bigr]
 \le 2^{-\kappa n}.
 \label{eq:general-relation-strong-success}
\end{equation}
\end{Thm}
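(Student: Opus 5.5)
The plan follows the outline in \cref{subsec:proof-technique}. Throughout I work with the average over $(X^n,Y^n)\sim\Unif(\mathcal X_g^n\times\mathcal Y_g^n)$.

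\textbf{Reductions.} First I apply \cref{lem:relation-to-function} to replace the relation by $g$. The success probability for $\Rel^n$ is at most that of computing $g^n$ with the postprocessed protocol, so it suffices to bound the latter. Next I condition on the shared randomness $R=r$. Once $r$ is fixed, the message lengths are deterministic functions $q_A(x^n,r)$ and $q_B(y^n,r)$ of the inputs. I pad them to the fixed lengths $\max_x q_A(x,r)$ and $\max_y q_B(y,r)$; these satisfy the same worst-case bound, and with a length register this costs only $O(1)$ extra qubits (as in the fixed-register reduction mentioned in the introduction). The per-$r$ sum is still at most $n(\Dpar{g}/\alpha_{\mathrm{res}}-\Delta)+O(1)$. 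It is therefore enough to prove a bound $2^{-\kappa n}$ that is uniform in $r$ and then average over $r$.

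\textbf{Case split into one-way protocols.} For fixed $r$ with $q_A+q_B\le n(\Dpar{g}/\alpha-\Delta)+O(1)$, either
\[
q_A\le n\bigl(\log|\Rg|/\alpha-\Delta/3\bigr)
\quad\text{or}\quad
q_B\le n\bigl(\log|\Cg|/\alpha-\Delta/3\bigr)
\]
for large $n$. In the first case I merge Bob and the referee into a single receiver that holds $y^n$, Bob's share of any entanglement, and the referee's share. This receiver can simulate Bob's unitary and the referee's measurement. The result is a one-way protocol, Alice to receiver, with entanglement assistance (this covers both BE and TE) and the same success probability on every input. The second case is symmetric. So the theorem reduces to a one-way exponential strong converse: if a one-way protocol with $\alpha q\le n(\log|\Rg|-\alpha\Delta/3)$ is run on uniform inputs, its success probability for $g^n$ is at most $2^{-\kappa' n}$, where $\kappa'$ depends only on $g$, $\Delta$ and $\alpha$.

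\textbf{One-way converse.}
\begin{itemize}
\item[(i)] \emph{Entropy bound.} The receiver's side information $Q$ (message plus entanglement share) satisfies $H_{\min}(X^n\mid Q)\ge n\log|\Rg|-\alpha q\ge cn$. For $\alpha=2$ this is the standard fact that sending $q$ qubits lowers conditional min-entropy by at most $2q$. I then encode $X^n$ injectively into a bit string $V^n$ of length $N=\Theta(n)$ listing row truth-table bits. Each bit is of the form ``$[g(x_i,y)=w]$'' for some column $y$ and value $w$.
\item[(ii)] \emph{Sampling.} I use the min-entropy sampling lemma (\cref{lem:wullschleger}) with a random $k$-subset $T$, $k=\lfloor\eta n\rfloor$, to get $p_{\mathrm{guess}}(V_T\mid TQ)\le 2^{-\Omega(k)}$.
\item[(iii)] \emph{Decoding.} For samples $T$ that touch each coordinate $i$ at most once (these are the good $T$), the receiver sets $y_i$ to the column queried by the sampled bit and picks all other $y_j$ uniformly. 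This keeps $Y^n$ uniform when averaged over $T$. A correct output of $g^n$ then reveals $V_T$. Hence
\[
\Pr[\text{success}]\le p_{\mathrm{guess}}(V_T\mid TQ)\,/\,\Pr[T\ \text{good}].
\]
\end{itemize}

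\textbf{Main obstacle.} The hard part is step (iii). I must make sure that conditioning on a good $T$ does not distort the input distribution; a positionwise symmetry argument should handle this. I also need $\Pr[T\ \text{good}]\ge 2^{-O(k^2/n)}$, which follows from a birthday-type count. With that, the total bound is $2^{-\Omega(\eta n)+O(\eta^2 n)}$, and a small enough $\eta$ gives exponential decay. All constants depend only on $g,\Delta,\alpha$, which gives the required uniformity over $r$ and over the two cases.
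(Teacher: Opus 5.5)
Your proposal is correct and follows essentially the same route as the paper: a fixed-register reduction, postprocessing the relation to $g$, and a per-$r$ case split that merges the other sender with the referee. It then uses the min-entropy loss bound (the paper's fixed-marginal lemma in the entangled case), the sampling lemma on a truth-table encoding, and the good-sample decoder with $\Pr[T\in\mathcal G_k]\ge 2^{-2k^2/n}$. The differences are cosmetic: you use indicator bits $[g(x_i,y)=w]$ rather than a binary encoding of $g(x_i,b)$, and $\Delta/3$ rather than $\Delta/4$. One small point: a literal length register costs $O(\log n)$ rather than $O(1)$ qubits, whereas the paper packs the lengths into orthogonal subspaces for one extra qubit; either overhead is absorbed by the linear gap.
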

By \cref{lem:classical-as-sr-quantum}, in particular, every public-coin classical
protocol of worst-case cost at most $n(\Dpar{g}-\Delta)$ satisfies the same conclusion.

For a function $f$ and the distribution $\UnifXYnf$, we obtain~\cref{cor:function-strong} because any total function satisfies~Condition~\cref{cond:1}.

\begin{Cor}
\label{cor:function-strong}
Let $f:\mathcal X\times\mathcal Y\to\mathcal W$ be a finite total function.  Fix
$\mathrm{res}\in\{\mathrm{SR},\mathrm{BE},\mathrm{TE}\}$ and $\Delta>0$.  There is
$\kappa=\kappa(f,\Delta,\mathrm{res})>0$ such that every quantum SMP protocol $\Pi_n$ satisfying
\begin{equation}
 \operatorname{QCost}_{\mathrm{wc}}(\Pi_n)
 \le n\left(\frac{\Dpar{f}}{\alpha_{\mathrm{res}}}-\Delta\right)
 \label{eq:strong-cost-assumption}
\end{equation}
satisfies, for all sufficiently large $n$ and
$(X^n,Y^n)\sim\UnifXYnf$,
\begin{equation}
 \Prb\bigl[\Pi_n(X^n,Y^n)=f^n(X^n,Y^n)\bigr]
 \le2^{-\kappa n}.
 \label{eq:distributional-strong}
\end{equation}
\end{Cor}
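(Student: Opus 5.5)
The plan is to derive \cref{cor:function-strong} directly from \cref{thm:general-relation-strong} by viewing $f$ as a relation. Set $\mathcal Z:=\mathcal W$, $\Rel(x,y):=\{f(x,y)\}$, $g:=f$, and take $\phi$ to be the identity on $\mathcal W$. Then $\Rel$ is a finite total relation, since each value set is a nonempty singleton. Condition~\cref{cond:1} holds with equality in \cref{eq:relation-structure_0}, because $\{z:\phi(z)=g(x,y)\}=\{f(x,y)\}=\Rel(x,y)$.

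With this choice, $\Dpar{g}=\Dpar{f}$, $\mathcal X_g=\mathcal X_f$ and $\mathcal Y_g=\mathcal Y_f$, so $\Unif(\mathcal X_g^n\times\mathcal Y_g^n)=\UnifXYnf$. The cost hypothesis \cref{eq:strong-cost-assumption} is therefore literally \cref{eq:general-relation-strong-cost}. A quantum SMP protocol for $f^n$ is the same object as a quantum SMP protocol for $\Rel^n$, since it has the same output alphabet $\mathcal W^n$. Moreover, the event $\Pi_n(X^n,Y^n)\in\Rel^n(X^n,Y^n)=\{f^n(X^n,Y^n)\}$ coincides with the event $\Pi_n(X^n,Y^n)=f^n(X^n,Y^n)$.

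Applying \cref{thm:general-relation-strong} with $\kappa(f,\Delta,\mathrm{res}):=\kappa(g,\Delta,\mathrm{res})$ then gives \cref{eq:distributional-strong} for all sufficiently large $n$.

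No step here is a real obstacle, because all the substantive work is inside \cref{thm:general-relation-strong}. The only points to check are bookkeeping. First, $\mathcal X_f$ and $\mathcal Y_f$ should denote the same fixed representative sets as $\mathcal X_g$ and $\mathcal Y_g$ when $g=f$; if different representatives were chosen, the statement is unaffected, since the conclusion of the theorem does not depend on that choice. Second, the constant $\kappa$ and the threshold for ``sufficiently large $n$'' depend only on $(f,\Delta,\mathrm{res})$, exactly as the corollary asserts.
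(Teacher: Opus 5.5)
Your proposal is correct and matches the paper's argument: the paper obtains \cref{cor:function-strong} directly from \cref{thm:general-relation-strong} by noting that every total function satisfies Condition~\cref{cond:1} with $g:=f$ and $\phi$ the identity, exactly as you do. Your remaining checks are the only bookkeeping the paper's one-line derivation leaves implicit: that the two success events coincide, that $\Dpar{g}=\Dpar{f}$, that the input distributions agree, and that $\kappa$ does not depend on the choice of representatives.
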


\paragraph{Proof structure for \cref{thm:general-relation-strong}}
We first replace a variable-length protocol by the fixed-register protocol of
\cref{lem:fixed-register-reduction}.  The replacement costs at most one additional qubit per
sender, which is absorbed into the linear gap in
\eqref{eq:general-relation-strong-cost}.
We then establish the strong converse for one-way protocols computing \(g^n\), which yields the corresponding strong converse for $\Rel$ by~\cref{lem:relation-to-function}, as follows.
If the log-dimension of Alice's message is smaller than
$\log|\Rg|/\alpha_{\mathrm{res}}$, Lemmas~\ref{lem:sr-entropy} and~\ref{lem:fixed-marginal}
show that the corresponding one-sender register \(Q_A\) leaves linear conditional min-entropy,
\[
H_{\min}(X^n\mid Q_A)=\Omega(n)
\]
when $X^n$ is $n$-i.i.d. uniform distribution on $\Xg$.
Lemma~\ref{lem:functional-sampling} then implies that the average probability of correctly computing all \(n\) outputs is exponentially small; this yields the strong converse for the one-way model, \cref{prop:one-way-quantum}.  
By symmetry the same conclusion holds when Bob's message is below $\log|\Cg|/\alpha_{\mathrm{res}}$.

To pass from one-way communication to SMP, we consider the state obtained after applying only
Alice's encoder.  A receiver holding Bob's initial share and the referee's share can apply Bob's
encoder for a fixed input and then reproduce the referee's measurement.  The analogous construction
starting from Bob's encoder is symmetric; \cref{subsec:one-sender-views} formalizes these two
one-sender views.
An SMP protocol for $g^n$ involves messages from both Alice and Bob, but the assumed upper bound
\begin{equation*}
\frac{\Dpar{g}}{\alpha_\mathrm{res}}
=
\frac{\log|\Rg| + \log|\Cg|}{\alpha_\mathrm{res}}
\end{equation*}
on their total communication rate guarantees that at least one of the two messages must lie below the corresponding one-way threshold: $\log|\Rg|/\alpha_\mathrm{res}$ or $\log|\Cg|/\alpha_\mathrm{res}$. 
Hence the average success probability of the SMP protocol also decays exponentially, yielding~\cref{thm:general-relation-strong}.

\subsection{Fixed-length message reduction}
We first show any variable length quantum communication is converted to a fixed length communication up to a small additive factor.
\begin{Lem}
\label{lem:fixed-register-reduction}
Every variable-length quantum SMP (resp. one-way) protocol $\Pi$ has a quantum SMP (resp. one-way) protocol $\widehat\Pi$ with
the same shared resource and the same output distribution, whose message-register dimensions are
independent of the inputs once $R=r$ is fixed, and such that
\[
 \operatorname{QCost}_{\mathrm{wc}}(\widehat\Pi)
 \le \operatorname{QCost}_{\mathrm{wc}}(\Pi)+2.
\]
For a variable-length quantum one-way protocol, the corresponding overhead is at most one qubit.
\end{Lem}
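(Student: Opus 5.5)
We prove the lemma by padding each sender's message to a fixed register and adding one qubit that records the true length. Fix $r$ in the support of $R$ and set
\[
 L_A(r):=\max_{x}q_A(x,r),\qquad L_B(r):=\max_{y}q_B(y,r),
\]
where the maxima run over the finitely many inputs. By definition of worst-case cost, $L_A(r)+L_B(r)\le \operatorname{QCost}_{\mathrm{wc}}(\Pi)$. In $\widehat\Pi$, on input $x$ and shared value $r$, Alice first runs her original encoder. She then appends $L_A(r)-q_A(x,r)$ ancilla qubits in state $|0\rangle$ to her $q_A(x,r)$ message qubits, so the message occupies exactly $L_A(r)$ qubits. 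Bob does the same with $L_B(r)$.

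\emph{The main obstacle} is that the referee's original measurement may depend on the received lengths $q_A(x,r),q_B(y,r)$. These are classical functions of the inputs and need not be recoverable from the padded register, since the padding qubits are indistinguishable from genuine $|0\rangle$ message qubits. The first fix to try is to send the length explicitly. Alice could append a classical label $|q_A(x,r)\rangle$ from a register of dimension $L_A(r)+1$, but that costs $\lceil\log(L_A(r)+1)\rceil$ qubits rather than the $O(1)$ claimed.

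The cheaper encoding avoids any label register. Use a register of $L_A(r)+1$ qubits and encode the message as
\[
 \underbrace{\text{message qubits}}_{q_A}\;\otimes\;|1\rangle\;\otimes\;|0\rangle^{\otimes (L_A(r)-q_A)},
\]
with the message placed in the first $q_A$ positions. The referee coherently reads the position of the last $|1\rangle$ among the trailing qubits. This is a computational-basis measurement on the padding qubits only, so it commutes with, and does not disturb, the first $q_A$ message qubits. It returns the classical value $q_A(x,r)$ exactly.

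Having learned both lengths, the referee discards the padding and applies the original length-dependent measurement to the genuine message qubits. The joint state of the message qubits, the senders' retained registers and the referee's share of the shared resource is exactly as in $\Pi$. The output distribution is therefore identical for every input and every $r$, and the shared resource is unchanged.

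Each sender pays at most one extra qubit, so
\[
 \operatorname{QCost}_{\mathrm{wc}}(\widehat\Pi)\le \max_r\bigl(L_A(r)+L_B(r)\bigr)+2\le \operatorname{QCost}_{\mathrm{wc}}(\Pi)+2.
\]
Message dimensions depend only on $r$, as required. In the one-way case there is only Alice's register, so the overhead is one qubit.

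One subtlety remains. If $L_A(r)=0$, no marker is needed and the overhead is zero, which is consistent with the bound. More generally, the receiver's share of any entanglement is untouched by the marker measurement, since that measurement acts only on the classical padding qubits. I expect all of this to go through without difficulty once the end-marker idea is in place.
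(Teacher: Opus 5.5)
Your proof is correct and is essentially the paper's argument. The paper places the length-$a$ message in one of the mutually orthogonal subspaces $\mathcal H^A_a\subseteq(\mathbb C^2)^{\otimes (L_A(r)+1)}$ of dimension $2^a$, which fit because $\sum_{a\le L_A(r)}2^a<2^{L_A(r)+1}$. The referee projects onto these subspaces to learn the length without disturbing the state, and the cost bound comes from $L_A(r)+L_B(r)=\max_{x,y}\bigl(q_A(x,r)+q_B(y,r)\bigr)$, which holds because the senders' inputs range independently. Your end-marker encoding $\mathbb C^{2^a}\otimes|1\rangle\otimes|0\rangle^{\otimes(L_A(r)-a)}$ is an explicit choice of such orthogonal subspaces.
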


\begin{proof}
Fix $R=r$ and put
\[
 L_A(r):=\max_x q_A(x,r),
 \qquad
 L_B(r):=\max_y q_B(y,r).
\]
For every $0\le a\le L_A(r)$, choose a subspace
$\mathcal H^A_a\subseteq\mathsf (\mathbb{C}^2)^{\otimes L_A(r)+1}$ of dimension $2^a$ so that these subspaces are
mutually orthogonal.  This is possible because
\[
 \dim \left(\bigoplus_{a=0}^{L_A(r)}  \mathcal H^A_a\right)
= \sum_{a=0}^{L_A(r)} \dim(\mathcal H^A_a) = \sum_{a=0}^{L_A(r)}2^a=2^{L_A(r)+1}-1<2^{L_A(r)+1}.
\]
On an input $x$ for which $q_A(x,r)=a$, Alice uses a unitary operator with local ancillary
registers to place her original $a$-qubit message in $\mathcal H^A_a$ and sends the resulting
$L_A(r)+1$ qubits.  Bob uses the analogous construction with $L_B(r)+1$ qubits.  The referee first
identifies the two orthogonal length subspaces by the identity projective measurements $P^A_a$ on $\mathcal H^A_a$ and then applies the original measurement for those
lengths. 
Note that by definition the referee know their message length. 
Due to this construction of the fixed length protocol, the output distribution and shared resource are unchanged.

Since the inputs of the two senders range independently,
\[
 L_A(r)+L_B(r)
 =\max_{x,y}\bigl(q_A(x,r)+q_B(y,r)\bigr).
\]
Taking the maximum over $r$ proves the SMP bound.  The same construction with only Alice's
message proves the one-way statement.
\end{proof}

\subsection{One-sender views of an SMP protocol}\label{subsec:one-sender-views}

Conditioned on $R=r$, write the initial shared state as
\[
 \sigma_r^{A_0B_0C_0},
\]
where some registers are trivial in SR and the appropriate tensor-product restriction is imposed in
BE.  Include the pure local ancillary registers in $A_0$ and $B_0$, and let the
encoding unitary operators be
\[
 U^A_{x,r}:A_0\longrightarrow M_AA_1,
 \qquad
 U^B_{y,r}:B_0\longrightarrow M_BB_1.
\]
Define $\rev{Q_A}=M_AB_0C_0$ in the state obtained after applying only Alice's unitary and tracing out
$A_1$.  Define $\rev{Q_B}=M_BA_0C_0$ in the separate state obtained after applying only Bob's unitary and
tracing out $B_1$.  These systems are auxiliary: they need not coexist in one execution.
The fixed-length reduction (\cref{lem:fixed-register-reduction}) ensures that $M_A$ and $M_B$ have input-independent dimensions in
this subsection.

\begin{Lem}
\label{lem:one-sender-simulation}
Fix a public-randomness value $R=r$.
\begin{enumerate}[label=(\roman*),leftmargin=2.5em]
 \item For every fixed $y$, there is a measurement on $\rev{Q_A}$ whose outcome distribution on input
 $x$ equals the output distribution of the original protocol on $(x,y)$.
 \item For every fixed $x$, there is a measurement on $\rev{Q_B}$ whose outcome distribution on input
 $y$ equals the output distribution of the original protocol on $(x,y)$.
\end{enumerate}
\end{Lem}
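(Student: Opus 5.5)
The plan is to build the measurement explicitly by composition: first apply the missing sender's encoder, then perform the referee's measurement. We prove part~(i); part~(ii) follows by exchanging the roles of Alice and Bob. Fix $r$ and $y$.

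On input $x$, the state of $Q_A=M_AB_0C_0$ is
\[
 \tau_x:=\Tr_{A_1}\bigl[(U^A_{x,r}\otimes I_{B_0C_0})\,\sigma_r\,(U^A_{x,r}\otimes I_{B_0C_0})^\dagger\bigr].
\]
Here $\sigma_r=\sigma_r^{A_0B_0C_0}$ has its pure local ancillas absorbed into $A_0$ and $B_0$. In the original protocol on $(x,y)$, the referee measures $M_AM_BC_0$ with a POVM $\{F_z\}$. By \cref{lem:fixed-register-reduction}, this POVM depends only on $r$, since the message dimensions are fixed once $r$ is fixed; we may assume the referee's length-dependent choices are already folded into $\{F_z\}$ acting on the fixed-dimension registers. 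We define the POVM on $Q_A$ by
\[
 E^{y}_z:=\Tr_{B_1}\!\bigl[(U^B_{y,r})^\dagger (F_z\otimes I_{B_1})U^B_{y,r}\bigr],
\]
where $U^B_{y,r}:B_0\to M_BB_1$ is extended by identity on $M_AC_0$, and the partial trace is understood in the Heisenberg picture, i.e.\ as the adjoint of the channel $\rho\mapsto \Tr_{B_1}[U^B_{y,r}\rho\, (U^B_{y,r})^\dagger]$.

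The key step is to check that $\{E^y_z\}$ is a valid POVM and that
\[
 \Tr\bigl(E^y_z\tau_x\bigr)=\Pr[\text{output }z\mid x,y,r].
\]
Positivity and completeness hold because the adjoint of a unital CPTP map is unital and completely positive. For the probability identity, note that $U^A_{x,r}$ and $U^B_{y,r}$ act on disjoint registers, $A_0$ and $B_0$, so they commute. Hence applying $U^A_{x,r}$, tracing out $A_1$, then applying $U^B_{y,r}$ and tracing out $B_1$ yields exactly the joint state of $M_AM_BC_0$ that the referee receives in the original protocol. Tracing out $A_1$ before Bob acts is harmless, since Bob's unitary does not touch $A_1$.

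The only point needing care is the bookkeeping in the BE and TE cases. One must check that $Q_A$ contains all of $B_0$, including Bob's entangled share and his ancilla, and all of $C_0$. This is ensured by the definition $Q_A=M_AB_0C_0$, so no resource restriction is needed and the argument is uniform over SR, BE and TE. Conceptually, this is just grouping Bob and the referee into a single receiver.

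Because the referee's measurement in the original protocol depends only on $r$ and the two message lengths, and the fixed-length reduction makes the lengths fixed given $r$, the POVM $\{E^y_z\}$ depends only on $(r,y)$ and not on $x$. This gives part~(i).
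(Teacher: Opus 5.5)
Your proposal is correct and is essentially the paper's argument: apply Bob's encoder $U^B_{y,r}$ to the $B_0$ part of $Q_A$, discard $B_1$, and perform the referee's measurement, using only that the two encoders act on disjoint registers. You phrase this in the Heisenberg picture and make explicit, via the fixed-length reduction, that the resulting POVM does not depend on $x$; as a small cleanup, the adjoint of $\rho\mapsto\Tr_{B_1}[U\rho U^\dagger]$ is $F\mapsto U^\dagger(F\otimes I_{B_1})U$ with no partial trace, and its unitality comes from that channel being trace-preserving (it is not unital in general).
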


\begin{proof}
For (i), start from the Alice-only state on $M_AB_0C_0$.  Apply $U^B_{y,r}$ to $B_0$,
discard $B_1$, and perform the original referee measurement on $M_AM_BC_0$.  The measured state is
\[
 \operatorname{Tr}_{A_1B_1}
 \left[
  (U^A_{x,r}\otimes U^B_{y,r}\otimes I_{C_0})\sigma_r
  (U^{A*}_{x,r}\otimes U^{B*}_{y,r}\otimes I_{C_0})
 \right],
\]
which is exactly the referee's state in the original protocol.  This uses only that the two local
unitary operators act on disjoint registers, not that $\sigma_r$ is a product state.  Part (ii) is
symmetric.
\end{proof}

\subsection{Min-entropy sampling}

We use the technique essentially derived in \cite{Wul11} to obtain~\cref{lem:wullschleger}. Its proof, based on the approximate list-decoding method~\cite[Theorem 44]{IJK09}, is given in Appendix~\ref{app:wullschleger}.

\begin{Lem}[{\cite{Wul11}}]
\label{lem:wullschleger}
For any constant $0<\gamma\le1$,
there are constants
$\lambda_\gamma,c_\gamma>0$ and $k_\gamma\in\mathbb N$ such that, if
\[
 H_{\min}(V\mid Q)\ge\gamma N
\]
for $V\in\{0,1\}^N$ and quantum side information $Q$ and $k_\gamma\le k\le\lambda_\gamma N$ then
\begin{equation}
 p_{\mathrm{guess}}(V_T\mid TQ)\le2^{-c_\gamma k}
 \label{eq:min-entropy-sampling}
\end{equation}
where $T$ is a uniformly random $k$-subset of $[N]$, independent of $VQ$.
\end{Lem}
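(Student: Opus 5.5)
We follow the quantum version of the sampling/list-decoding argument of \cite{Wul11}. The proof goes by contraposition: a good guesser for $V_T$ given $TQ$ is converted into a good guesser for all of $V$ given $Q$, which contradicts $H_{\min}(V\mid Q)\ge\gamma N$.

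\textbf{Step 1: Guessing $V_T$ well yields a short list for $V$.}
Suppose $p_{\mathrm{guess}}(V_T\mid TQ)>2^{-c k}$, with $c$ to be fixed later. By \cref{Fact_guess_CCQ}, this means there are POVMs $\{E^T_w\}$ on $Q$, one for each $k$-set $T$, such that on average over $V$ and $T$ the outcome equals $V_T$ with probability $\delta:=2^{-ck}$. We read this as a ``direct-product'' oracle: a function of $V$ whose values are the projections $V_T$, and which is correct on a $\delta$ fraction of the $k$-subsets. The approximate list-decoding theorem for direct products \cite[Theorem 44]{IJK09} then applies. It converts such an oracle into a procedure that outputs a candidate string agreeing with $V$ on a $1-\epsilon$ fraction of positions, with probability $\mathrm{poly}(\delta)=2^{-O(ck)}$. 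The procedure fixes a random restriction $(A,V_A)$ with $A\subset T$ of size $k/2$, and conditions on the oracle agreeing with it. The needed advice $V_A$ costs only $k/2$ bits, which can be guessed at a price of $2^{-k/2}$. One may instead absorb it into the list size, but either way the loss is $2^{-O(k)}$.

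\textbf{Step 2: Quantum implementation.}
The IJK decoder queries the oracle on many sets $T$ sharing the common part $A$. Since $Q$ is quantum, we cannot query repeatedly. We therefore measure once: pick a random $T\supseteq A$, measure $E^T$, and accept if the outcome agrees with the advice on $A$. We then use the resulting single answer to define the candidate. The analysis of \cite{IJK09} for ``consistent'' sets can be averaged over the single measured $T$. Concretely, we need a bound of the form
\[
p_{\mathrm{guess}}(\tilde V\approx_\epsilon V\mid Q)\ge 2^{-O(ck)-k}.
\]
We expect this step to be the main obstacle. The classical decoder's analysis uses independence across many queries, and in the quantum setting it must be replaced by an averaging argument over one measurement outcome. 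Our first attempt would be a ``two-set'' sampling trick: measure on $T$, and argue that with noticeable probability the outcome is also correct on most of a fresh random set $T'$ through $A$, by Markov/Chebyshev over the uniform distribution of $T$.

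\textbf{Step 3: From an approximate guess to an exact guess.}
Given a string within relative Hamming distance $\epsilon$ of $V$, we guess the error pattern uniformly from the ball. This costs a factor $2^{-h(\epsilon)N}$, so
\[
p_{\mathrm{guess}}(V\mid Q)\ge 2^{-O(ck)-k-h(\epsilon)N}.
\]
We choose $\epsilon$ with $h(\epsilon)\le\gamma/3$, then $\lambda_\gamma$ small enough that $k\le\lambda_\gamma N$ forces $(O(c)+1)k\le\gamma N/3$, and $c=c_\gamma$ accordingly. The right-hand side then exceeds $2^{-\gamma N}$, contradicting the hypothesis $H_{\min}(V\mid Q)\ge\gamma N$. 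The threshold $k_\gamma$ is chosen so that the polynomial and constant losses in Step 1 are dominated by $2^{-ck}$.
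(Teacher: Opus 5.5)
There is a genuine gap, and it is the step you flagged yourself. An outcome of the POVM $\{E^T_w\}_w$ is a string indexed by $T$ alone, so it predicts nothing about $V_i$ for $i\notin T$. A single measurement therefore gives $k$ guessed bits, not a candidate $\tilde V\in\{0,1\}^N$ within relative distance $\epsilon$ of $V$. For the same reason, the claim that ``the outcome is also correct on most of a fresh random set $T'$ through $A$'' is not even defined. The direct-product decoder builds its candidate from many oracle calls on different sets, one fresh set through $A\cup\{i\}$ for each coordinate $i$. With quantum side information, the POVMs $E^T$ for different $T$ need not be jointly measurable, and each measurement disturbs $Q$. These calls therefore cannot be simulated, and no Markov or Chebyshev averaging over a single outcome replaces them. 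Without a $T$-independent full-length guess, Step~3 has nothing to act on. Also, \cite[Theorem~44]{IJK09}, as the paper uses it, is the approximate list-decoding statement for the weight-$j$ parity code (\cref{lem:fixed-weight-list-decoding}), not a direct-product decoder.

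The paper never extracts more than one bit from $Q$. \cref{lem:parity-to-substring} expands $\mathbf 1\{\widehat V_T=V_T\}$ into characters, which bounds $p_{\mathrm{guess}}(V_T\mid TQ)$ by the biases of random parities $\operatorname{Par}^{(N,j)}_S(V)$. For $j>k/4$ these biases are small: XOR-code list decoding gives a classical one-bit extractor, and \cref{prop:KT-lifting} lifts it to quantum side information via the binary pretty-good measurement. The terms with $j\le k/4$ carry binomial weight at most $e^{-k/8}$.

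Alternatively, your contrapositive architecture can be repaired with one $T$-independent measurement. Let $\tilde V$ be the outcome of the pretty-good measurement for all of $V$ given $Q$. Its coarse-graining to $V_T$ is the pretty-good measurement for $V_T$, so \cref{lem:direct-pgm} and Jensen give $\Prb[\tilde V_T=V_T]\ge p_{\mathrm{guess}}(V_T\mid TQ)^2$. Since $T$ is independent of $(V,\tilde V)$, pairs with $\Delta_H(V,\tilde V)>\epsilon$ survive the random $k$-set with probability at most $(1-\epsilon)^k\le e^{-\epsilon k}$. Close pairs have total probability at most $2^{H_2(\epsilon)N}p_{\mathrm{guess}}(V\mid Q)$, by your Step~3. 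Hence $p_{\mathrm{guess}}(V_T\mid TQ)^2\le 2^{-(\gamma-H_2(\epsilon))N}+e^{-\epsilon k}$, which proves the lemma once $H_2(\epsilon)\le\gamma/2$, with no list decoding at all. This is essentially the argument of \cref{app:simplified-proof}.
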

Taking $k=\lfloor\lambda_\gamma N\rfloor$ gives an exponential upper bound on
$p_{\mathrm{guess}}(V_T\mid TQ)$ for all sufficiently large $N$.

We next translate Lemma~\ref{lem:wullschleger} from sampled bits $V_T$ to
one-way communication for  $g^n(X^n,Y^n)$ in~\cref{lem:functional-sampling}.
\cref{lem:functional-sampling} considers a one-way communication problem where Alice (resp. Bob) is given $X^n \sim \Unif(\mathcal X^n)$ (resp. $Y^n \sim \Unif(\mathcal Y^n)$) uniformly at random, and Alice sends some quantum message in the register $U$ to Bob. 
Bob then perform a measurement $\mathsf M_{y^n}$ depending on $y^n$ to recover $g^n(x^n,y^n)$. The output from Bob is denoted by $\widehat{W}^n$.

\begin{Lem}
\label{lem:functional-sampling}
Assume a one-way communication strategy outputting $\widehat{W}^n$ above.
Let $\rev{\mathcal X},\rev{\mathcal Y},\mathcal W$ be finite sets, and let
$g:\rev{\mathcal X}\times\rev{\mathcal Y}\to\mathcal W$
have pairwise distinct rows; that is, the functions
$ g_x := g(x,\cdot), ~  (x\in\rev{\mathcal X})$ are pairwise distinct.

Then for every constant $h>0$, the condition
\begin{equation}
 H_{\min}(X^n\mid U)\ge hn
 \label{eq:functional-entropy-left}
\end{equation}
implies the average success probability
\[
 \AvSP_n :=\Prb\left(\widehat{W}^n \rev{=} g^n(X^n,Y^n)\right) \le2^{-\kappa n}
\]
for a constant $\kappa=\kappa(\rev{\mathcal X},\rev{\mathcal Y},\mathcal W,h)>0$ and all
sufficiently large $n$.
\end{Lem}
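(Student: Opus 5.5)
We follow the three-step outline of \cref{subsec:proof-technique}. The argument has four parts: encode Alice's input as a binary string, sample it, decode the sample from the protocol's outputs, and pay for the chance that the sample is usable.

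\textbf{Encoding.} Enumerate $\mathcal Y=\{y^{(1)},\dots,y^{(m)}\}$ and fix an injection $\mathcal W\hookrightarrow\{0,1\}^b$ with $b=\lceil\log|\mathcal W|\rceil$. Define
\[
V_n(x^n)\in\{0,1\}^N,\qquad N:=nmb,
\]
by listing, for each coordinate $i\in[n]$ and each $j\in[m]$, the $b$ bits of $g(x_i,y^{(j)})$. Because the rows of $g$ are pairwise distinct, $V_n$ is injective. An injective classical map cannot increase the guessing probability, so
\[
H_{\min}(V^n\mid U)\ge H_{\min}(X^n\mid U)\ge hn=\gamma N,\qquad \gamma:=h/(mb).
\]
We may assume $\gamma\le1$. \Cref{lem:wullschleger} then gives $p_{\mathrm{guess}}(V_T\mid TU)\le 2^{-c_\gamma k}$ for a uniform $k$-subset $T$ with $k=\lfloor\eta N\rfloor$ and any $\eta\le\lambda_\gamma$.

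\textbf{Decoding.} Each position of $[N]$ is labelled by a pair $(i,j)$ together with a bit index. Call $T$ \emph{good} if the coordinates $i$ of its positions are pairwise distinct. For a good $T$, every coordinate $i$ touched by $T$ has a unique $j(i)$. The decoder, given $T$ and $U$, proceeds as follows:
\begin{enumerate}[label=(\alph*)]
\item set $y_i:=y^{(j(i))}$ for every coordinate $i$ touched by $T$;
\item draw $y_i$ uniformly from $\mathcal Y$, using private randomness, for every other coordinate;
\item run Bob's measurement $\mathsf M_{y^n}$ on $U$;
\item read off the bits of $V_T$ from the output $\widehat W^n$.
\end{enumerate}
For a bad $T$ the decoder outputs anything.

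Conditioned on $T$ being good, the induced $Y^n$ should be uniform on $\mathcal Y^n$ and independent of $X^nU$. To guarantee this we slightly change the sampling. Choose the $k$ distinct coordinates uniformly, choose each $j(i)$ uniformly and independently, and choose the bit index uniformly. This makes $Y^n$ exactly uniform. We then need a variant of \cref{lem:wullschleger} for this sample. A cleaner route is to keep the uniform $k$-subset and argue by conditioning. The conditional distribution of $T$ given goodness is still symmetric in the $j$ labels, so each touched coordinate has $j(i)$ uniform and independent. Hence $Y^n\sim\Unif(\mathcal Y^n)$ given goodness, and the protocol succeeds with probability exactly $\AvSP_n$. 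Whenever it succeeds, all bits of $V_T$ are correct. Therefore
\[
p_{\mathrm{guess}}(V_T\mid TU)\ \ge\ \Prb[T\text{ good}]\,\AvSP_n .
\]

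\textbf{Collision bound.} A birthday-type estimate gives
\[
\Prb[T\text{ good}]\ \ge\ \prod_{t<k}\Bigl(1-\frac{t\,mb}{N-t}\Bigr)\ \ge\ 2^{-O(k^2/n)}
\]
for $k\le\eta N$ with $\eta$ small. Combining the three displays,
\[
\AvSP_n\le 2^{-c_\gamma k+O(k^2/n)}=2^{-n\,(c_\gamma\eta mb-O(\eta^2))}.
\]
Choosing $\eta$ small enough makes the exponent positive, so $\kappa>0$ exists for all large $n$.

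\textbf{Main obstacle.} The delicate step is the collision bound together with the exact preservation of the input distribution under goodness. The $O(k^2/n)$ loss must be dominated by the $c_\gamma k$ gain, which is why $\eta$ has to be chosen small. Two details need care. First, the conditioning on goodness must not bias the induced $y$ labels away from uniform. Second, the constant in the collision estimate must be tracked. If that conditioning argument becomes awkward, the fallback is the pretty-good-measurement route of \cref{app:simplified-proof}. It would bound $\AvSP_n$ directly by relating Bob's per-$y^n$ measurements to a guess of $X^n$, using row-distinctness coordinatewise.
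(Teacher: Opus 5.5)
Your proposal is correct and follows the paper's proof essentially step for step: the injective truth-table encoding $V_n$, then \cref{lem:wullschleger} applied to a uniform $k$-subset, then the decoder on good samples (at most one bit per coordinate) whose conditional law makes $Y^n$ uniform and independent of $X^nU$, and finally the birthday bound $\Prb[T\text{ good}]\ge 2^{-O(k^2/n)}$ absorbed by taking $\eta$ small. The differences are cosmetic (you take $k=\lfloor\eta N\rfloor$ instead of $\lfloor\eta n\rfloor$), though you should handle the degenerate case $|\mathcal W|=1$ separately: there $N=0$, but distinct rows force $|\mathcal X|=1$, so the entropy hypothesis cannot hold.
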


An alternative proof of \cref{lem:functional-sampling}, using the
pretty-good measurement directly, is given in Appendix~\ref{app:simplified-proof}.
\par

\begin{proof}
If $|\mathcal W|=1$, pairwise distinctness of the rows implies
$|\rev{\mathcal X}|=1$, yielding $H_{\min}(X^n\mid U)=0$.  Thus the hypothesis cannot
hold for $h>0$.  We may therefore assume $|\mathcal W|\ge2$.

To apply~\cref{lem:wullschleger}, we encode elements  $X^n \in \rev{\mathcal X}^n$ to $nL$ bit-strings $V_n$ as follows. 
Let $\rev{m_{\mathcal W}}:=\lceil\log|\mathcal W|\rceil$, fix an injective map
$e:\mathcal W\to\{0,1\}^{\rev{m_{\mathcal W}}}$, and put $L:=|\rev{\mathcal Y}|\rev{m_{\mathcal W}}$.  It is convenient to
index the $L$ bits in one row of the truth table by pairs
$(b,j)\in\rev{\mathcal Y}\times[\rev{m_{\mathcal W}}]$.  Define
\[
 C(x)_{b,j}:=e(g(x,b))_j
 \quad\text{and}\quad
 V_n:=\bigl(C(X_1),\ldots,C(X_n)\bigr)\in\{0,1\}^{nL}.
\]
The rows $g_x$ are pairwise distinct and $e$ is injective, so $C$ is injective.
The classical variables $X^n$ and $V_n(X^n)$ are therefore injective relabelings of
one another.  Consequently,
\begin{equation}
 H_{\min}(V_n(X^n)\mid U)=H_{\min}(X^n\mid U)\ge hn.
 \label{eq:truth-table-entropy}
\end{equation}

Since $V_n(X^n)$ has $nL$ bits, \cref{Fact_cq-state_up} and
\eqref{eq:truth-table-entropy} imply $h\le L$.
Thus, with $\gamma:=h/L$, we have
$0<\gamma\le1$.  
Take the constants $\lambda_\gamma,c_\gamma,k_\gamma$ from Lemma~\ref{lem:wullschleger}, and fix
\begin{equation}
 0<\eta\le\min\{1/4,L\lambda_\gamma,c_\gamma/4\}.
 \label{eq:functional-eta}
\end{equation}
Let $k:=\lfloor\eta n\rfloor$ and let $T$ be a uniform $k$-subset of the $nL$ bit positions,
independent of $V_n(X^n) U$.  Once
\begin{equation}
 n\ge \frac{k_\gamma+1}{\eta},
 \label{eq:functional-k-threshold}
\end{equation}
we have
\[
 k_\gamma\le k\le \eta n\le\lambda_\gamma nL.
\]
Lemma~\ref{lem:wullschleger} thus applies and gives
\begin{equation}
 p_{\mathrm{guess}}(V_T(X^n)\mid TU)
 \le2^{-c_\gamma k}.
 \label{eq:sample-upper}
\end{equation}

We next upper bound $\AvSP_n$ by $p_\mathrm{guess}$ using the one-way protocol that outputs $\widehat{W}^n$ with measurements $\{\mathsf M_{y^n}\}_{y^n\in\rev{\mathcal Y}^n}$. 
To this end, we construct a guessing protocol for $V_T$ based on the one-way protocol together with $T$. 

\paragraph{A new decoder for $V_t(X^n)$}
Identify a bit position of elements in $\Bset^{nL}$ with a triple $(i,b,j)\in[n]\times\rev{\mathcal Y}\times[\rev{m_{\mathcal W}}]$, and define the set of good
samples by
\begin{equation}
 \mathcal G_k
 :=\left\{t\subseteq[n]\times\rev{\mathcal Y}\times[\rev{m_{\mathcal W}}]:
 |t|=k\ \text{and}\
 \bigl|t\cap(\{i\}\times\rev{\mathcal Y}\times[\rev{m_{\mathcal W}}])\bigr|\le1
 \text{ for every }i\in[n]\right\}.
 \label{eq:good-samples}
\end{equation}
Thus $t\in\mathcal G_k$ means that no two sampled bits come from the same
coordinate of $X^n$.  For such a $t$, construct a decoder for $V_t(X^n)$
coordinate-wise as summarized in~\cref{tab:sampled-bit-decoder}.

\begin{table}[t]
\centering
\renewcommand{\arraystretch}{1.25}
\begin{tabular}{c|c|c}
Status of coordinate $i$
  & Choice of $Y_i$
  & Decoder output from $i$ \\
\hline
$\exists!\,(b,j)$ such that $(i,b,j)\in t$
  & $Y_i=b$
  & $e(\widehat O_i)_j$ \\
No $(b,j)$ satisfies $(i,b,j)\in t$
  & $Y_i\sim\Unif(\rev{\mathcal Y})$, independently
  & None
\end{tabular}
\caption{Coordinate-wise construction of the decoder for $V_t(X^n)$ when
$t\in\mathcal G_k$.}
\label{tab:sampled-bit-decoder}
\end{table}

After completing $Y^n$ according to the second column, apply
$\mathsf M_{Y^n}$ to $U$ and denote its outcome by $\widehat O^n$.  The decoder
collects the entries in the third column and outputs
\[
 \widehat V_t
 :=\bigl(e(\widehat O_i)_j:(i,b,j)\in t\bigr)
\]
Private randomness in the completion of $Y^n$ can be included in the decoder's
POVM.

\paragraph{Probability comparison.}
We first argue that the marginal distribution $Y^n$ is uniform conditioned on $T \in \mathcal G_k$.
For $t\in\mathcal G_k$, let
\[
 \mathcal Y(t)
 :=\{y^n\in\rev{\mathcal Y}^n:
       y_i=b\text{ whenever }(i,b,j)\in t\}.
\]
Thus, conditioned on $T=t$, the decoder chooses $Y^n$ uniformly from
$\mathcal Y(t)$.  If $T$ is uniform conditioned on $T\in\mathcal G_k$, it may
be generated by first choosing $k$ distinct coordinates and then choosing a
uniform pair $(b,j)\in\rev{\mathcal Y}\times[\rev{m_{\mathcal W}}]$ at each selected coordinate.  Hence
the prescribed value $Y_i=b$ is uniform at every selected coordinate, while
the decoder independently chooses $Y_i$ uniformly at every unselected
coordinate.  It follows that the marginal distribution of $Y^n$ is uniform on
$\rev{\mathcal Y}^n$ after averaging over $T\in_U\mathcal G_k$.

For a fixed $t\in\mathcal G_k$, the decoder applies the same measurement
$\mathsf M_{y^n}$ to the same state $U$ as the original one-way protocol, but
with $Y^n$ distributed uniformly on $\mathcal Y(t)$.  Therefore the conditional
law of its intermediate outcome $\widehat O^n$ agrees with that of the original
output $\widehat W^n$ under this restricted distribution of $Y^n$.  Together
with the preceding marginal-distribution observation, this gives
\begin{equation}
 \E_{t\in_U\mathcal G_k}
 \Prb\bigl[
   \widehat O^n=g^n(X^n,Y^n)\mid T=t
 \bigr]
 =\Prb\bigl[
   \widehat W^n=g^n(X^n,Y^n)
 \bigr]
 = \AvSP_n.
 \label{eq:lem_Succ-Prob_1}
\end{equation}

We now relate success of the original measurement to success of the new
decoder.  Fix $t\in\mathcal G_k$ and $(i,b,j)\in t$.  Since the decoder sets
$Y_i=b$, the event $\widehat O^n=g^n(X^n,Y^n)$ implies
\[
 e(\widehat O_i)_j
 =e(g(X_i,Y_i))_j
 =e(g(X_i,b))_j
 =C(X_i)_{b,j}
 =V_n(X^n)_{(i,b,j)}.
\]
The same implication holds for every triple in $t$, and hence
\begin{equation}
 \widehat O^n=g^n(X^n,Y^n)
 \quad\Longrightarrow\quad
 \widehat V_t=V_t(X^n).
 \label{eq:decoder-event-inclusion}
\end{equation}

Since $T$ is available to the guessing measurement, the optimal guessing
probability is at least the success probability of the constructed decoder on
the good samples.  Using \eqref{eq:decoder-event-inclusion} and then
\eqref{eq:lem_Succ-Prob_1}, we obtain
\begin{align}
 p_{\mathrm{guess}}(V_T(X^n)\mid TU)
 &\ge
 \sum_{t\in\mathcal G_k}\Prb[T=t]
 \Prb\bigl[\widehat V_t=V_t(X^n)\mid T=t\bigr]
 \nonumber\\
 &\ge
 \Prb[T\in\mathcal G_k] \times 
 \E_{t\in_U\mathcal G_k}
 \Prb\bigl[
   \widehat O^n=g^n(X^n,Y^n)\mid T=t
 \bigr]
 \nonumber\\
 &=\Prb[T\in\mathcal G_k] \times \AvSP_n.
 \label{eq:sample-lower}
\end{align}

It remains to bound the probability in \eqref{eq:sample-lower}.  A good sample is obtained by
choosing $k$ of the $n$ coordinate blocks and one of the $L$ positions in each chosen block.  Since
$T$ is uniform among all $k$-subsets of the $nL$ positions,
\begin{equation}
 \Prb[T\in\mathcal G_k]
 =\frac{\binom nk L^k}{\binom{nL}{k}}
 =\prod_{j=0}^{k-1}\frac{L(n-j)}{nL-j}
 =\prod_{j=0}^{k-1}\left(1-\frac{(L-1)j}{nL-j}\right).
 \label{eq:good-probability}
\end{equation}
By \eqref{eq:functional-eta}, $k\le n/4$.  Hence
\[
 0\le \frac{(L-1)j}{nL-j}\le\min\left\{\frac14,\frac jn\right\}
 \qquad(0\le j<k).
\]
Using $-\ln(1-u)\le2u$ for $0\le u\le1/4$ in \eqref{eq:good-probability}, we obtain
\begin{equation*}
 -\log\Prb[T\in\mathcal G_k]
 \le \frac{2}{\ln2}\sum_{j=0}^{k-1}\frac jn
 \le\frac{2k^2}{n}
\end{equation*}
meaning
\begin{equation}
 \frac{1}{\Prb[T\in\mathcal G_k]}
 \le 2^{2k^2/n}.
 \label{eq:good-bound}
\end{equation}

Combining \eqref{eq:sample-upper}, \eqref{eq:sample-lower}, and \eqref{eq:good-bound} yields
\[
 \AvSP_n \le \frac{p_{\mathrm{guess}}(V_T(X^n)\mid TU)}{\Prb[T \in \mathcal G_k]} \le 2^{-c_\gamma k+2k^2/n}.
\]
Since $k\ge\eta n-1$ and $k\le\eta n$, $\eta \le \min\{1/4, c_\gamma/4\}$ in  \eqref{eq:functional-eta} implies
\[
 -c_\gamma k+\frac{2k^2}{n}
 \le -c_\gamma (\eta n - 1) + \frac{2\eta^2 n^2}{n}
 \le -c_\gamma (\eta n - 1) + 2\eta n \frac{c_\gamma}{4}
 \le-\frac{c_\gamma\eta}{2}n+c_\gamma.
\]
For all sufficiently large $n$, this is at most $-\kappa n$ with
$\kappa:=c_\gamma\eta/4>0$, proving the lemma.
\end{proof}

\subsection{Communication leaves min-entropy}

We first treat a message without entanglement.

\begin{Lem}
\label{lem:sr-entropy}
Let $A$ be uniform on a set of size $R$, and let $M$ be a $d$-dimensional quantum encoding of $A$.
Then
\[
 p_{\mathrm{guess}}(A\mid M)\le\frac dR,
 \qquad
 H_{\min}(A\mid M)\ge\log R-\log d.
\]
\end{Lem}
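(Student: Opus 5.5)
The plan is to bound the guessing probability directly, operator by operator, and then take logarithms. Write the classical--quantum state as
\[
 \rho_{AM}=\frac1R\sum_{a}|a\rangle\!\langle a|\otimes\rho_a,
\]
where each $\rho_a$ is a density operator on the $d$-dimensional space of $M$. Since the encoding may be mixed, for instance through discarded local ancillas, we do not assume the $\rho_a$ are pure. Fix an arbitrary POVM $\{E_a\}$ on $M$. Its success probability is
\[
 \frac1R\sum_a\operatorname{Tr}(E_a\rho_a).
\]

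The key step is the operator inequality $\rho_a\le I$, which holds because every density operator has eigenvalues in $[0,1]$. Since each $E_a\ge0$, this gives $\operatorname{Tr}(E_a\rho_a)\le\operatorname{Tr}(E_a)$. Summing over $a$ and using $\sum_aE_a=I_M$, we get
\[
 \frac1R\sum_a\operatorname{Tr}(E_a\rho_a)\le\frac1R\operatorname{Tr}(I_M)=\frac dR.
\]
Taking the maximum over POVMs yields $p_{\mathrm{guess}}(A\mid M)\le d/R$. The entropy statement then follows immediately from the definition $H_{\min}=-\log p_{\mathrm{guess}}$.

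No real obstacle is expected; the only care needed is modeling. The encoding should be treated as an arbitrary cq-state whose quantum part lives on a $d$-dimensional space. Possible private or shared randomness, with $R=r$ fixed, is absorbed into the mixedness of $\rho_a$. When the lemma is applied with $d=2^{q_A}$, this gives $H_{\min}(X^n\mid Q)\ge n\log|\Rg|-q_A$ for uniform $X^n$ on $\mathcal X_g^n$.

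For the entangled case, the same argument should be used with $M$ replaced by the message together with the receiver's share. The factor $2$ would come from a separate bound, presumably the content of \cref{lem:fixed-marginal}, since the receiver's share has an input-independent marginal.
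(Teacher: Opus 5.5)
Your proof is correct and is essentially the same as the paper's: both use $\rho_a\le I_M$ to bound $\operatorname{Tr}(E_a\rho_a)\le\operatorname{Tr}E_a$, then sum over $a$ with $\sum_a E_a=I_M$ to get $d/R$. Your remark that the entangled case needs the separate fixed-marginal bound, which yields the factor $2$, also matches how the paper proceeds.
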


\begin{proof}
For every POVM $\{E_a\}$ and every density matrix $\rho_a^M$, the inequality $\rho_a\le I_M$
gives
\[
 \frac1R\sum_a\operatorname{Tr}(E_a\rho_a)
 \le\frac1R\sum_a\operatorname{Tr}E_a
 =\frac dR.
\]
\end{proof}

The entangled case uses the following fixed-marginal bound.

\begin{Lem}
\label{lem:fixed-marginal}
Let $A$ be uniform on a set of size $R$.  For each $a$, let $\rho_a^{MS}$ be a state such that
$\rho_a^S=\sigma^S$ is independent of $a$, and let $d=\dim M$.  Then
\[
 p_{\mathrm{guess}}(A\mid MS)\le\frac{d^2}{R},
 \qquad
 H_{\min}(A\mid MS)\ge\log R-2\log d.
\]
\end{Lem}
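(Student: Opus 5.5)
The plan is to reduce to \cref{lem:sr-entropy} with a maximally mixed reference replacing the entangled share. This uses the standard operator inequality $\rho^{MS}\le d\,(I_M\otimes\rho^S)$, valid for any state on $M\otimes S$ with $\dim M=d$. To prove it, purify $\rho^{MS}$ and write the purification as $(I_M\otimes\sqrt{\rho^S})$ applied to a vector. Alternatively, use the well-known bound $H_{\min}(M\mid S)\ge-\log d$, which says exactly that $\rho^{MS}\le 2^{\log d}\, I_M\otimes\sigma^S$ for some state $\sigma^S$. Its fixed-marginal version, with $\sigma^S=\rho^S$, follows from the purification argument: if $|\psi\rangle=\sum_i\sqrt{\lambda_i}|u_i\rangle_M|v_i\rangle_{SR}$ in Schmidt form over at most $d$ terms, then $|\psi\rangle\langle\psi|\le d\sum_i\lambda_i |u_i\rangle\langle u_i|\otimes|v_i\rangle\langle v_i|$ by Cauchy--Schwarz. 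Tracing out the purifying system then gives $\rho^{MS}\le d\,(I_M\otimes\rho^S)$ after a further bound. I expect this step to be the main technical point, and I would cite the standard fact (e.g.\ from \cite{Tom15}) rather than rederive it.

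Applying the inequality to each $\rho_a^{MS}$, and using the hypothesis $\rho_a^S=\sigma^S$ for all $a$, gives $\rho_a^{MS}\le d\,(I_M\otimes\sigma^S)$ for every $a$. For any POVM $\{E_a\}$ on $MS$ this yields
\[
 \frac1R\sum_a\operatorname{Tr}(E_a\rho_a^{MS})
 \le\frac dR\sum_a\operatorname{Tr}\bigl(E_a(I_M\otimes\sigma^S)\bigr)
 =\frac dR\operatorname{Tr}(I_M\otimes\sigma^S)
 =\frac{d^2}{R},
\]
since $\sum_aE_a=I$ and $\operatorname{Tr}(I_M\otimes\sigma^S)=d$. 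Maximizing over POVMs gives $p_{\mathrm{guess}}(A\mid MS)\le d^2/R$, and taking $-\log$ gives $H_{\min}(A\mid MS)\ge\log R-2\log d$.

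As a sanity check, superdense coding with $\log R=2\log d$ attains $p_{\mathrm{guess}}=1$, so the bound is tight. This matches the factor $\alpha_{\EA}=2$. In the application, $S$ will be $B_0C_0$, whose marginal is indeed unaffected by Alice's unitary on $A_0$, so the hypothesis $\rho_a^S=\sigma^S$ holds.
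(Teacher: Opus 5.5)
Your proof is correct and follows the paper's route. You establish $\rho^{MS}\le d\,(I_M\otimes\rho^S)$, apply it to each $\rho_a^{MS}$ with the common marginal $\sigma^S$, and sum over the POVM to get $d\operatorname{Tr}(I_M\otimes\sigma^S)/R=d^2/R$. The only difference is how you prove the operator inequality: you apply Cauchy--Schwarz to the Schmidt decomposition of a purification and then take a partial trace (the ``further bound'' being $|u_i\rangle\langle u_i|\le I_M$), whereas the paper applies the same Cauchy--Schwarz directly to the $d\times d$ block matrix of $\rho^{MS}$ and then uses $\rho_{ii}\le\rho^S$.
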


\begin{proof}
Every state $\rho^{MS}$ with $\dim M=d$ satisfies
\begin{equation}
 \rho^{MS}\le d\,I_M\otimes\rho^S.
 \label{eq:operator-bound}
\end{equation}
Indeed, write $\rho^{MS}$ as a positive block matrix $(\rho_{ij})_{i,j=1}^d$.  For
$|v\rangle=\sum_i|i\rangle|v_i\rangle$, positivity and Cauchy--Schwarz give
\begin{align*}
 \langle v|\rho|v\rangle
 &\le\left(\sum_i\sqrt{\langle v_i|\rho_{ii}|v_i\rangle}\right)^2\\
 &\le d\sum_i\langle v_i|\rho_{ii}|v_i\rangle
 \le d\sum_i\langle v_i|\rho^S|v_i\rangle.
\end{align*}
This proves \eqref{eq:operator-bound}.  For a POVM $\{E_a\}$ on $MS$, apply the bound to every
$\rho_a$ and use their common marginal:
\begin{align*}
 \frac1R\sum_a\operatorname{Tr}(E_a\rho_a)
 &\le\frac dR\sum_a\operatorname{Tr}\bigl(E_a(I_M\otimes\sigma)\bigr)\\
 &=\frac dR\operatorname{Tr}(I_M\otimes\sigma)
 =\frac{d^2}{R}.
\end{align*}
\end{proof}

\begin{Prop}
\label{prop:one-way-quantum}
Let $f:\mathcal X\times\mathcal Y\to\mathcal W$ be a finite total function.  Let
$\rev{\mathcal X_0}\subseteq\mathcal X$ contain one equivalence class of each row of $f$, and let
$\mathcal Y_0\subseteq\mathcal Y$ be such that the restricted rows
\[
 f(x,\cdot)|_{\mathcal Y_0},\qquad x\in\rev{\mathcal X_0},
\]
are pairwise distinct.  Fix $\mathrm{res}\in\{\mathrm{SR},\EA\}$ and $\delta>0$.  There is
$\kappa=\kappa(f,\mathcal Y_0,\delta,\mathrm{res})>0$ such that every fixed-length quantum
one-way protocol for $f^n$ satisfying
\begin{equation}
 q:=\max_r\log\dim M_r
 \le n\left(\frac{\log |\Rf|}{\alpha_{\mathrm{res}}}-\delta\right)
 \label{eq:one-way-function-threshold}
\end{equation}
has success probability at most $2^{-\kappa n}$ under
\[
 \operatorname{Unif}(\rev{\mathcal X_0}^n)
 \otimes
 \operatorname{Unif}(\mathcal Y_0^n)
\]
for all sufficiently large $n$. 
\end{Prop}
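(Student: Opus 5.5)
The plan is to reduce the statement to \cref{lem:functional-sampling}, applied to the restricted function $g:=f|_{\mathcal X_0\times\mathcal Y_0}$. By hypothesis $g$ has pairwise distinct rows, and $|\mathcal X_0|=|\Rf|$. The argument first conditions on a shared-randomness value $R=r$. For this fixed $r$, the register $U$ handed to Bob is $U:=M_r$ in SR and $U:=M_rS_r$ in \EA, where $S_r$ is Bob's half of the prior entangled state (classical or private randomness is absorbed into the encoding and into Bob's POVM). Bob's measurement on input $y^n\in\mathcal Y_0^n$ is the protocol's decoding measurement $\mathsf M_{y^n}$. Success for $f^n$ on inputs in $\mathcal X_0^n\times\mathcal Y_0^n$ is exactly success for $g^n$. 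Hence the averaged success probability conditioned on $r$ equals $\AvSP_n$ of \cref{lem:functional-sampling} for this one-way strategy.

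Next comes the entropy bound, with $X^n$ uniform on $\mathcal X_0^n$, so $R=|\Rf|^n$. In SR, \cref{lem:sr-entropy} with $d=\dim M_r\le 2^q$ gives
\[
 H_{\min}(X^n\mid U)\ge n\log|\Rf|-q\ge n\delta .
\]
In \EA, Alice's unitary acts only on her side, so Bob's marginal on $S_r$ is independent of $x^n$. Then \cref{lem:fixed-marginal} gives
\[
 H_{\min}(X^n\mid U)\ge n\log|\Rf|-2q\ge 2n\delta .
\]
In both cases the bound is at least $hn$ with $h:=\alpha_{\mathrm{res}}\delta>0$, a constant independent of $r$.

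\Cref{lem:functional-sampling} then yields $\AvSP_n\le 2^{-\kappa n}$ for all $n\ge n_0$. Here $\kappa$ and $n_0$ depend only on $(\mathcal X_0,\mathcal Y_0,\mathcal W,h)$, hence only on $(f,\mathcal Y_0,\delta,\mathrm{res})$. The step that needs care is checking this uniformity in $r$. It follows because \cref{lem:functional-sampling} gives constants depending on the sets and $h$ only, and not on the particular state or measurements. The threshold on $n$ in its proof, \eqref{eq:functional-k-threshold}, likewise depends only on $\gamma=h/L$ and $\eta$. Averaging the per-$r$ bound over $R$ then gives the claimed bound on the overall success probability.

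The remaining subtlety is the \EA{} application of \cref{lem:fixed-marginal}. One must confirm that the state $\rho_{x^n}^{M_rS_r}$ is a legitimate cq-family with fixed $S_r$-marginal, which holds since tracing out Alice's retained register $A_1$ commutes with her local unitary. One must also confirm that $d=\dim M_r$ is input-independent, which is ensured by the fixed-length assumption in the statement.
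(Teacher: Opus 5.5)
Your proposal is correct and follows the paper's proof essentially step for step. You restrict to $f_0=f|_{\mathcal X_0\times\mathcal Y_0}$, fix $R=r$, obtain $H_{\min}(X^n\mid U)\ge \alpha_{\mathrm{res}}\delta n$ from \cref{lem:sr-entropy} (SR) or from \cref{lem:fixed-marginal} with $U=M_rB_0$ (EA), apply \cref{lem:functional-sampling}, and average over $R$. You also check explicitly that both $\kappa$ and the threshold on $n$ in \cref{lem:functional-sampling} depend only on the sets and $h$, so they are uniform in $r$ and in the protocol; the paper states this only briefly.
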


\begin{proof}
Let $X^n$ and $Y^n$ be independent and uniform on $\rev{\mathcal X_0}^n$ and
$\mathcal Y_0^n$, respectively, and put
\[
 f_0:=f|_{\rev{\mathcal X_0}\times\mathcal Y_0}.
\]
By assumption, the rows of $f_0$ are pairwise distinct.

Fix an arbitrary shared-randomness value $R=r$.  Let $M_r$ be Alice's message register in this
instance and put $q_r:=\log\dim M_r\le q$.

In SR, let $\rho_{x^n,r}^{M_r}$ be Alice's message on input $x^n$.  By
Lemma~\ref{lem:sr-entropy},
\[
 H_{\min}(X^n\mid M_r)
 \ge n\log |\Rf|-q_r
 \ge\delta n.
\]
For every fixed $y^n$, Bob's decoder conditioned on $R=r$ is a measurement on $M_r$.  Applying
Lemma~\ref{lem:functional-sampling} to $f_0$, with $U=M_r$ and $h=\delta$, bounds the
conditional success probability in this value of shared randomness by $2^{-\kappa n}$.

In \EA, write the prior state in this value of shared randomness as $\sigma_r^{A_0B_0}$ and let $U_r=M_rB_0$ be the
state after Alice's encoding and transmission of $M_r$.  The marginal on $B_0$ is independent of
$x^n$, because Alice acts locally by a trace-preserving channel.  Lemma~\ref{lem:fixed-marginal}
therefore gives
\[
 H_{\min}(X^n\mid U_r)
 \ge n\log |\Rf|-2q_r
 \ge2\delta n.
\]
For every fixed $y^n$, Bob's decoder conditioned on $R=r$ is a measurement on $U_r$.  The same
application of Lemma~\ref{lem:functional-sampling} to $f_0$, now with $h=2\delta$, bounds the
conditional success probability by $2^{-\kappa n}$.

In either case, $\kappa$ depends only on $f$, $\mathcal Y_0$, $\delta$, and $\mathrm{res}$, and is
therefore independent of $r$, the initial shared state, and the protocol.  Averaging the conditional
success probabilities over $R$ proves the claim.
\end{proof}

\subsection{Completion of the proof}

\begin{proof}[Proof of Theorem~\ref{thm:general-relation-strong}]
If $\Dpar{g}=0$, assumption \eqref{eq:general-relation-strong-cost} is impossible, so assume $\Dpar{g}>0$.
By \cref{lem:fixed-register-reduction}, $\Pi_n$ has a fixed-register simulation $\widehat\Pi_n$
with the same output distribution and
\[
 \operatorname{QCost}_{\mathrm{wc}}(\widehat\Pi_n)
 \le
 n\left(\frac{\Dpar{g}}{\alpha_{\mathrm{res}}}-\Delta\right)+2.
\]
For all sufficiently large $n$, the right-hand side is at most
\[
 n\left(\frac{\Dpar{g}}{\alpha_{\mathrm{res}}}-\frac\Delta2\right).
\]
Apply \cref{lem:relation-to-function} to $\widehat\Pi_n$, and denote the resulting protocol for $g^n$ by $\widetilde\Pi_n$.  
It has success probability at least that of $\widehat\Pi_n$, encoders, shared resource, and
communication as $\widehat\Pi_n$.

Fix a public-randomness value $R=r$, and let $q_A(r)$ and $q_B(r)$ be the two fixed message
lengths in this value of shared-randomness.  They satisfy
\[
 q_A(r)+q_B(r)
 \le n\left(
  \frac{\log |\Rg|+\log|\Cg|}{\alpha_{\mathrm{res}}}-\frac\Delta2
 \right).
\]
Consequently, at least one of
\begin{align}
 q_A(r)&\le n\left(\frac{\log |\Rg|}{\alpha_{\mathrm{res}}}-\frac\Delta4\right),
 \label{eq:alice-short-quantum}\\
 q_B(r)&\le n\left(\frac{\log |\Cg|}{\alpha_{\mathrm{res}}}-\frac\Delta4\right)
 \label{eq:bob-short-quantum}
\end{align}
holds.  When $\rev{|\Rg|=1}$, the first inequality cannot hold, and the analogous statement applies
when $\rev{|\Cg|=1}$.

Suppose that \eqref{eq:alice-short-quantum} holds.  By
Lemma~\ref{lem:one-sender-simulation}, the Alice-only state is a one-way encoding from Alice to a
receiver holding $B_0C_0$: for each $y^n$, the receiver can reproduce the referee's output.  If the
SMP resource is SR, this is an SR one-way protocol; if it is BE or TE, it is an \EA\ one-way protocol
across the bipartition $A_0:(B_0C_0)$.  The set $\rev{\mathcal Y_g}$ of column equivalence classes separates
the rows indexed by $\mathcal X_g$.  Proposition~\ref{prop:one-way-quantum}, applied to $f=g$ with
$\mathcal Y_0=\mathcal Y_g$, therefore bounds the success probability of $\widetilde\Pi_n$
averaged under $\UnifXYng$ by $2^{-\kappa_A n}$.

If \eqref{eq:bob-short-quantum} holds, apply the same argument to the transposed function and the
Bob-only state to obtain $2^{-\kappa_B n}$.  These constants are independent of the
public-randomness value and of the shared state.  Therefore, with
\[
 \kappa:=\min\{\kappa_A,\kappa_B\},
\]
where an inapplicable term is omitted, every public-randomness instance has average success at most
$2^{-\kappa n}$.  Averaging over $R$ gives the same bound for $\widetilde\Pi_n$.  The pointwise
comparison \eqref{eq:relation-success-below-function} now proves
\eqref{eq:general-relation-strong-success}.

The classical assertion follows from \cref{lem:classical-as-sr-quantum}.
\end{proof}

\section{Exact asymptotic rates}
\label{sec:tight-characterization}

\begin{Thm}
\label{thm:matching-relation-rates}
Let $\Rel$ be a finite total relation, and suppose that $g$, $\phi$, and $s$ satisfy
Condition~\cref{cond:2}.  For every $0\le\varepsilon<1$, the following limits hold; in the
quantum formulas, $\mathrm{res}\in\{\mathrm{SR},\mathrm{BE},\mathrm{TE}\}$:
\begin{align}
 \lim_{n\to\infty}\frac{\ER^{\SMP}_\varepsilon(\Rel^n)}{n}
 &=(1-\varepsilon)\Dpar{g}, \label{eq:relation-classical-expected}\\
 \lim_{n\to\infty}\frac{\Qbar^{\mathrm{res}}_\varepsilon(\Rel^n)}{n}
 &=\frac{1-\varepsilon}{\alpha_{\mathrm{res}}}\Dpar{g},
 \label{eq:relation-quantum-expected}\\
 \lim_{n\to\infty}\frac{R^{\SMP,\mathrm{wc}}_\varepsilon(\Rel^n)}{n}
 &=\Dpar{g}, \label{eq:relation-classical-wc}\\
 \lim_{n\to\infty}\frac{Q^{\mathrm{res},\mathrm{wc}}_\varepsilon(\Rel^n)}{n}
 &=\frac{\Dpar{g}}{\alpha_{\mathrm{res}}}.
 \label{eq:relation-quantum-wc}
\end{align}
\end{Thm}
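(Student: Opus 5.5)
We prove each limit by a matching upper bound (a zero-error protocol, randomized down for the expected cost) and a lower bound (\cref{thm:general-relation-strong} for the worst case, plus a truncation argument for the expected cost). By \cref{lem:classical-as-sr-quantum}, it suffices to give the upper bounds for classical protocols and the SR/BE quantum protocols, and the lower bounds for the TE/SR quantum costs. This is because TE allows everything BE allows, and classical protocols embed into SR protocols.

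\emph{Upper bounds.} Fix $\mathcal X_g,\mathcal Y_g$. Alice maps $x^n$ to the tuple of row classes $(g_{x_1},\dots,g_{x_n})\in\Rg^n$ and encodes it in $\lceil n\log|\Rg|\rceil$ bits; Bob does the same for the columns. Using the map $s$ of \cref{cond:2}, the referee outputs $s(g_{x_i},g^{y_i})\in\Rel(x_i,y_i)$ in every coordinate. This gives zero error with worst-case cost $n\Dpar{g}+2$. In BE, each sender superdense-codes their bits using shared EPR pairs with the referee, which gives cost $n\Dpar{g}/2+O(1)$. For the expected cost, we use shared randomness. With probability $1-\varepsilon$ all parties run this protocol. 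Otherwise both senders send nothing and the referee outputs an arbitrary fixed string. The error is at most $\varepsilon$ on every input, and the expected cost is $(1-\varepsilon)(n\Dpar{g}/\alpha_{\mathrm{res}}+O(1))$. Dividing by $n$ gives every $\limsup$ bound.

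\emph{Worst-case lower bounds.} Fix $\Delta>0$, and suppose that for infinitely many $n$ there is a protocol with worst-case cost at most $n(\Dpar{g}/\alpha_{\mathrm{res}}-\Delta)$ and error at most $\varepsilon$. Then its success probability is at least $1-\varepsilon>0$ on every input, hence also on average under $\UnifXYng$. This contradicts the bound $2^{-\kappa n}$ of \cref{thm:general-relation-strong} for large $n$. Therefore $\liminf_n Q^{\mathrm{res},\mathrm{wc}}_\varepsilon(\Rel^n)/n\ge\Dpar{g}/\alpha_{\mathrm{res}}-\Delta$ for every $\Delta>0$. The case $\Dpar{g}=0$ is trivial.

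\emph{Expected-cost lower bounds (main obstacle).} Let $\Pi_n$ have expected cost $C_n$ and error at most $\varepsilon$. Set $\tau:=n(\Dpar{g}/\alpha_{\mathrm{res}}-\Delta)$ and let $p_r$ be the probability over $R$ that the realized worst-case length $L_A(R)+L_B(R)$ exceeds $\tau$. The quantity $L_A(r)+L_B(r)$ is a maximum over inputs, while the expected cost is for a single worst input, so we average instead. Because the inputs are uniform from $\UnifXYng$ and independent of $R$, the average cost satisfies $\E_{R,X,Y}[q_A+q_B]\le C_n$. We truncate per shared-randomness value $r$ and per sender: Alice, on inputs with $q_A(x,r)>\tau_A$, replaces her message by an empty message (and similarly for Bob), with the budget split as $\tau_A+\tau_B=\tau$. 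This truncated protocol has worst-case cost at most $\tau$, so by \cref{thm:general-relation-strong} its average success is at most $2^{-\kappa n}$.

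The original average success is at most the probability that neither sender overflows in a success event, plus $\Pr[q_A>\tau_A\text{ or }q_B>\tau_B]$. We bound this overflow probability using the expected lengths. The clean way is to choose $\tau_A:=n(\log|\Rg|/\alpha-\Delta/2)$ and $\tau_B$ analogously, apply Markov-type reasoning, and note that if Alice's length exceeds $\tau_A$ then she contributes nearly $n\log|\Rg|/\alpha$. This yields $1-\varepsilon\le 2^{-\kappa n}+\Pr[\text{overflow}]$ and $C_n\ge \Pr[\text{overflow}]\cdot n(\Dpar{g}/\alpha_{\mathrm{res}}-O(\Delta))$. The delicate point is that overflow of one sender alone only guarantees that sender's share of the cost. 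Alternatively, we can truncate on the event that the total $q_A(x,r)+q_B(y,r)>\tau$, which is not locally implementable. We expect the step of making this truncation local and accounting for costs correctly to be the main obstacle. If the local split is too lossy, the fallback is to apply the one-sender views (\cref{lem:one-sender-simulation}, \cref{prop:one-way-quantum}) directly to each sender's truncated message, and to bound $\E[q_A]\gtrsim(1-\varepsilon)n\log|\Rg|/\alpha$ and $\E[q_B]\gtrsim(1-\varepsilon)n\log|\Cg|/\alpha$ separately. Here we truncate only one sender at a time, so each bound is a one-way statement and the two add. Letting $n\to\infty$ and then $\Delta\to0$ gives $\liminf C_n/n\ge(1-\varepsilon)\Dpar{g}/\alpha_{\mathrm{res}}$.
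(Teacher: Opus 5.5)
Your upper bounds and worst-case lower bounds coincide with the paper's. For the expected-cost lower bound, drop the budget-split route. Truncating both senders and invoking \cref{thm:general-relation-strong} only gives $\Prb[q_A>\tau_A\ \text{or}\ q_B>\tau_B]\ge1-\varepsilon-2^{-\kappa n}$. On that event the total length is guaranteed to be only $\min\{\tau_A,\tau_B\}$, which is exactly the problem you flag.

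Your fallback, however, is a complete and correct proof, and it differs from the paper's. Truncate only Alice: she sends $\lvert\bot\rangle$ when $q_A(x^n,r)>\tau_A$ and otherwise embeds her message in $\tau_A+1$ qubits. Merging Bob into the referee gives a one-way protocol covered by \cref{prop:one-way-quantum}, so $\Prb[S,\ q_A\le\tau_A]\le2^{-\kappa n}$ for the success event $S$. Hence $\E[q_A]\ge\tau_A\,\Prb[q_A>\tau_A]\ge\tau_A(1-\varepsilon-2^{-\kappa n})$ with $\tau_A\approx n\log|\Rg|/\alpha_{\mathrm{res}}$; the bound is the trivial $0$ if $|\Rg|=1$. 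Bob is symmetric, and linearity together with $\E_{X^n,Y^n,R}[q_A+q_B]\le C_n$ finishes.

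The paper instead keeps the two-sender threshold. It handles the non-locality of $\{q_A+q_B\le T_n\}$, which you noticed, by splitting that event into the $O(n^2)$ product events $\{q_A=a,\ q_B=b\}$, each locally checkable (\cref{lem:exact-length-filtering}). \cref{thm:general-relation-strong} bounds each of these by $2^{-\kappa n}$, the polynomial union bound is absorbed, and Markov on the total length gives $1-\varepsilon\le C_n/T_n+N_n2^{-\kappa n}$.

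The paper's route uses the SMP converse as a black box. Yours avoids the union over length pairs and gives finer per-sender bounds. It also shows that the SMP expected rate is exactly the sum of the two one-way expected rates. This holds because $\max_{x,y}\E_R[q_A(x,R)+q_B(y,R)]=\max_x\E_R[q_A(x,R)]+\max_y\E_R[q_B(y,R)]$, and each one-sender view is a one-way protocol with pointwise error at most $\varepsilon$.
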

The proof of~\cref{thm:matching-relation-rates} consists of the upper bounds (\cref{subsec:Upper}), the lower bounds in worst-case (\cref{subsec:Lower_worst}) and expected communication (\cref{subsec:Lower_exp}). 
The upper bounds for both worst-case and expected case and the lower bounds for the worst-case relatively follow easily, while the lower bounds in expected communication requires some elementary lemmas.

\cref{thm:matching-relation-rates} applies to every relation of the form
\eqref{eq:function-defined-relation}.  This includes the singleton relation associated with every
total function as in~\cref{cor:function-rates} and the centralized winning relation of every finite linear game, including CHSH.

\begin{Cor}
\label{cor:function-rates}
Let $f:\mathcal X\times\mathcal Y\to\mathcal W$ be a finite total function.  For every
$0\le\varepsilon<1$ and every
$\mathrm{res}\in\{\mathrm{SR},\mathrm{BE},\mathrm{TE}\}$,
\begin{align}
 \lim_{n\to\infty}\frac{\ER^{\SMP}_\varepsilon(f^n)}{n}
 &=(1-\varepsilon)\Dpar{f}, \label{eq:function-classical-expected}\\
 \lim_{n\to\infty}\frac{\Qbar^{\mathrm{res}}_\varepsilon(f^n)}{n}
 &=\frac{1-\varepsilon}{\alpha_{\mathrm{res}}}\Dpar{f},
 \label{eq:function-quantum-expected}\\
 \lim_{n\to\infty}\frac{R^{\SMP,\mathrm{wc}}_\varepsilon(f^n)}n
 &=\Dpar{f}, \label{eq:function-classical-wc}\\
 \lim_{n\to\infty}\frac{Q^{\mathrm{res},\mathrm{wc}}_\varepsilon(f^n)}n
 &=\frac{\Dpar{f}}{\alpha_{\mathrm{res}}}.
 \label{eq:q-wc-function}
\end{align}
\end{Cor}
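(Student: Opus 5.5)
Since every finite total function $f$ is a relation of the form \eqref{eq:function-defined-relation} with $g:=f$ and $\phi:=\mathrm{id}_{\mathcal W}$, I will obtain \cref{cor:function-rates} directly from \cref{thm:matching-relation-rates}. The first step checks Condition~\cref{cond:2} for the singleton relation $\Rel_f(x,y):=\{f(x,y)\}$. Condition~\cref{cond:1} holds with equality in \eqref{eq:relation-structure_0}. For $s$ I take $s(f_x,f^y):=f(x,y)$. This map is well defined because the value $f(x,y)$ is determined by the row $f_x$ evaluated at $y$, and $f_x(y)$ depends only on $f^y$ through $y$'s column class. Concretely, if $f_x=f_{x'}$ and $f^y=f^{y'}$, then
\[
f(x,y)=f(x',y)=f(x',y'),
\]
where the first equality uses $f_x=f_{x'}$ and the second uses $f^y=f^{y'}$. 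This is the same construction the paper gives for the subclass \eqref{eq:function-defined-relation}, with $t:=\mathrm{id}$.

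The second step identifies the relation problem with the function problem. With $\Rel_f^n(x^n,y^n)=\{f^n(x^n,y^n)\}$, a protocol succeeds on $\Rel_f^n$ exactly when its output equals $f^n(x^n,y^n)$. Hence every cost notion in the corollary coincides with the corresponding one for $\Rel_f$: $\ER^{\SMP}_\varepsilon(\Rel_f^n)=\ER^{\SMP}_\varepsilon(f^n)$, and likewise for the worst-case and quantum SR/BE/TE quantities. Also $\Dpar{g}=\Dpar{f}$, since $g=f$.

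Substituting these identities into \eqref{eq:relation-classical-expected}--\eqref{eq:relation-quantum-wc} gives \eqref{eq:function-classical-expected}--\eqref{eq:q-wc-function} for every $0\le\varepsilon<1$ and every $\mathrm{res}$. The only point needing any care is the well-definedness of $s$ on row/column classes, and that is settled by the displayed equalities. Everything else in the corollary is inherited from \cref{thm:matching-relation-rates}.
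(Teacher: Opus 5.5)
Your proposal is correct and follows the paper's route. The paper likewise obtains this corollary by specializing \cref{thm:matching-relation-rates} to the singleton relation $\{f(x,y)\}$ with $g=f$, $\phi=\mathrm{id}$, and the selector $s(f_x,f^y)=f(x,y)$ (its $t$-construction with $t=\mathrm{id}$). Since success for that relation is exactly correct computation of $f^n$, all the cost quantities coincide and $\Dpar{g}=\Dpar{f}$.
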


\subsection{Upper bounds}\label{subsec:Upper}

Alice sends the sequence of rows
\[
 (g_{x_1},\ldots,g_{x_n})\in\operatorname{Row}(g)^n,
\]
and Bob sends the sequence of columns
\[
 (g^{y_1},\ldots,g^{y_n})\in\operatorname{Col}(g)^n.
\]
The referee outputs $s(g_{x_i},g^{y_i})$ in coordinate $i$.  The left-hand side of
\eqref{eq:relation-structure} makes this a zero-error protocol.  Fixed-length encodings use at most
\[
 \lceil n\log |\Rg|\rceil+\lceil n\log |\Cg|\rceil
\]
classical bits.  Orthogonal encoding gives the same SR quantum rate.  In BE and TE, each sender
uses EPR pairs shared with the referee and superdense coding, giving total quantum communication
$n\Dpar{g}/2+O(1)$. This proves the upper bounds for the worst-case statements, in
\eqref{eq:relation-classical-wc} and \eqref{eq:relation-quantum-wc}.

For expected communication, consider a public Bernoulli variable with which a protocol runs the zero-error protocol with
probability $1-\varepsilon$ and otherwise sends no messages and produces an arbitrary output. This clearly has the expected cost as desired,
and therefore proves the upper bounds in \eqref{eq:relation-classical-expected} and
\eqref{eq:relation-quantum-expected}.  

\subsection{Lower bounds for worst-case communication}\label{subsec:Lower_worst}

\begin{proof}[Worst-case lower bounds in Theorem~\ref{thm:matching-relation-rates}]
Theorem~\ref{thm:general-relation-strong} gives the quantum lower bounds: a protocol with pointwise
success at least $1-\varepsilon$ has average success at least $1-\varepsilon$ under $\UnifXYng$,
contradicting \eqref{eq:general-relation-strong-success} at every rate below the stated threshold.

The classical lower bound follows from the SR quantum lower bound and
\cref{lem:classical-as-sr-quantum}.  Together with the upper bounds above, this proves
\eqref{eq:relation-classical-wc} and \eqref{eq:relation-quantum-wc}.
\end{proof}

\subsection{Lower bounds for expected communication}\label{subsec:Lower_exp}

The following elementary lemma converts the probability mass of one exact pair of
quantum message lengths into the success probability of a fixed-length protocol.

\begin{Lem}
\label{lem:exact-length-filtering}
Let $P$ be a finite total relation, let $\Pi$ be a variable-length quantum SMP protocol for $P^n$
with resource $\mathrm{res}\in\{\mathrm{SR},\mathrm{BE},\mathrm{TE}\}$, and let $\mu$ be an input distribution.  For every pair of
nonnegative integers $(a,b)$, there is a quantum SMP protocol $\Pi_{a,b}$ with the same shared
resource as $\Pi$ such that
\[
 \operatorname{QCost}_{\mathrm{wc}}(\Pi_{a,b})\le a+b+2
\]
and
\begin{align}
 &\Prb\bigl[\Pi_{a,b}(X^n,Y^n)\in P^n(X^n,Y^n)\bigr]
 \notag\\
 &\qquad\ge
 \Prb\bigl[\Pi(X^n,Y^n)\in P^n(X^n,Y^n),
             \ q_A(X^n,R)=a,\ q_B(Y^n,R)=b\bigr],
 \label{eq:filtered-success}
\end{align}
where the probability on the right is over $(X^n,Y^n)\sim\mu$, the shared randomness, and the
protocol output.
\end{Lem}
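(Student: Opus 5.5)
We construct $\Pi_{a,b}$ by filtering $\Pi$ on the event that the two message lengths equal $(a,b)$, and by padding every message to a fixed register, as in \cref{lem:fixed-register-reduction}. Fix a shared-randomness value $r$. In $\Pi_{a,b}$, on input $x^n$ Alice checks whether $q_A(x^n,r)=a$. If it does, she runs her original encoder $U^A_{x^n,r}$ and embeds the resulting $a$-qubit message into an $(a+1)$-qubit register $M'_A$. She places it in a fixed $2^a$-dimensional subspace $\mathcal H^A_{\mathrm{ok}}$, using local ancillas to fill the extra dimension. Otherwise she sends a fixed state in an orthogonal one-dimensional subspace $|\bot\rangle$ of $M'_A$ and applies no encoder to her share of the entanglement. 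Since $2^a+1\le 2^{a+1}$, both subspaces fit. Bob does the same with $b$ and an $(b+1)$-qubit register. The shared resource, meaning the shared randomness and the state $\sigma_r$, is left unchanged, so $\Pi_{a,b}$ lives in the same model $\mathrm{res}$. The message dimensions no longer depend on the inputs, so $\operatorname{QCost}_{\mathrm{wc}}(\Pi_{a,b})\le a+b+2$.

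The referee first measures the projections onto $\mathcal H^A_{\mathrm{ok}}$ versus $|\bot\rangle$ on $M'_A$, and the analogous pair on Bob's register. Each sender places the state in exactly one of these subspaces deterministically, given the input and $r$. These measurements therefore do not disturb the state and simply reveal whether $q_A=a$ and whether $q_B=b$. If both flags are ``ok'', the referee undoes the embedding isometries and applies the original measurement of $\Pi$ for lengths $(a,b)$ and randomness $r$. The joint state on $M_AM_BC_0$ is then exactly the referee's state in $\Pi$ on $(x^n,y^n,r)$, which follows from the same calculation as in the proof of \cref{lem:one-sender-simulation}. Otherwise the referee outputs an arbitrary fixed element of $\mathcal Z^n$. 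One point needs checking: when Alice does not encode, the ``$\bot$'' branch must still produce a valid, trace-preserving local operation. This holds because it is just the preparation of a fixed state on fresh ancillas.

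For the probability comparison, fix $(x^n,y^n,r)$ with $q_A(x^n,r)=a$ and $q_B(y^n,r)=b$. By the previous paragraph, the output distribution of $\Pi_{a,b}$ coincides with that of $\Pi$. Hence
\[
 \Prb[\Pi_{a,b}\text{ succeeds}\mid x^n,y^n,r]=\Prb[\Pi\text{ succeeds}\mid x^n,y^n,r]
\]
on this event. On every other triple, the success probability of $\Pi_{a,b}$ is at least $0$. We multiply by the indicator $\mathbf 1[q_A(x^n,r)=a,\,q_B(y^n,r)=b]$, which is deterministic given $(x^n,y^n,r)$. Averaging over $(X^n,Y^n)\sim\mu$ and $R$ then gives \eqref{eq:filtered-success}.

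I expect no deep obstacle here. The only delicate step is confirming that the flag measurement is nondisturbing and that the referee really can condition on the lengths: in the original model the referee knows the lengths, but in the fixed-register protocol it learns them only through the orthogonal-subspace flags.
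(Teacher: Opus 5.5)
Your proposal is correct and follows essentially the same route as the paper. You pad each sender's message to an $(a+1)$- (resp.\ $(b+1)$-)qubit register containing a $2^a$- (resp.\ $2^b$-)dimensional ``ok'' subspace and an orthogonal flag state $\lvert\bot\rangle$. The referee checks the flags and applies the original length-$(a,b)$ measurement only when both are ok, so the output distribution is unchanged on the event $q_A=a,\ q_B=b$. Your extra remarks, that the flag measurement does not disturb the state and that the $\bot$ branch is a valid local operation, make explicit what the paper leaves implicit.
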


\begin{proof}
	Fix $R=r$ and apply~the same fixed-length construction~(\cref{lem:fixed-register-reduction}) for an instance $(a,b)$.  If $q_A(x,r)=a$, Alice sends the resulting $a$-qubit message in a fixed $2^a$-dimensional subspace of an fixed $(a+1)$-qubit register.  Otherwise she sends a fixed state $\lvert\bot\rangle$ in an
orthogonal subspace.  Bob applies the analogous construction with an $(b+1)$-qubit register.

The referee first checks whether either message is $\lvert\bot\rangle$.  If not, it applies the
original measurement for lengths $(a,b)$; otherwise it outputs an arbitrary value.  When the two
original lengths equal $(a,b)$, the output distribution is unchanged implying that 
\begin{align}
	&``\Pi(X^n,Y^n)\in P^n(X^n,Y^n),\ q_A(X^n,R)=a,\ q_B(Y^n,R)=b" \notag\\
	&\quad \Longrightarrow
``\Pi_{a,b}(X^n,Y^n)\in P^n(X^n,Y^n)"
\end{align}

This proves
\eqref{eq:filtered-success}.  The shared state is unchanged, and the worst-case communication is at
most $a+b+2$.
\end{proof}

\begin{proof}[Expected lower bounds in Theorem~\ref{thm:matching-relation-rates}]
The claims are immediate when $\Dpar{g}=0$, so assume $\Dpar{g}>0$.  Fix a resource model
$\mathrm{res}\in\{\mathrm{SR},\mathrm{BE},\mathrm{TE}\}$ and $0<\Delta<\Dpar{g}/\alpha_{\mathrm{res}}$.  Let $\Pi_n$ be any quantum protocol for
$\Rel^n$ with that resource and with pointwise global error at most $\varepsilon$.  Write
\[
 C_n:=\operatorname{QCost}_{\rev{\mathrm{exp}}}(\Pi_n),
 \qquad
 T_n:=\left\lfloor n\left(\frac{\Dpar{g}}{\alpha_{\mathrm{res}}}-\Delta\right)\right\rfloor-2.
\]
Sample $(X^n,Y^n)\sim\UnifXYng$ independently of $R$.  Let $S$ be the event of
relation success, and put
\[
 L_A:=q_A(X^n,R),\qquad L_B:=q_B(Y^n,R),\qquad L:=L_A+L_B.
\]
Pointwise correctness and the definition of worst-input expected communication give
\begin{equation}
 \Prb[S]\ge1-\varepsilon,
 \qquad
 \E[L]\le C_n.
 \label{eq:expected-success-cost}
\end{equation}

For every $a,b\ge0$ with $a+b\le T_n$, \cref{lem:exact-length-filtering} gives a protocol with
resource $\mathrm{res}$ and worst-case cost at most
\[
 a+b+2\le n\left(\frac{\Dpar{g}}{\alpha_{\mathrm{res}}}-\Delta\right).
\]
Theorem~\ref{thm:general-relation-strong} therefore gives a constant
$\kappa=\kappa(g,\Delta,\mathrm{res})>0$, independent of $(a,b)$, such that
\begin{equation}
 \Prb[S,\ L_A=a,\ L_B=b]\le2^{-\kappa n}
 \label{eq:one-length-pair}
\end{equation}
for all sufficiently large $n$.  There are
\[
 N_n:=\bigl|\{(a,b)\in\mathbb N_0^2:a+b\le T_n\}\bigr|
 =\frac{(T_n+1)(T_n+2)}2=O(n^2)
\]
such pairs.  Summing \eqref{eq:one-length-pair} and using Markov's inequality with
\eqref{eq:expected-success-cost}, we obtain
\begin{align*}
 1-\varepsilon
 &\le \Prb[S]\\
 &\le \Prb[L>T_n]+\Prb[S,\ L\le T_n]\\
 &\le \frac{C_n}{T_n}+N_n2^{-\kappa n}.
\end{align*}
Consequently,
\[
 C_n\ge T_n\bigl(1-\varepsilon-N_n2^{-\kappa n}\bigr).
\]
Since $N_n2^{-\kappa n}=o(1)$ and
$T_n/n\to \Dpar{g}/\alpha_{\mathrm{res}}-\Delta$,
\[
 \liminf_{n\to\infty}
 \frac{\Qbar^{\mathrm{res}}_\varepsilon(\Rel^n)}n
 \ge(1-\varepsilon)
 \left(\frac{\Dpar{g}}{\alpha_{\mathrm{res}}}-\Delta\right).
\]
Letting $\Delta\downarrow0$ proves \eqref{eq:relation-quantum-expected}.  Taking
$\mathrm{res}=\mathrm{SR}$ and using \cref{lem:classical-as-sr-quantum} gives
\[
 \liminf_{n\to\infty}
 \frac{\ER^{\SMP}_\varepsilon(\Rel^n)}n
 \ge(1-\varepsilon)\Dpar{g},
\]
which proves \eqref{eq:relation-classical-expected}.
\end{proof}

\section{One-way consequences}
\label{sec:one-way}
The one-way communication statements given in~\cref{cor:one-way-relation-strong,cor:one-way-relation-rates} are proven in simpler but essentially the same manner as the SMP case. We give a proof of them for completeness.
\begin{Cor}
\label{cor:one-way-relation-strong}
Let $\Rel$ be a finite total relation, and suppose that $g$ and $\phi$ satisfy Condition~\cref{cond:1}.
Fix
$\mathrm{res}\in\{\mathrm{SR},\mathrm EA\}$ and $\Delta>0$.  There is
$\kappa=\kappa(g,\Delta,\mathrm{res})>0$ such that every quantum one-way protocol $\Pi_n$ with
\begin{equation}
 \operatorname{QCost}^{\to}_{\mathrm{wc}}(\Pi_n)
 \le
 n\left(\frac{D_{\to}(g)}{\alpha_{\mathrm{res}}}-\Delta\right)
 \label{eq:one-way-relation-strong-cost}
\end{equation}
has, for all sufficiently large $n$,
\begin{equation}
	\E_{(X^n,Y^n)\sim\Unif(\mathcal{X}_g^n \times \mathcal Y^n)}
 \Prb\bigl[
   \Pi_n(X^n,Y^n)\in\Rel^n(X^n,Y^n)
   \mid X^n,Y^n
 \bigr]
 \le 2^{-\kappa n}.
 \label{eq:one-way-relation-strong-success}
\end{equation}
In particular, the conclusion holds for every public-coin classical one-way protocol of worst-case
cost at most $n(D_{\to}(g)-\Delta)$.
\end{Cor}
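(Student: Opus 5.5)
The plan is to follow the SMP proof, dropping the one-sender reduction: a one-way protocol is already in the form needed for \cref{prop:one-way-quantum}. Three steps are required: pass from relations to functions, pass to fixed length, and handle the difference in input distribution, since the average here is over $\Unif(\mathcal X_g^n\times\mathcal Y^n)$ rather than over $\mathcal Y_g^n$.

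\textbf{Step 1 (relations to functions, fixed length).} If $|\Rg|=1$ the cost assumption \eqref{eq:one-way-relation-strong-cost} is impossible, so assume $D_{\to}(g)>0$.
\begin{itemize}
\item Apply the one-way version of \cref{lem:fixed-register-reduction} to obtain a protocol $\widehat\Pi_n$ with fixed message dimension for each $R=r$. Its worst-case cost is at most $n(D_{\to}(g)/\alpha_{\mathrm{res}}-\Delta)+1\le n(D_{\to}(g)/\alpha_{\mathrm{res}}-\Delta/2)$ for large $n$.
\item Postprocess Bob's output by $\phi^n$, exactly as in \cref{lem:relation-to-function}; the proof is verbatim for one-way protocols. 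On every input, the success probability for $\Rel^n$ is then at most the success probability for $g^n$.
\end{itemize}
It therefore suffices to bound the $g^n$-success probability of the resulting fixed-length protocol, averaged over $\Unif(\mathcal X_g^n\times\mathcal Y^n)$.

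\textbf{Step 2 (apply \cref{prop:one-way-quantum} with $\mathcal Y_0=\mathcal Y$).} Take $f=g$, $\mathcal X_0=\mathcal X_g$ and $\mathcal Y_0=\mathcal Y$. The hypothesis of the proposition holds trivially: the rows $g(x,\cdot)$ for $x\in\mathcal X_g$ are pairwise distinct as functions on all of $\mathcal Y$. The proposition's threshold is $\log|\Rg|/\alpha_{\mathrm{res}}-\delta$ with $\delta=\Delta/2$, matching the cost bound from Step~1. It then gives a constant $\kappa(g,\mathcal Y,\Delta/2,\mathrm{res})>0$ and the bound $2^{-\kappa n}$ on average success under $\Unif(\mathcal X_g^n)\otimes\Unif(\mathcal Y^n)$. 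This is exactly \eqref{eq:one-way-relation-strong-success}. The constant depends only on $g$ (with its domain $\mathcal Y$), $\Delta$ and $\mathrm{res}$, as the statement requires. The resource cases line up directly: SR for SR, and \EA{} for \EA, with Bob's share of the entanglement as side information via \cref{lem:fixed-marginal}.

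\textbf{Step 3 (classical protocols).} The classical claim follows from \cref{lem:classical-as-sr-quantum}: a public-coin classical protocol of worst-case cost $\ell$ is an SR quantum protocol of the same cost, so Step~2 applies with $\alpha_{\mathrm{SR}}=1$.

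The only step needing care is Step~2. One must check that \cref{lem:functional-sampling} uses only that the rows are pairwise distinct on the chosen $\mathcal Y_0$, not that the columns are distinct representatives. Its decoder takes $Y^n$ uniform on $\mathcal Y_0^n=\mathcal Y^n$ and encodes truth tables over all $b\in\mathcal Y$, so this holds with $L=|\mathcal Y|m_{\mathcal W}$. With that checked, the corollary is a direct instantiation.
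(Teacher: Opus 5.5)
Your proposal is correct and follows the paper's proof essentially step for step. Like the paper, you apply the one-qubit fixed-register reduction of \cref{lem:fixed-register-reduction} and absorb the extra qubit into a gap of $\Delta/2$, postprocess coordinatewise by $\phi$, apply \cref{prop:one-way-quantum} with $\mathcal X_0=\mathcal X_g$ and $\mathcal Y_0=\mathcal Y$, and obtain the classical claim from \cref{lem:classical-as-sr-quantum}; your added check that \cref{lem:functional-sampling} requires only pairwise distinct rows is what justifies taking $\mathcal Y_0=\mathcal Y$.
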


\begin{Cor}
\label{cor:one-way-relation-rates}
Let $\Rel$, $g$, $\phi$, and $s$ satisfy Condition~\cref{cond:2}.  For every
$0\le\varepsilon<1$,
\begin{align}
 \lim_{n\to\infty}
 \frac{\EC^{\to}_\varepsilon(\Rel^n)}n
 &=(1-\varepsilon)\log |\Rg|,
 \label{eq:one-way-classical-expected}\\
 \lim_{n\to\infty}
 \frac{\Qbar^{\to,\mathrm{SR}}_\varepsilon(\Rel^n)}n
 &=(1-\varepsilon)\log |\Rg|,
 \label{eq:one-way-quantum-sr-expected}\\
 \lim_{n\to\infty}
 \frac{\Qbar^{\to,\mathrm EA}_\varepsilon(\Rel^n)}n
 &=\frac{1-\varepsilon}{2}\log |\Rg|.
 \label{eq:one-way-quantum-e-expected}
\end{align}
For worst-case communication,
\begin{align}
 \lim_{n\to\infty}
 \frac{C^{\to,\mathrm{wc}}_\varepsilon(\Rel^n)}n
 &=\log |\Rg|,
 \label{eq:one-way-classical-wc}\\
 \lim_{n\to\infty}
 \frac{Q^{\to,\mathrm{SR},\mathrm{wc}}_\varepsilon(\Rel^n)}n
 &=\log |\Rg|,
 \label{eq:one-way-quantum-sr-wc}\\
 \lim_{n\to\infty}
 \frac{Q^{\to,\mathrm EA,\mathrm{wc}}_\varepsilon(\Rel^n)}n
 &=\frac12\log |\Rg|.
 \label{eq:one-way-quantum-e-wc}
\end{align}
\end{Cor}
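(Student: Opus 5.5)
The statement to prove is \cref{cor:one-way-relation-rates}. I would mirror the SMP proof of \cref{thm:matching-relation-rates}, with \cref{cor:one-way-relation-strong} playing the role of \cref{thm:general-relation-strong}. So the first task is to prove \cref{cor:one-way-relation-strong}, even though it is stated just before as a consequence.

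\emph{Strong converse.} Given a one-way protocol $\Pi_n$ for $\Rel^n$, I would first apply the one-way part of \cref{lem:fixed-register-reduction}. This costs one extra qubit, which is absorbed by replacing $\Delta$ with $\Delta/2$ for large $n$. Next, \cref{lem:relation-to-function} (the same postprocessing by $\phi^n$ at Bob) gives a protocol for $g^n$ whose success probability is at least as large on every input.

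Now apply \cref{prop:one-way-quantum} with $f=g$, $\mathcal X_0=\mathcal X_g$ and $\mathcal Y_0=\mathcal Y_g$; column representatives separate the distinct rows. This yields decay $2^{-\kappa n}$ under $\Unif(\mathcal X_g^n)\otimes\Unif(\mathcal Y_g^n)$. The statement, however, averages over $\Unif(\mathcal Y^n)$. Two routes handle this:
\begin{itemize}
\item Apply \cref{prop:one-way-quantum} with $\mathcal Y_0=\mathcal Y$. Restricted rows over all of $\mathcal Y$ are trivially distinct, so this gives the bound directly for the uniform distribution on $\mathcal Y^n$.
\item Alternatively, invoke \cref{lem:functional-sampling} with the function $g|_{\mathcal X_g\times\mathcal Y}$, whose rows are pairwise distinct.
\end{itemize}
The entropy input comes from \cref{lem:sr-entropy} for SR and from \cref{lem:fixed-marginal} for EA, exactly as in \cref{prop:one-way-quantum}. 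The classical case follows from \cref{lem:classical-as-sr-quantum}.

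\emph{Upper bounds.} Alice sends $(g_{x_1},\dots,g_{x_n})$ using $\lceil n\log|\Rg|\rceil$ bits. Bob, who knows $g^{y_i}$ himself, outputs $s(g_{x_i},g^{y_i})$, which lies in $\Rel(x_i,y_i)$ by Condition~\cref{cond:2}; this is zero-error. Orthogonal encoding gives the same SR quantum rate. Superdense coding over shared EPR pairs gives $\tfrac12 n\log|\Rg|+O(1)$ qubits under EA. For expected cost, a public Bernoulli$(1-\varepsilon)$ coin decides whether to run this protocol or to send nothing, giving rate $(1-\varepsilon)$ times each of these.

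\emph{Lower bounds.} For worst-case cost, pointwise success $1-\varepsilon$ implies average success at least $1-\varepsilon$, which contradicts \eqref{eq:one-way-relation-strong-success} at any rate below the threshold. For expected cost, I would repeat the argument of \cref{subsec:Lower_exp} with a one-way filtering lemma. For each length $a$, Alice sends her message in an $(a+1)$-qubit register if $q(x,r)=a$, and otherwise sends $\lvert\bot\rangle$ in an orthogonal subspace. This gives $\Prb[S,\,L=a]\le 2^{-\kappa n}$ for every $a\le T_n$, where $T_n=\lfloor n(D_\to(g)/\alpha_{\mathrm{res}}-\Delta)\rfloor-1$. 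Summing over these $O(n)$ lengths and using Markov's inequality gives $C_n\ge T_n(1-\varepsilon-O(n)2^{-\kappa n})$; then let $\Delta\downarrow0$.

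The main obstacle is ensuring the strong converse is stated for the input distribution actually used. Under pointwise error any fixed distribution suffices, so either $\mathcal Y_g^n$ or $\mathcal Y^n$ works, provided the chosen column set separates the rows of $\mathcal X_g$. The degenerate case $|\Rg|=1$ is trivial, since then the claimed rates are $0$.
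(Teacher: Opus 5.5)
Your proposal is correct and follows the paper's proof essentially step for step. The strong converse is obtained through the one-qubit fixed-register reduction, postprocessing by $\phi$, and \cref{prop:one-way-quantum} with $\mathcal X_0=\mathcal X_g$ and $\mathcal Y_0=\mathcal Y$; the upper bounds come from zero-error row transmission plus superdense coding and Bernoulli mixing; and the worst-case lower bounds and the Markov/exact-length-filtering expected lower bounds use the same $T_n=\lfloor n(D_{\to}(g)/\alpha_{\mathrm{res}}-\Delta)\rfloor-1$.
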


The one-way result is obtained directly from Proposition~\ref{prop:one-way-quantum}.  In
particular, Bob does not send $y^n$ to an SMP referee, and the column term $\log |\Cg|$ is never
charged.

\begin{proof}[Proof of Corollary~\ref{cor:one-way-relation-strong}]
By \cref{lem:fixed-register-reduction}, $\Pi_n$ has a fixed-length simulation $\widehat\Pi_n$
with the same output distribution and
\[
 \operatorname{QCost}^{\to}_{\mathrm{wc}}(\widehat\Pi_n)
 \le
 n\left(\frac{D_{\to}(g)}{\alpha_{\mathrm{res}}}-\Delta\right)+1.
\]
For all sufficiently large $n$, this is at most
\[
 n\left(\frac{D_{\to}(g)}{\alpha_{\mathrm{res}}}-\frac\Delta2\right).
\]
By the observation~\eqref{eq:relation-structure_0} and postprocessing Bob's output coordinatewise by
$\phi$, we obtain a fixed-length protocol $\widetilde\Pi_n$ for $g^n$ with the same shared resource, communication, and success probability at least that of the original protocol.  
Hence Proposition~\ref{prop:one-way-quantum}, applied with $\mathcal A=\mathcal X_g$,
$\mathcal Y_0=\mathcal Y$, and gap $\Delta/2$, gives
\eqref{eq:one-way-relation-strong-success}.

The classical claim follows from \cref{lem:classical-as-sr-quantum}.
\end{proof}

We use the following one-message version of \cref{lem:exact-length-filtering}.

\begin{Lem}
\label{lem:one-way-exact-length-filtering}
Let $\Pi$ be a variable-length quantum one-way protocol for a relation $P^n$ with resource
$\mathrm{res}$, and let $\mu$ be an input distribution.  For every nonnegative integer $a$, there
is a quantum one-way protocol $\Pi_a$ with the same shared resource and worst-case cost at most
$a+1$ such that
\begin{align}
 &\Prb\bigl[\Pi_a(X^n,Y^n)\in P^n(X^n,Y^n)\bigr]
 \notag\\
 &\qquad\ge
 \Prb\bigl[\Pi(X^n,Y^n)\in P^n(X^n,Y^n),\ q(X^n,R)=a\bigr].
 \label{eq:one-way-filtered-success}
\end{align}
\end{Lem}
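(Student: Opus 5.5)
The plan is to mirror the proof of \cref{lem:exact-length-filtering}, keeping only Alice's message. First fix a shared-randomness value $R=r$, together with the corresponding shared state (trivial in SR, Alice--Bob entanglement in \EA). Inside a fixed $(a+1)$-qubit register, which has dimension $2^{a+1}$, choose a subspace $\mathcal H_a$ of dimension $2^a$ and a unit vector $\lvert\bot\rangle$ orthogonal to $\mathcal H_a$. Such a vector exists because $2^a<2^{a+1}$.

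Alice's behaviour in $\Pi_a$ depends on the original message length. If $q(x^n,r)=a$, she runs her original encoding unitary and then applies a fixed isometry carrying the $a$ message qubits into $\mathcal H_a$. She keeps her retained registers exactly as in $\Pi$ and sends the $a+1$ qubits. If $q(x^n,r)\ne a$, she sends $\lvert\bot\rangle$ and does nothing to her share of the shared state. Both branches are local operations by Alice. She knows $x^n$ and $r$, so she can choose the branch herself. Every transmitted message now has length exactly $a+1$, so the worst-case cost is at most $a+1$.

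Bob receives the message and, for each $r$, measures the projection onto $\mathcal H_a$ versus its orthogonal complement. If the outcome is $\mathcal H_a$, he undoes the isometry and applies the original measurement $\mathsf M_{y^n}$ for length $a$ to the recovered message together with his share of the shared state. Otherwise he outputs an arbitrary value.

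Next, compare outputs. On every input with $q(x^n,r)=a$, the message lies in $\mathcal H_a$ with certainty. The projection therefore does not disturb the joint state, and the outcome of $\mathsf M_{y^n}$ has the same conditional distribution as in $\Pi$. Hence, for each fixed $(x^n,y^n,r)$, the event that $\Pi$ succeeds and $q=a$ is contained in the event that $\Pi_a$ succeeds, in the sense of probabilities. Averaging over $(X^n,Y^n)\sim\mu$ and over $R$ gives \eqref{eq:one-way-filtered-success}.

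The only point needing care is that in \EA\ the $\bot$-branch must leave Bob's share of the entanglement intact, which it does since Alice acts only locally. In that branch Bob's output is simply not counted, so nothing there can break the inequality.
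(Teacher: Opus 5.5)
Your proposal is correct and takes the same approach as the paper's proof. You fix $r$, embed the length-$a$ message in a $2^a$-dimensional subspace of an $(a+1)$-qubit register, send an orthogonal flag $\lvert\bot\rangle$ when $q(x^n,r)\ne a$, and have Bob apply the original decoder only on the non-flag branch. This leaves the output distribution unchanged whenever $q(x^n,r)=a$, which gives the inequality after averaging.
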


\begin{proof}
Fix $R=r$.  If $q(x,r)=a$, Alice applies her original encoding unitary and uses a local unitary
with ancillary registers to place the resulting message in a fixed $2^a$-dimensional subspace of
an $(a+1)$-qubit register.  Otherwise she sends a fixed orthogonal state
$\lvert\bot\rangle$.  Bob applies the original decoder unless he receives
$\lvert\bot\rangle$.  On the event $q(X^n,R)=a$, the output distribution is unchanged, which proves
\eqref{eq:one-way-filtered-success}.
\end{proof}

\begin{proof}[Proof of Corollary~\ref{cor:one-way-relation-rates}]
Write $D_{\to}=\log \Rg$.  For the upper bounds, Alice transmits the sequence of rows
\[
 (g_{x_1},\ldots,g_{x_n}).
\]
Bob then outputs
$s(g_{x_i},g^{y_i})$ in coordinate $i$.  This is a zero-error classical protocol by
\eqref{eq:relation-structure}, of cost
$nD_{\to}+O(1)$.  Orthogonal encoding gives the same SR quantum rate, while superdense coding with
prior EPR pairs gives E rate $D_{\to}/2$.  For expected communication, a public Bernoulli variable
runs this zero-error protocol with probability $1-\varepsilon$ and sends no message otherwise.
This proves all upper bounds.

We next prove the expected quantum lower bounds.  The claims are immediate if $D_{\to}=0$, so fix
$\mathrm{res}\in\{\mathrm{SR},\mathrm EA\}$ and
$0<\Delta<D_{\to}/\alpha_{\mathrm{res}}$.  Let $\Pi_n$ be a quantum protocol with this resource and
pointwise global error at most $\varepsilon$.  
Under $\Unif(\mathcal X^n_g \times \mathcal Y^n)$, let $S$ be the event of relation success, let $L:=q(X^n,R)$, and put
\[
 C_n:=\operatorname{QCost}^{\to}_{\mathrm{exp}}(\Pi_n),
 \qquad
 T_n:=\left\lfloor
 n\left(\frac{D_{\to}}{\alpha_{\mathrm{res}}}-\Delta\right)
 \right\rfloor-1.
\]
Then $\Prb[S]\ge1-\varepsilon$ and $\E[L]\le C_n$.  For every $0\le a\le T_n$,
Lemma~\ref{lem:one-way-exact-length-filtering} and
Corollary~\ref{cor:one-way-relation-strong} give a constant $\kappa>0$, independent of $a$, such
that
\[
 \Prb[S,\ L=a]\le2^{-\kappa n}
\]
for all sufficiently large $n$.  Hence Markov's inequality gives
\begin{align*}
 1-\varepsilon
	\le\Prb[S]
 &\le\Prb[L>T_n]+\sum_{a=0}^{T_n}\Prb[S,\ L=a]\\
 &\le\frac{C_n}{T_n}+(T_n+1)2^{-\kappa n}.
\end{align*}
It follows that
\[
 \liminf_{n\to\infty}
 \frac{\Qbar^{\to,\mathrm{res}}_\varepsilon(\Rel^n)}n
 \ge(1-\varepsilon)
 \left(\frac{D_{\to}}{\alpha_{\mathrm{res}}}-\Delta\right).
\]
Letting $\Delta\downarrow0$ proves \eqref{eq:one-way-quantum-sr-expected} and
\eqref{eq:one-way-quantum-e-expected}.  The SR case and
\cref{lem:classical-as-sr-quantum} also give the lower bound in
\eqref{eq:one-way-classical-expected}.

Finally, a pointwise-success protocol has average success at least $1-\varepsilon$ under
$\Unif(\mathcal X^n_f \times \mathcal Y^n)$.  Corollary~\ref{cor:one-way-relation-strong} rules out every worst-case rate
below $D_{\to}/\alpha_{\mathrm{res}}$.  The classical lower bound follows from the SR case and
\cref{lem:classical-as-sr-quantum}.  Together with the zero-error upper bounds, this proves
\eqref{eq:one-way-classical-wc}--\eqref{eq:one-way-quantum-e-wc}.
\end{proof}

\section*{Acknowledgement}
The author would like to thank Ashwin Nayak and Dave Touchette for their kindness and support; Atsuya Hasegawa and Masayuki Miyamoto for their monthly conversations.

The author acknowledges the support of the Natural Sciences and Engineering Research Council of Canada (NSERC) [funding reference no. ALLRP-578455-2022]. This research was supported in part by the Institute for Quantum Computing.

\bibliography{/Users/daikisuruga/Dropbox/Citations/mathematics_D, /Users/daikisuruga/Dropbox/Citations/Info_D, /Users/daikisuruga/Dropbox/Citations/computational_D, /Users/daikisuruga/Dropbox/Citations/query_D, /Users/daikisuruga/Dropbox/Citations/quant_info_D, /Users/daikisuruga/Dropbox/Citations/comm_comp_D, /Users/daikisuruga//Dropbox/Citations/books_D}

\begin{thebibliography}{BCWdW01}

\bibitem[AG23]{AG23}
Srinivasan Arunachalam and Uma Girish.
\newblock Trade-offs between entanglement and communication.
\newblock In {\em 38th Computational Complexity Conference}, CCC '23, 2023.

\bibitem[AHMS20]{AHMS20}
Benny Applebaum, Thomas Holenstein, Manoj Mishra, and Ofer Shayevitz.
\newblock The communication complexity of private simultaneous messages, revisited.
\newblock {\em Journal of Cryptology}, 33(3):917--953, 2020.

\bibitem[Ari73]{Ari73}
S.~Arimoto.
\newblock On the converse to the coding theorem for discrete memoryless channels (corresp.).
\newblock {\em IEEE Transactions on Information Theory}, 19(3):357--359, 1973.

\bibitem[AT16]{AT16}
Juan~Miguel Arrazola and Dave Touchette.
\newblock Quantum advantage on information leakage for equality.
\newblock {\em arXiv preprint arXiv:1607.07516}, 2016.

\bibitem[BCMdW10]{BCMdW10}
Harry Buhrman, Richard Cleve, Serge Massar, and Ronald de~Wolf.
\newblock Nonlocality and communication complexity.
\newblock {\em Reviews of modern physics}, 82:665--698, 2010.

\bibitem[BCWdW01]{BCWdW01}
Harry Buhrman, Richard Cleve, John Watrous, and Ronald de~Wolf.
\newblock Quantum fingerprinting.
\newblock {\em Physical Review Letters}, 87:167902, 2001.

\bibitem[BK97]{BK97}
L.~Babai and P.G. Kimmel.
\newblock Randomized simultaneous messages: solution of a problem of {Yao} in communication complexity.
\newblock In {\em 12th computational complexity conference}, pages 239--246, 1997.

\bibitem[BR11]{BR11}
Mark Braverman and Anup Rao.
\newblock Information equals amortized communication.
\newblock In {\em 52nd annual Symposium on Foundations of Computer Science}, pages 748--757, 2011.

\bibitem[Bra15]{Bra15}
Mark Braverman.
\newblock Interactive information complexity.
\newblock {\em SIAM Journal on Computing}, 44(6):1698--1739, 2015.

\bibitem[BW15]{BW15}
Mark Braverman and Omri Weinstein.
\newblock An interactive information odometer and applications.
\newblock In {\em 47th annual symposium on Theory of Computing}, pages 341--350, 2015.

\bibitem[CSWY01]{CSWY01}
Amit Chakrabarti, Yaoyun Shi, Anthony Wirth, and Andrew Yao.
\newblock Informational complexity and the direct sum problem for simultaneous message complexity.
\newblock In {\em 42nd annual Symposium on Foundations of Computer Science}, pages 270--278, 2001.

\bibitem[Gav08]{Gav08}
Dmitry Gavinsky.
\newblock On the role of shared entanglement.
\newblock {\em Quantum Information \& Computation}, 8(1):82--95, 2008.

\bibitem[GGJL25]{GGJL25}
Mika G{\"o}{\"o}s, Tom Gur, Siddhartha Jain, and Jiawei Li.
\newblock Quantum communication advantage in {TFNP}.
\newblock In {\em 57th annual Symposium on Theory of Computing}, pages 1465--1475, 2025.

\bibitem[GKDW06]{GKdW06}
Dmitry Gavinsky, Julia Kempe, and Ronald De~Wolf.
\newblock Strengths and weaknesses of quantum fingerprinting.
\newblock In {\em 21st computational complexity conference}, pages 8--pp. IEEE, 2006.

\bibitem[GKRdW06]{GKRdW06}
Dmitry Gavinsky, Julia Kempe, Oded Regev, and Ronald de~Wolf.
\newblock Bounded-error quantum state identification and exponential separations in communication complexity.
\newblock In {\em 38th annual Symposium on Theory of Computing}, pages 594--603, 2006.

\bibitem[GRT22]{GRT22}
Uma Girish, Ran Raz, and Avishay Tal.
\newblock Quantum versus randomized communication complexity, with efficient players.
\newblock {\em computational complexity}, 31(2):17, 2022.

\bibitem[IJK09]{IJK09}
Russell Impagliazzo, Ragesh Jaiswal, and Valentine Kabanets.
\newblock Approximate list-decoding of direct product codes and uniform hardness amplification.
\newblock {\em SIAM Journal on Computing}, 39(2):564--605, 2009.

\bibitem[Jai15]{Jai15}
Rahul Jain.
\newblock New strong direct product results in communication complexity.
\newblock {\em Journal of the ACM}, 62(3):1--27, 2015.

\bibitem[JK09]{JK09}
Rahul Jain and Hartmut Klauck.
\newblock New results in the simultaneous message passing model via information theoretic techniques.
\newblock In {\em 24th Computational Complexity Conference}, pages 369--378, 2009.

\bibitem[JK21]{JK21}
Rahul Jain and Srijita Kundu.
\newblock {A Direct Product Theorem for One-Way Quantum Communication}.
\newblock In {\em 36th Computational Complexity Conference}, volume 200, pages 27:1--27:28, 2021.

\bibitem[JK22]{JK22}
Rahul Jain and Srijita Kundu.
\newblock A direct product theorem for quantum communication complexity with applications to device-independent {QKD}.
\newblock In {\em 62nd annual Symposium on Foundations of Computer Science}, pages 1285--1295, 2022.

\bibitem[JRS03]{JRS03d}
Rahul Jain, Jaikumar Radhakrishnan, and Pranab Sen.
\newblock A direct sum theorem in communication complexity via message compression.
\newblock In {\em 30th International Colloquium on Automata, Languages and Programming}, pages 300--315, 2003.

\bibitem[Kla10]{Kla10}
Hartmut Klauck.
\newblock A strong direct product theorem for disjointness.
\newblock In {\em 42nd annual symposium on Theory of computing}, pages 77--86, 2010.

\bibitem[KN21]{KN21}
Akinori Kawachi and Harumichi Nishimura.
\newblock {Communication Complexity of Private Simultaneous Quantum Messages Protocols}.
\newblock In {\em 2nd Conference on Information-Theoretic Cryptography (ITC 2021)}, volume 199, pages 20:1--20:19, 2021.

\bibitem[KRS09]{KRS09}
Robert {K{\" o}nig}, Renato Renner, and Christian Schaffner.
\newblock The operational meaning of min-and max-entropy.
\newblock {\em IEEE Transactions on Information theory}, 55(9):4337--4347, 2009.

\bibitem[KT08]{KT08}
Robert~T. Konig and Barbara~M. Terhal.
\newblock The bounded-storage model in the presence of a quantum adversary.
\newblock {\em IEEE Transactions on Information Theory}, 54(2):749--762, 2008.

\bibitem[LWD16]{LWD16}
Felix Leditzky, Mark~M Wilde, and Nilanjana Datta.
\newblock Strong converse theorems using {R{\'e}nyi} entropies.
\newblock {\em Journal of Mathematical Physics}, 57(8), 2016.

\bibitem[Mer90]{Mer90}
N.~David Mermin.
\newblock Simple unified form for the major no-hidden-variables theorems.
\newblock {\em Physical Review Letters}, 65:3373--3376, Dec 1990.

\bibitem[NS96]{NS96}
Ilan Newman and Mario Szegedy.
\newblock Public vs. private coin flips in one round communication games.
\newblock In {\em 28th annual symposium on Theory of computing}, pages 561--570, 1996.

\bibitem[ON99]{ON99}
Tomohiro Ogawa and Hiroshi Nagaoka.
\newblock Strong converse to the quantum channel coding theorem.
\newblock {\em IEEE Transactions on Information Theory}, 45(7):2486--2489, 1999.

\bibitem[Per90]{Per90}
Asher Peres.
\newblock Incompatible results of quantum measurements.
\newblock {\em Physics Letters A}, 151(3):107--108, 1990.

\bibitem[RAM16]{RAM16}
Ravishankar Ramanathan, Remigiusz Augusiak, and Gl\'aucia Murta.
\newblock Generalized {XOR} games with $d$ outcomes and the task of nonlocal computation.
\newblock {\em Phys. Rev. A}, 93:022333, 2016.

\bibitem[She11]{She11SDP}
Alexander~A Sherstov.
\newblock Strong direct product theorems for quantum communication and query complexity.
\newblock In {\em 43rd annual symposium on Theory of computing}, pages 41--50, 2011.

\bibitem[Sur26]{Sur26}
Daiki Suruga.
\newblock Zero-error information equals amortized communication complexity.
\newblock {\em arXiv preprint arXiv:2608.04141}, 2026.
\newblock (To appear in {FOCS} 2026).

\bibitem[Tom15]{Tom15}
Marco Tomamichel.
\newblock {\em Quantum information processing with finite resources: mathematical foundations}.
\newblock Springer, 2015.

\bibitem[Tou15]{Tou15}
Dave Touchette.
\newblock Quantum information complexity.
\newblock In {\em 47th Annual Symposium on Theory of Computing}, pages 317--326, 2015.

\bibitem[TW19]{TW19}
Himanshu Tyagi and Shun Watanabe.
\newblock Strong converse using change of measure arguments.
\newblock {\em IEEE Transactions on Information Theory}, 66(2):689--703, 2019.

\bibitem[Wat18]{Wat18}
John Watrous.
\newblock {\em The theory of quantum information}, volume~1.
\newblock Cambridge university press, 2018.

\bibitem[Wol57]{Wol57}
Jacob Wolfowitz.
\newblock The coding of messages subject to chance errors.
\newblock {\em Illinois Journal of Mathematics}, 1(4):591--606, 1957.

\bibitem[Wul11]{Wul11}
J{\"u}rg Wullschleger.
\newblock Bitwise quantum min-entropy sampling and new lower bounds for random access codes.
\newblock In {\em Conference on Quantum Computation, Communication, and Cryptography}, pages 164--173, 2011.

\bibitem[Yao79]{Yao79}
Andrew Chi-Chih Yao.
\newblock Some complexity questions related to distributive computing~(preliminary report).
\newblock In {\em 11th annual Symposium on Theory of Computing}, pages 209--213, 1979.

\end{thebibliography}
\bibliographystyle{alpha}
\appendix
\section{Applications}
\label{sec:applications}

\subsection{Finding a common element or certifying disjointness}

Let $\mathcal X=\mathcal Y=2^{[m]}$ and
\begin{equation}
 \Rel_{\cap}(x,y)
 :=
 \begin{cases}
  \{\mathsf{disj}\},&x\cap y=\varnothing,\\
  x\cap y,&x\cap y\ne\varnothing.
 \end{cases}
 \label{eq:intersection-search}
\end{equation}
Define $\phi(\mathsf{disj})=0$ and $\phi(i)=1$ for every $i\in[m]$.
Then \cref{cond:1} holds for
\[
 g_{\cap}(x,y):=\mathbf 1[x\cap y\ne\varnothing].
\]
Distinct subsets give distinct rows and columns of $g_{\cap}$, so
\[
 \rev{|\operatorname{Row}(g_{\cap})|}
 =\rev{|\operatorname{Col}(g_{\cap})|}=2^m,
 \qquad \Dpar{g_{\cap}}=2m.
\]
Consequently, a row of $g_{\cap}$ uniquely determines $x$, and a column uniquely determines $y$.
Define the selector to output $\mathsf{disj}$ when $x\cap y=\varnothing$ and the least element of
$x\cap y$ otherwise.  This selector depends only on the row and column, so
\eqref{eq:relation-structure} holds.  Theorem~\ref{thm:matching-relation-rates} gives
\begin{align}
 \lim_{n\to\infty}
 \frac{\ER^{\SMP}_\varepsilon(\Rel_{\cap}^{\,n})}{n}
 &=2(1-\varepsilon)m,\\
 \lim_{n\to\infty}
 \frac{\Qbar^{\mathrm{res}}_\varepsilon(\Rel_{\cap}^{\,n})}{n}
 &=\frac{2(1-\varepsilon)m}{\alpha_{\mathrm{res}}}.
\end{align}
The worst-case rates are $2m$ and $2m/\alpha_{\mathrm{res}}$, respectively.

\subsection{Winning relations of linear games}

Let $G$ be a finite abelian group, let $h:\mathcal X\times\mathcal Y\to G$, and define
\begin{equation}
 \Rel_h(x,y):=\{(a,b)\in G^2:a+b=h(x,y)\}.
 \label{eq:linear-game-relation}
\end{equation}
This is of the form \eqref{eq:function-defined-relation}, with $g=h$ and
$\phi(a,b)=a+b$.  One may choose the canonical valid output $(0,h(x,y))$; the value $h(x,y)$ is
determined by the pair $(h_x,h^y)$.  Theorems~\ref{thm:general-relation-strong} and
\ref{thm:matching-relation-rates} therefore apply with
$\Dpar{h}=\rev{\log|\operatorname{Row}(h)|+\log|\operatorname{Col}(h)|}$.

In particular, the CHSH winning relation is obtained from $G=\mathbb Z_2$ and
\[
 h_{\mathrm{CHSH}}(x,y)=x\wedge y,
 \qquad
 \phi(a,b)=a\oplus b.
\]
It has $\rev{|\operatorname{Row}(h_{\mathrm{CHSH}})|}
=\rev{|\operatorname{Col}(h_{\mathrm{CHSH}})|}=2$ and hence
$\Dpar{h_{\mathrm{CHSH}}}=2$.  Here the SMP referee produces the pair $(a,b)$.  This centralized-output
communication problem should be distinguished from the usual nonlocal game, in which the two
separated players produce their own answers; see Ref.~\cite{RAM16}.

\subsection{The magic-square winning relation}

The preceding argument does not extend to all nonlocal games.  In the magic-square game~\cite{Mer90, Per90}, the
questions are $x,y\in[3]$.  Alice's answer $a\in\{0,1\}^3$ has even parity, Bob's answer
$b\in\{0,1\}^3$ has odd parity, and their answers must agree at the queried cell.  Thus
\begin{equation}
 \Rel_{\mathrm{MS}}(x,y)
 :=\left\{(a,b):
 \bigoplus_{j=1}^3a_j=0,\ 
 \bigoplus_{i=1}^3b_i=1,\ 
 a_y=b_x
 \right\}.
 \label{eq:magic-square-relation}
\end{equation}

For any two question pairs $(x,y)$ and $(x',y')$, the corresponding valid-output sets intersect.
Indeed, take $a=000$.  If $x\ne x'$, take $b$ to be zero in positions $x,x'$ and one in the
remaining position.  If $x=x'$, take $b_x=0$ and put one in exactly one of the other positions.
In either case $b$ has odd parity, and the same pair $(a,b)$ is valid for both question pairs.

It follows that any $g$ and $\phi$ satisfying Condition~\cref{cond:1} for
$\Rel_{\mathrm{MS}}$ must have constant $g$: a common valid output forces the two corresponding
values of $g$ to agree.  On the other hand, there is no output valid for all nine question pairs.
Such an output would satisfy $a_y=b_x$ for every $x,y$, so all six answer bits would equal a common
bit.  The even parity of $a$ would force this bit to be zero, while the odd parity of $b$ would
force it to be one.  Therefore \eqref{eq:relation-structure} cannot hold for the magic-square
winning relation with any function $g$.  Its treatment requires an argument for genuinely
relational game predicates rather than a reduction to a single function value.

\section{Proof of \cref{lem:wullschleger}}
\label{app:wullschleger}
This appendix proves \cref{lem:wullschleger} following the argument of
Ref.~\cite[Section~4]{Wul11}.  Let \(T\) be uniform on \(\binom{[N]}k\).  We seek an upper bound on
\[
 p_{\mathrm{guess}}(X_T\mid TQ).
\]
For \(S\in\binom{[N]}j\), write
\[
 \operatorname{Par}^{(N,j)}_S(x):=\bigoplus_{i\in S}x_i.
\]
The main connection between guessing \(X_T\) and guessing these parities is given by
\cref{lem:parity-to-substring}:
\[
 p_{\mathrm{guess}}(X_T\mid TQ)
 \le
 2^{1-k}\sum_{j=0}^k\binom kj
 \left(
 p_{\mathrm{guess}}
 \bigl(\operatorname{Par}^{(N,j)}_S(X)\mid SQ\bigr)-\frac12
 \right),
\]
where \(S\) is uniform on \(\binom{[N]}j\) in the term indexed by \(j\).  An intuition behind the relation is that a guess
\(\widehat X_T\) for the whole substring also gives a guess
\(\operatorname{Par}^{(N,j)}_S(\widehat X)\) for the parity of any \(S\subseteq T\);
\cref{lem:parity-to-substring} makes this relation precise.

It therefore remains to show that
\[
 p_{\mathrm{guess}}
 \bigl(\operatorname{Par}^{(N,j)}_S(X)\mid SQ\bigr)-\frac12
\]
is small for most \(j'\)s.  This is the content of \cref{prop:random-parity}.
To prove it, \cref{lem:fixed-weight-list-decoding}, which is taken from
Ref.~\cite[Theorem~44]{IJK09}, and \cref{lem:code-to-extractor} show that
\[
 \operatorname{Ext}(x,S):=\operatorname{Par}^{(N,j)}_S(x)
\]
satisfies a special property for being a strong classical bit extractor (see~\cref{def:bit-extractor}), meaning that its output is close to a uniform bit even with the conditional probability on $S$, if $H_{\min} (X)$ is large. 
Proposition~\ref{prop:KT-lifting}, which is the one-bit case of
Ref.~\cite[Theorem~III.1]{KT08}, then shows that its output remains close to a uniform bit
conditioned on \(SQ\), provided that \(H_{\min}(X\mid Q)\) is sufficiently large.  Finally, in the
sum over $j$'s above, \cref{prop:random-parity} bounds the terms with \(j>k/4\), while the sum of the
coefficients \(2^{-k}\binom kj\) for \(j\le k/4\) is exponentially small.  Combining these two
bounds proves \cref{lem:wullschleger}.

Throughout this appendix, $\log$ is base two and $\ln$ is the natural logarithm.

\subsection{Set-up and Facts}
For a classical bit $Z$ and a register $Q$, write
\[
 d(Z\mid Q)
 :=
 \frac12
 \left\|
 \rho_{ZQ}-\frac{I_Z}{2}\otimes\rho_Q
 \right\|_1,
 \qquad
 d(Z)
 :=
 \frac12
 \left\|
 \rho_Z-\frac{I_Z}{2}
 \right\|_1
\]
where $\|\cdot \|_1$ is the trace norm.
\par
When all registers are classical, trace distance is total variation distance.  In particular, if
$Z$ is a classical bit and $Y$ is classical, then
\begin{equation}
 d(Z\mid Y)
 =\E_Y\left|\Prb[Z=0\mid Y]-\frac12\right|
 =p_{\mathrm{guess}}(Z\mid Y)-\frac12.
 \label{eq:classical-bit-distance}
\end{equation}
The same operational identity holds for quantum side information:
\begin{equation}
 p_{\mathrm{guess}}(Z\mid Q)=\frac12+d(Z\mid Q)
 \label{eq:quantum-bit-distance}
\end{equation}
due to the standard characterization of the binary state-discrimination formula~\cite[Theorem 3.4]{Wat18}.

\begin{Fact}\label{Fact_App_guess_classical}
	For classical variables $X,Y$, 
$p_\mathrm{guess}(X|Y)= \E_Y[\max_x\Prb[X=x|Y]].$
\end{Fact}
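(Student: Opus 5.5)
Since $Y$ is classical, I would write the cq-state as $\rho_{XY}=\sum_{x,y}p(x,y)\,|x\rangle\!\langle x|\otimes|y\rangle\!\langle y|$ and compute the guessing probability directly from the definition of $p_{\mathrm{guess}}$. Then I would check that the optimum is attained by the deterministic maximum-a-posteriori rule. The result is the identity in \cref{Fact_App_guess_classical}.

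\textbf{Upper bound.} Fix an arbitrary POVM $\{E_x\}$ on the register $Y$. Since each conditional state $|y\rangle\!\langle y|$ is diagonal, only the diagonal entries of the POVM elements matter. Put $w_y(x):=\langle y|E_x|y\rangle$. Then $w_y(x)\ge0$ because $E_x\ge0$, and $\sum_x w_y(x)=\langle y|I|y\rangle=1$, so $w_y$ is a probability distribution on the values of $X$. The success probability becomes
\[
 \sum_x\sum_y p(x,y)\operatorname{Tr}\bigl(E_x|y\rangle\!\langle y|\bigr)
 =\sum_y \Prb[Y=y]\sum_x \Prb[X=x\mid Y=y]\,w_y(x).
\]
For each fixed $y$, the inner sum is a convex combination of the numbers $\Prb[X=x\mid Y=y]$. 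It is therefore at most $\max_x\Prb[X=x\mid Y=y]$. Averaging over $Y$ gives $p_{\mathrm{guess}}(X\mid Y)\le\E_Y[\max_x\Prb[X=x\mid Y]]$.

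\textbf{Achievability.} For each $y$, choose $\hat x(y)\in\arg\max_x\Prb[X=x\mid Y=y]$. Set $E_x:=\sum_{y:\hat x(y)=x}|y\rangle\!\langle y|$; this is a projective measurement. It gives $w_y=\delta_{\hat x(y)}$, so it attains the bound with equality.

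Alternatively, one could invoke \cref{Fact_guess_CCQ} with $Q_1=Y$ and a trivial $Q_2$. There the adaptive POVM reduces to a distribution over guesses for each $q$, and the same convexity argument applies. There is no real obstacle here. The only point needing care is the observation that an arbitrary (possibly non-diagonal) POVM acts on the diagonal states only through its diagonal entries, which form a valid randomized guessing rule.
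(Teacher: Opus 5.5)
Your argument is correct and complete. The diagonal entries $w_y(x)=\langle y|E_x|y\rangle$ of an arbitrary POVM form a randomized guessing rule, so the success probability is at most the maximum-a-posteriori value, and the projective MAP measurement attains that value.

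The paper states this as a standard fact without proof, so there is no paper argument to compare against; yours is the standard verification. One cosmetic point: for $y$ with $\Prb[Y=y]=0$ the conditional probabilities are undefined, but those terms contribute nothing and $\hat x(y)$ can be chosen arbitrarily there.
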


\begin{Fact}\label{Fact_App_trace_classical}
	For classical variables $X,Y$ and possibly quantum system $Q$,	$d(X|YQ) = E_{y \sim Y}[d(X|Q)_{Y=y}].$
\end{Fact}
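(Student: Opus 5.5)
We prove the identity directly from the definition of $d(\cdot\mid\cdot)$, using that conditioning on a classical register makes every operator involved block diagonal. Here $X$ is a classical bit, as in the definition of $d$. We write the joint state as
\[
 \rho_{XYQ}=\sum_{y}\Prb[Y=y]\,|y\rangle\!\langle y|_Y\otimes\rho^{(y)}_{XQ},
\]
where $\rho^{(y)}_{XQ}$ is the classical--quantum state of $XQ$ conditioned on $Y=y$. Its marginal on $Q$ is $\rho^{(y)}_Q$, and by definition $d(X\mid Q)_{Y=y}$ is computed from $\rho^{(y)}_{XQ}$. Tracing out $X$ gives
\[
 \rho_{YQ}=\sum_y\Prb[Y=y]\,|y\rangle\!\langle y|_Y\otimes\rho^{(y)}_Q .
\]

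Next we form the difference appearing in the definition of $d(X\mid YQ)$:
\[
 \rho_{XYQ}-\frac{I_X}{2}\otimes\rho_{YQ}
 =\sum_y\Prb[Y=y]\,|y\rangle\!\langle y|_Y\otimes
 \Bigl(\rho^{(y)}_{XQ}-\frac{I_X}{2}\otimes\rho^{(y)}_Q\Bigr),
\]
after reordering the tensor factors, which does not change the trace norm. This operator is block diagonal with respect to the orthonormal basis $\{|y\rangle\}$ of $Y$. The trace norm of a block-diagonal operator is the sum of the trace norms of its blocks, since its singular values are the union of the blocks' singular values. Therefore
\[
 \frac12\Bigl\|\rho_{XYQ}-\frac{I_X}{2}\otimes\rho_{YQ}\Bigr\|_1
 =\sum_y\Prb[Y=y]\cdot\frac12\Bigl\|\rho^{(y)}_{XQ}-\frac{I_X}{2}\otimes\rho^{(y)}_Q\Bigr\|_1
 =\E_{y\sim Y}\bigl[d(X\mid Q)_{Y=y}\bigr].
\]

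The only point requiring care is checking that the marginal $\rho_{YQ}$ in the definition decomposes with the same conditional states $\rho^{(y)}_Q$. This holds because the partial trace over $X$ commutes with the block decomposition in $y$. When $Q$ is trivial, the identity reduces to the classical formula \eqref{eq:classical-bit-distance}, which serves as a consistency check.
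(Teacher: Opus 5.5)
Your proof is correct. The paper states this as a standard Fact without giving a proof, so there is no argument of its own to compare against. Your route is the standard justification: decompose $\rho_{XYQ}-\frac{I_X}{2}\otimes\rho_{YQ}$ into blocks indexed by the classical register $Y$, then use additivity of the trace norm over the blocks. The same argument works verbatim for a non-binary classical $X$ with $I_X/2$ replaced by the maximally mixed state.
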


\begin{Fact}\label{Fact_App_guess_DP}
	For a classical variables $X$ with quantum side information $Q$, 
$p_\mathrm{guess}(X|Q) \ge p_\mathrm{guess}(X|Q')$ for any quantum channel $Q \to Q'$.
\end{Fact}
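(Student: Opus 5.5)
The plan is to prove the data-processing inequality for guessing probability directly from the variational definition of $p_{\mathrm{guess}}$, by pulling any measurement on $Q'$ back to a measurement on $Q$ through the Heisenberg-picture adjoint of the channel. Write the cq-state as $\rho_{XQ}=\sum_x p(x)|x\rangle\!\langle x|\otimes\rho_x^Q$, and let $\mathcal N:Q\to Q'$ be the channel (completely positive and trace preserving). Applying $\mathcal N$ to $Q$ alone gives $\rho'_{XQ'}=\sum_x p(x)|x\rangle\!\langle x|\otimes\mathcal N(\rho_x)$, so $p_{\mathrm{guess}}(X\mid Q')=\max_{\{F_x\}}\sum_x p(x)\operatorname{Tr}(F_x\mathcal N(\rho_x))$, the maximum being over POVMs on $Q'$.

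The key step is to fix an arbitrary POVM $\{F_x\}$ on $Q'$ and set $E_x:=\mathcal N^{*}(F_x)$, where $\mathcal N^{*}$ is the adjoint map defined by $\operatorname{Tr}(\mathcal N^{*}(F)\,\sigma)=\operatorname{Tr}(F\,\mathcal N(\sigma))$. First, each $E_x$ is positive semidefinite: complete positivity of $\mathcal N$ implies positivity of $\mathcal N^{*}$, or, more elementarily, $\operatorname{Tr}(E_x\sigma)=\operatorname{Tr}(F_x\mathcal N(\sigma))\ge0$ for every state $\sigma$. Second, $\sum_xE_x=\mathcal N^{*}(I_{Q'})=I_Q$, which follows from trace preservation since $\operatorname{Tr}(\mathcal N^{*}(I)\sigma)=\operatorname{Tr}\mathcal N(\sigma)=\operatorname{Tr}\sigma$ for all $\sigma$. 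Hence $\{E_x\}$ is a POVM on $Q$.

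Finally, by the defining property of the adjoint, $\sum_x p(x)\operatorname{Tr}(E_x\rho_x)=\sum_x p(x)\operatorname{Tr}(F_x\mathcal N(\rho_x))$. The left-hand side is at most $p_{\mathrm{guess}}(X\mid Q)$, and taking the maximum over $\{F_x\}$ on the right yields $p_{\mathrm{guess}}(X\mid Q)\ge p_{\mathrm{guess}}(X\mid Q')$.

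There is no real obstacle. The only point to check is that the adjoint of a trace-preserving positive map is unital and positive; both facts follow from the trace-duality identity above.
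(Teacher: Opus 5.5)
Your proof is correct: pulling any POVM $\{F_x\}$ on $Q'$ back to $\{\mathcal N^{*}(F_x)\}$ on $Q$, with positivity and unitality of the adjoint obtained from trace duality, is the standard argument, and you check both properties properly. The paper gives no proof of this fact and simply treats it as standard (it cites references on min-entropy). Your argument covers the case the paper relies on, namely the measurement channel $Q\to\widehat X$ in the proof of \cref{prop:KT-lifting}.
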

\subsection{Approximate list decoding and classical extraction}
Below we define \emph{strong classical bit extractors} and \emph{approximate list decodings}, and then investigate their properties and relations in \cref{lem:fixed-weight-list-decoding,lem:code-to-extractor}.
\cref{lem:fixed-weight-list-decoding} provides an existence of a ``good'' approximate list decoding,
and then, \cref{lem:code-to-extractor} shows such a good decoder may be used for a strong classical bit extractor.

Start with the following definitions.
\begin{Def}[Strong classical bit extractors]
\label{def:bit-extractor}
Let $\operatorname{Ext}:\{0,1\}^N\times\mathcal S\to\{0,1\}$, and let $R$ be uniform on
$\mathcal S$ and independent of the source.  The function $\operatorname{Ext}$ is a
\emph{strong $(K,\eta)$ classical bit extractor} if
\[
 d\bigl(\operatorname{Ext}(X,R)\mid R\bigr)\le\eta
\]
for every classical source $X$ satisfying $H_{\min}(X)\ge K$.  
\end{Def}

For $u,v\in\{0,1\}^m$, let the normalized Hamming distance be
\[
 \Delta_H(u,v):=\frac1m\bigl|\{i\in[m]:u_i\ne v_i\}\bigr| = \Prb_{I \in_U [m]}[u_I \ne v_I].
\]

\begin{Def}[Approximate list-decodable codes]
\label{def:approximate-list-decoding}
A code $C:\{0,1\}^N\to\{0,1\}^{\rev{M}}$ is
\emph{$(\eta,\delta,L)$ approximately list-decodable} if, for every
$w\in\{0,1\}^{\rev{M}}$, there is a set
$\mathcal L_w\subseteq\{0,1\}^N$ with $|\mathcal L_w|\le L$ such that, for every
$x\in\{0,1\}^N$,
\begin{equation}
 \Delta_H(w,C(x))<\frac12-\eta
 \quad\Longrightarrow\quad
 \Delta_H(x,x')\le\delta
 \text{ for some }x'\in\mathcal L_w.
 \label{eq:approx-list-definition}
\end{equation}
\end{Def}

Equivalently, for each $w$, the set of strings satisfying the left-hand side of
\eqref{eq:approx-list-definition} is contained in
\[
 \bigcup_{x'\in\mathcal L_w}
 \{x:\Delta_H(x,x')\le\delta\}.
\]

We use $[N]:=\{1,\ldots,N\}$ and write $\binom{[N]}j$ for the family of all $j$-element subsets of
$[N]$.  For $S\subseteq[N]$, define
\[
	\operatorname{Par}^{(N,j)}_S(x):=\bigoplus_{i\in S}x_i.
\]
For $1\le j\le N/2$, define the weight-$j$ parity encoding
\[
 \operatorname{Par}^{(N,j)}:\{0,1\}^N\longrightarrow
 \{0,1\}^{\binom Nj},
 \qquad
 x\longmapsto\bigl(\operatorname{Par}^{(N,j)}_S(x)\bigr)_{S\in\binom{[N]}j}.
\]

\cref{lem:fixed-weight-list-decoding} provides an existence of a good approximately list-decodable code.
\begin{Lem}[{\cite[Theorem~44]{IJK09}}]
\label{lem:fixed-weight-list-decoding}
There are universal constants $A,B\ge1$ with the following property.  Let
$1\le j\le N/2$, $0<\theta<1/2$, and $0<\delta\le1/2$.  If
\begin{equation}
 \delta j\ge A\ln\frac1\theta,
 \label{eq:fixed-weight-list-condition}
\end{equation}
then $\operatorname{Par}^{(N,j)}$ is $(\theta,\delta,B/\theta^2)$ approximately list-decodable.
Equivalently, for every word $w\in\{0,1\}^{\binom Nj}$, all strings $x$ satisfying
\[
 \Prb_{S\sim\binom{[N]}j}
 \left[w_S=\operatorname{Par}^{(N,j)}_S(x)\right]>\frac12+\theta
\]
are contained in the union of at most $B/\theta^2$ Hamming balls of relative radius $\delta$.
\end{Lem}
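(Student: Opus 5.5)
The plan is a Johnson-bound (Gram matrix) argument over the $\pm1$ encodings of the parity code, combined with a bound on the bias of a random $j$-wise parity on a fixed difference set. Write $\mathcal S:=\binom{[N]}j$. For $x\in\{0,1\}^N$ let $\chi_x\in\{\pm1\}^{\mathcal S}$ have entries $(\chi_x)_S:=(-1)^{\operatorname{Par}^{(N,j)}_S(x)}$. For a received word $w$ let $\tilde w_S:=(-1)^{w_S}$, and use the normalized inner product $\langle u,v\rangle:=\E_{S\in_U\mathcal S}[u_Sv_S]$. Agreement above $\tfrac12+\theta$ is then the same as $\langle\chi_x,\tilde w\rangle>2\theta$. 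For two strings with difference set $D$, $|D|=d$, we have
\[
 \langle\chi_x,\chi_{x'}\rangle=\E_{S}\bigl[(-1)^{|S\cap D|}\bigr].
\]
The target is the estimate
\[
 \bigl|\E_S(-1)^{|S\cap D|}\bigr|\le e^{-c\,j\min\{d,N-d\}/N}
\]
for a universal $c>0$. Given this estimate, hypothesis \eqref{eq:fixed-weight-list-condition} with $A$ large makes the right-hand side at most $\theta^2$ whenever $\min\{d,N-d\}>\delta N$.

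The list is then built greedily. Let $G_w$ be the set of $x$ with agreement $>\tfrac12+\theta$, and pick a maximal subset $\{x^{(1)},\dots,x^{(m)}\}\subseteq G_w$ such that every pair $x^{(a)},x^{(b)}$ has $\Delta_H>\delta$ and also $\Delta_H(x^{(a)},\overline{x^{(b)}})>\delta$, where $\bar x$ is the complement. The complement has to be handled separately: for even $j$, $\chi_{\bar x}=\chi_x$, so complementary strings can never be told apart. For odd $j$, $\chi_{\bar x}=-\chi_x$; that case is harmless, but treating both parities uniformly is simplest. Put
\[
 \mathcal L_w:=\{x^{(a)},\overline{x^{(a)}}\}_{a\le m}.
\]
By maximality, every $x\in G_w$ lies within relative distance $\delta$ of some element of $\mathcal L_w$. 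Pairwise, the chosen vectors satisfy $\langle\chi_{x^{(a)}},\chi_{x^{(b)}}\rangle\le\theta^2$. Now apply the standard Gram bound to $\tilde w$ and these vectors:
\[
 0\le\Bigl\|\sum_a\chi_{x^{(a)}}-\lambda\tilde w\Bigr\|^2\le m+m^2\theta^2-2\lambda m\cdot2\theta+\lambda^2 .
\]
Optimizing at $\lambda=2\theta m$ gives $m(4\theta^2-\theta^2)\le1$, that is, $m\le1/(3\theta^2)$. Hence $|\mathcal L_w|\le 2/(3\theta^2)$, and any $B\ge1$ works.

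The main obstacle is the bias estimate for a uniformly random $j$-set. I would use a pairing trick, which is where $j\le N/2$ enters. Sample $S$ in two stages:
\begin{enumerate}
 \item draw $j$ disjoint unordered pairs $\{a_i,b_i\}$ sequentially and uniformly, which is possible since $2j\le N$;
 \item select one element of each pair independently and uniformly.
\end{enumerate}
This produces a uniform $j$-subset. Conditional on the pairs, $\E(-1)^{|S\cap D|}$ factorizes over the pairs. A pair with exactly one element in $D$ contributes a factor $0$, and every other pair contributes a factor $\pm1$. Therefore $|\E_S(-1)^{|S\cap D|}|\le\Pr[\text{no pair is split by }D]$. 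Set $\delta':=\min\{d,N-d\}/N$. While at most $j/2$ pairs have been drawn, at least $N/2$ elements remain, and each of $D$ and its complement retains at least $\delta'N-j$ of them. So each of the first $\lfloor j/2\rfloor$ pairs is split with conditional probability $\Omega(\delta')$, at least once $\delta'N\ge 2j$. The remaining regime $\delta' N<2j$ is handled by restricting attention to the first $\lceil\delta'N/4\rceil$ pairs, which still gives an exponent of order $\delta'\cdot\delta'N\gtrsim\delta' j\cdot\delta'N/j$. This is the step I expect to need the most care: the edge regimes in which $\delta'N$ is comparable to $j$ must be checked so that the final exponent is $c\,\delta' j$ with a universal $c$. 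If the pairing bookkeeping becomes messy, the fallback is a direct bound on the normalized Krawtchouk polynomial $K_j(d)/\binom Nj$ in the range $j\le N/2$. Once the bias estimate is in place, the choice of the universal constant $A$ in \eqref{eq:fixed-weight-list-condition} follows by requiring $e^{-cA\ln(1/\theta)}\le\theta^2$.
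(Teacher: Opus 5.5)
The paper gives no proof of this lemma; it imports it from \cite[Theorem~44]{IJK09}. Your self-contained Johnson-bound route is a different approach, and most of it is sound:
\begin{itemize}
\item agreement above $\tfrac12+\theta$ is exactly $\langle\chi_x,\tilde w\rangle>2\theta$;
\item adding complements to the list is what even $j$ requires;
\item the maximal family and the Gram computation give $m\le 1/(3\theta^2)$ once pairwise correlations are at most $\theta^2$;
\item the two-stage pairing does produce a uniform $j$-set and gives $|\E_S(-1)^{|S\cap D|}|\le\Pr[\text{no pair split}]$;
\item for $\delta'N\ge 2j$, your per-pair split probability $\Omega(\delta')$ is right.
\end{itemize}
The gap is where you suspected, in the regime $\delta'N<2j$, and your proposed patch does not close it. Using only the first $\lceil\delta'N/4\rceil$ pairs gives an exponent of order $\delta'^2N=\delta'j\cdot(\delta'N/j)$. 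In this regime the factor $\delta'N/j$ is below $2$ and can be arbitrarily small; your ``$\gtrsim$'' is an identity that displays this deficit rather than removing it. For instance, with $j=N/2$ and $\delta'$ just above $\delta$, you only obtain $e^{-c\delta^2N}$ where $e^{-c\delta N}$ is needed. Fed into the Gram step, this proves the statement only under $\delta^2 j\ge A\ln(1/\theta)$, so $A$ would have to grow like $1/\delta$ rather than being universal. (That weaker version would still serve Appendix~\ref{app:wullschleger}, where $\delta_j$ only needs to be a small constant, but it is not the lemma as stated.)

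The missing idea is that the bias is symmetric in the two sets. By permutation invariance, $\E_{S\in\binom{[N]}j}(-1)^{|S\cap D|}$ equals the average of $(-1)^{|S\cap D'|}$ over independent uniform $S\in\binom{[N]}j$ and $D'\in\binom{[N]}d$. Hence it also equals $\E_{D'\in\binom{[N]}d}(-1)^{|S_0\cap D'|}$ for any fixed $j$-set $S_0$. Since $(-1)^{|S\cap D|}=(-1)^j(-1)^{|S\cap D^c|}$, you may assume $d\le N/2$, so that $\delta'=d/N$ and both set sizes are at most $N/2$.

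Now let $s:=\min\{j,d\}$ and $\ell:=\max\{j,d\}$. Treat the $s$-set as random and the $\ell$-set $L$ as fixed, and run your pairing on the $s$-set, inspecting only the first $\lceil s/4\rceil$ pairs. Before each of these, fewer than $s/2\le\ell/2$ elements have been removed. So at least $\ell/2$ elements of $L$ and at least $N/4$ of $L^c$ remain, and the next pair is split with probability at least $(\ell N/8)/(N^2/2)=\ell/(4N)$, whatever happened before. Hence $|\E_S(-1)^{|S\cap D|}|\le(1-\ell/(4N))^{s/4}\le e^{-jd/(16N)}=e^{-j\delta'/16}$ in every regime. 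With $A=32$ and $B=1$, the rest of your argument then goes through.
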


To see the equivalence explicitly, observe that
\begin{align*}
 \Delta_H\bigl(w,\operatorname{Par}^{(N,j)}(x)\bigr)
 &=\Prb_{S\sim\binom{[N]}j}
   \left[w_S\ne\operatorname{Par}^{(N,j)}_S(x)\right],\\
 1-\Delta_H\bigl(w,\operatorname{Par}^{(N,j)}(x)\bigr)
 &=\Prb_{S\sim\binom{[N]}j}
   \left[w_S=\operatorname{Par}^{(N,j)}_S(x)\right].
\end{align*}
Hence distance less than $1/2-\theta$ is the same as agreement probability greater than
$1/2+\theta$.

\cref{lem:code-to-extractor} shows that good approximately list-decodable codes may be used for strong classical bit extractors.
\begin{Lem}
\label{lem:code-to-extractor}
Let $0<\eta<1/2$.  Suppose that $C:\{0,1\}^N\to\{0,1\}^{\rev{M}}$ is
$(\eta/2,\delta,L)$ approximately list-decodable, where $0\le\delta\le1/2$.  If $R$ is uniform on
$[\rev{M}]$, so that it selects a uniformly random coordinate of the codeword, and
\[
 \operatorname{Ext}_C(x,R):=C(x)_R,
\]
then $\operatorname{Ext}_C$ is a strong $(K,\eta)$ classical bit extractor for
\begin{equation}
 K>H_2(\delta)N+\log L+\log\frac2\eta.
 \label{eq:code-extractor-threshold}
\end{equation}
Here $H_2(u):=-u\log u-(1-u)\log(1-u)$ is the binary entropy.
\end{Lem}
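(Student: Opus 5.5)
The plan is the standard list-decoding route: fix a source $X$ with $H_{\min}(X)\ge K$ and bound $d(C(X)_R\mid R)$ by comparing it with the fraction of seeds on which an adversary predicts the bit. By \eqref{eq:classical-bit-distance}, $d(C(X)_R\mid R)=p_{\mathrm{guess}}(C(X)_R\mid R)-\tfrac12$. By \cref{Fact_App_guess_classical}, the optimal guesser is a deterministic word $w\in\{0,1\}^M$, where $w_r$ is the majority value of $C(X)_r$ for each $r$. Hence
\[
 d(C(X)_R\mid R)=\E_{X}\bigl[1-\Delta_H(w,C(X))\bigr]-\tfrac12 .
\]
It therefore suffices to show that the average agreement $\E_X[\mathrm{agr}(X)]$, with $\mathrm{agr}(x):=1-\Delta_H(w,C(x))$, is at most $\tfrac12+\eta$.

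Next I split the expectation according to whether $x$ lies in the bad set $\mathcal B_w:=\{x:\Delta_H(w,C(x))<\tfrac12-\eta/2\}$. Outside $\mathcal B_w$ the agreement is at most $\tfrac12+\eta/2$. Inside $\mathcal B_w$ I bound the agreement trivially by $1$. This gives
\[
 \E_X[\mathrm{agr}(X)]\le \tfrac12+\tfrac\eta2+\Prb[X\in\mathcal B_w].
\]

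To bound $\Prb[X\in\mathcal B_w]$ I use approximate list-decodability with parameter $\eta/2$. It places $\mathcal B_w$ inside the union of at most $L$ Hamming balls of relative radius $\delta\le1/2$. Each such ball has at most $2^{H_2(\delta)N}$ points, by the standard volume bound $\sum_{i\le\delta N}\binom Ni\le 2^{H_2(\delta)N}$. So $|\mathcal B_w|\le L\,2^{H_2(\delta)N}$. Since every $x$ has probability at most $2^{-K}$, we get
\[
 \Prb[X\in\mathcal B_w]\le L\,2^{H_2(\delta)N-K}<\eta/2,
\]
where the last inequality is exactly \eqref{eq:code-extractor-threshold}. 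Combining the two pieces gives $d\le\eta$.

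The only delicate point is that $w$ depends on the distribution of $X$. This causes no difficulty: the list-decoding property holds for every $w$, so it applies to the majority word in particular. The remaining checks are the volume bound at $\delta=1/2$, where $H_2=1$ and it still holds trivially, and the strict inequality, which is supplied by the strict $>$ in the hypothesis.
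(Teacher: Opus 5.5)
Your proof is correct and follows essentially the same route as the paper: you use the majority word $w$, the set of strings whose codewords agree with $w$ on more than a $\tfrac12+\tfrac\eta2$ fraction of coordinates, list decoding with the Hamming-ball volume bound, and $\Prb[X=x]\le 2^{-K}$. The only difference is presentation: the paper argues by contradiction (assuming $d>\eta$ forces that set to have probability greater than $\eta/2$), whereas you run the same averaging estimate directly to obtain $d<\eta$.
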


\begin{proof}
We first use the standard Hamming-ball estimate: for any $x \in \Bset^N$,
\begin{equation}
\bigl|\{x': \Delta_H(x, x')\le \delta \}\bigr|
 = \sum_{i=0}^{\lfloor\delta N\rfloor}\binom Ni\le2^{H_2(\delta)N}
 \qquad(0\le\delta\le1/2).
 \label{eq:hamming-ball-volume}
\end{equation}
Indeed, when $0<\delta\le1/2$ and $i\le\delta N$, we have
\[
 \delta^i(1-\delta)^{N-i}
 \ge\delta^{\delta N}(1-\delta)^{(1-\delta)N}.
\]
Consequently,
\[
 1
 \ge\sum_{i=0}^{\lfloor\delta N\rfloor}
     \binom Ni\delta^i(1-\delta)^{N-i}
 \ge\delta^{\delta N}(1-\delta)^{(1-\delta)N}
     \sum_{i=0}^{\lfloor\delta N\rfloor}\binom Ni.
\]
Rearranging proves
\eqref{eq:hamming-ball-volume}; the case $\delta=0$ is immediate.

Suppose now that the extractor conclusion fails for a source $X$ satisfying
$H_{\min}(X)\ge K$.  For each $r$, let $g(r)$ be a most likely value of $C(X)_r$ which implies
\[
\Prb[g(R) = C(X)_R] = \E_R\left| \Prb[C(X)_R = 0|R] -\frac{1}{2}\right| + \frac{1}{2} 
= d(\Ext_C(X,R)\mid R) + \frac 12.
\]
By \eqref{eq:classical-bit-distance}, the assumption ``$d(\Ext_C(X,R)\mid R) > \eta$ for the source $X$'' gives the deterministic function 
$g:[\rev{M}]\to\{0,1\}$ for which
\[
 \Prb[g(R)=C(X)_R]>\frac12+\eta.
\]
Define
\[
 \mathcal D
 :=\left\{x:
 \Prb_R[g(R)=C(x)_R]>\frac12+\frac\eta2
 \right\}.
\]
We show both $\Prb[X \in \mathcal D] > \eta/2$ and $\Prb[X \in \mathcal D] < \eta/2$ to reach the contradiction.

\paragraph{The lower bound $\Prb[X \in \mathcal D] > \eta/2$.}
Bounding the success probability by one on $\mathcal D$ and by
$1/2+\eta/2$ outside $\mathcal D$, we obtain
\[
 \frac12+\eta
 <\Prb[g(R)=C(X)_R]
 \le\Prb[X\in\mathcal D]
   +\bigl(1-\Prb[X\in\mathcal D]\bigr)
    \left(\frac12+\frac\eta2\right).
\]
It follows that
$\Prb[X\in\mathcal D]>\eta/(1-\eta)>\eta/2$.

\paragraph{\rev{The upper bound $\Prb[X \in \mathcal D] < \eta/2$.}}
Regard $w=(g(1),\ldots,g(\rev{M}))$ as a received word.  For every
$x\in\mathcal D$, it agrees with the condition for the approximate list-decoding; 
\[
\Prb_{R}[w_R = C(x)_R] > 1/2 + \eta/2.
\]
Hence the definition of approximate list-decoding in Definition~\ref{def:approximate-list-decoding} places $\mathcal D$ in the union of at most $L$ normalized Hamming balls of radius $\delta$:
\[
\mathcal D \subseteq \bigcup_{1 \le j \le L} \{x': \Delta_H(x_j, x')\le \delta \}
\qquad
\text{for some $x_j$'s,}
\]
which reads
 $|\mathcal D|\le L2^{H_2(\delta)N}$ by the volume bound~\eqref{eq:hamming-ball-volume}.
 The min-entropy assumption means $\Prb[X=x]\le2^{-K}$ for every $x$.  Consequently,
\[
 \Prb[X\in\mathcal D]
 \le|\mathcal D|2^{-K}
 \le L2^{H_2(\delta)N-K}
 <\frac\eta2,
\]
where the last inequality is precisely \eqref{eq:code-extractor-threshold}.  
We therefore obtain the contradiction. 
\end{proof}

\subsection{Lifting to quantum side information}
Here we investigate how properties of classical bit extractors survive when quantum side information exists.
Proposition~\ref{prop:KT-lifting} below shows a classical bit extractor behaves nicely even with the side information, and is the one-bit specialization of
Ref.~\cite[Theorem~III.1]{KT08}.  Its proof uses a property of pretty-good measurement.
 We first establish 
in \cref{lem:binary-pgm-lifting}
 the property that
converts a classical guessing probability of a binary random variable into the quantum one.

\begin{Lem}
\label{lem:binary-pgm-lifting}
Let $\rho_{ZQ} = \sum_{z \in \Bset}|z \rangle \langle z|\otimes \sigma_z$ be a cq-state.
Let $\widehat Z$ be the outcome of the pretty-good measurement $\mathbf{M}^\mathsf{PGM}$ with effects
\[
 \left\{M^\mathsf{PGM}_z := (\sigma_0+\sigma_1)^{-1/2}\sigma_z(\sigma_0+\sigma_1)^{-1/2}
 \right\}_{z \in \Bset},
\]
where the inverse is taken on $\operatorname{supp}(\sigma_0+\sigma_1)$.
Then
\[
 d(Z\mid Q)
 \le \sqrt{\frac{d(Z\mid\widehat Z)}{2}}+d(Z).
\]
\end{Lem}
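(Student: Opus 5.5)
The plan is to compute both sides through a single pair of operators, $\rho:=\sigma_0+\sigma_1=\rho_Q$ and $\Delta:=\sigma_0-\sigma_1$, and compare them with a Hölder inequality. It appears to give the stronger bound $d(Z\mid Q)\le\sqrt{d(Z\mid\widehat Z)/2}$, with the $+d(Z)$ term as unused slack.

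\emph{Step 1: rewrite the left-hand side.} Since $\rho_{ZQ}-\frac{I_Z}{2}\otimes\rho_Q=\sum_z|z\rangle\!\langle z|\otimes(\sigma_z-\rho/2)$ and $\sigma_0-\rho/2=\Delta/2=-(\sigma_1-\rho/2)$, block-diagonality of the trace norm gives
\[
d(Z\mid Q)=\tfrac12\|\Delta\|_1 .
\]

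\emph{Step 2: lower-bound $d(Z\mid\widehat Z)$ by the PGM success.} The pair $(Z,\widehat Z)$ is classical, so \eqref{eq:classical-bit-distance} gives $d(Z\mid\widehat Z)=p_{\mathrm{guess}}(Z\mid\widehat Z)-\frac12$. The guess ``output $\widehat Z$'' is one admissible strategy, hence
\[
d(Z\mid\widehat Z)\ge\Prb[\widehat Z=Z]-\tfrac12 .
\]
Writing $\sigma_z=(\rho\pm\Delta)/2$, all operators are supported in $\operatorname{supp}\rho$, and we have $\Tr(\rho^{-1/2}\rho\rho^{-1/2}\rho)=\Tr\rho=1$ and $\Tr(\rho^{-1/2}\rho\rho^{-1/2}\Delta)=\Tr\Delta$. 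The cross terms in $\Tr\Delta$ come with opposite signs for $z=0$ and $z=1$, so they cancel. A direct expansion then yields
\[
\Prb[\widehat Z=Z]=\sum_z\Tr\bigl(M^{\mathsf{PGM}}_z\sigma_z\bigr)=\tfrac12+\tfrac12\Tr\bigl(K^2\bigr),
\qquad K:=\rho^{-1/4}\Delta\rho^{-1/4}.
\]
Therefore $d(Z\mid\widehat Z)\ge\frac12\|K\|_2^2$.

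\emph{Step 3: Hölder.} Since $\Delta$ is supported in $\operatorname{supp}\rho$, we have $\Delta=\rho^{1/4}K\rho^{1/4}$. Hölder's inequality for Schatten norms with exponents $(4,2,4)$ then gives
\[
\|\Delta\|_1\le\|\rho^{1/4}\|_4\,\|K\|_2\,\|\rho^{1/4}\|_4=(\Tr\rho)^{1/2}\|K\|_2=\|K\|_2 .
\]
Combining the three steps,
\[
d(Z\mid Q)=\tfrac12\|\Delta\|_1\le\tfrac12\sqrt{2\,d(Z\mid\widehat Z)}=\sqrt{d(Z\mid\widehat Z)/2}\le\sqrt{d(Z\mid\widehat Z)/2}+d(Z).
\]

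The main point needing care is the support bookkeeping in Step 2, where the inverse is taken only on $\operatorname{supp}\rho$. This is harmless because $\sigma_0,\sigma_1\le\rho$ forces their supports, and hence that of $\Delta$, into $\operatorname{supp}\rho$. The remaining work is checking the algebraic expansion of the PGM success probability. If the stronger inequality seemed suspicious, a fallback would be to use the Barnum--Knill bound $p_{\mathrm{guess}}\le\sqrt{p_{\mathrm{PGM}}}$ together with a separate treatment of the bias $d(Z)$. The Hölder route, however, appears to make that unnecessary.
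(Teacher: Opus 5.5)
Your proof is correct and follows essentially the paper's route: the same block computation $d(Z\mid Q)=\tfrac12\|\Delta\|_1$, the same PGM identity $\Tr(\rho^{-1/2}\Delta\rho^{-1/2}\Delta)=2\Pr[\widehat Z=Z]-1$, and the same $\rho^{1/4}$-sandwich estimate (the paper's ``two Cauchy--Schwarz'' steps are your $(4,2,4)$ H\"older inequality). The one difference is that the paper first subtracts $(\Tr\Delta)\rho$ from $\Delta$ and then applies the triangle inequality, which is where its $+d(Z)$ term comes from; applying H\"older directly to $\Delta$, as you do, is valid and gives the sharper bound $d(Z\mid Q)\le\sqrt{d(Z\mid\widehat Z)/2}$.
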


\begin{proof}
Write $\sigma:=\sigma_0+\sigma_1$ and $\Delta:=\sigma_0-\sigma_1$.
Then
\begin{align}
\begin{split}
 d(Z\mid Q)&=\frac12\lVert\frac 12 \left\{| 0\rangle \langle 0| \otimes \Delta -|1 \rangle \langle 1| \otimes \Delta \right\} \rVert_1 = \frac12\lVert\Delta\rVert_1,\\
 d(Z)&=\frac12 \left\{|\Tr \sigma_0 - \frac12| + |\Tr \sigma_1 - \frac12|\right\} = \frac12\bigl|\operatorname{Tr}\Delta\bigr|.
\end{split}
 \label{eq:binary-pgm-block}
\end{align}

Since $\operatorname{Tr}\Delta=\Prb[Z=0]-\Prb[Z=1]$, subtracting
$(\operatorname{Tr}\Delta)\sigma$ removes the component arising solely from the marginal bias of
$Z$.  The factorization
\[
 \Delta-(\operatorname{Tr}\Delta)\sigma
 =\sigma^{1/4}
 \bigl[\sigma^{-1/4}\Delta\sigma^{-1/4}
       -(\operatorname{Tr}\Delta)\sigma^{1/2}\bigr]
 \sigma^{1/4}
\]
together with two applications of the Cauchy--Schwarz inequality for the Hilbert--Schmidt inner
product\footnote{
	Take $A := \sigma^{1/4}, B:= A^{-1}\Delta A^{-1} - (\Tr \Delta) A^{2}, U_\Mopt := \mathsf{sgn}(ABA), P:= A^2$ and use
\begin{align*}
\bigl\lVert\Delta-(\operatorname{Tr}\Delta)\sigma\bigr\rVert_1
 &= \|ABA\|_1 =|\Tr U_\Mopt ABA| = |\Tr AU_\Mopt AB| \le \|AU_\Mopt A\|_2 \|B\|_2,\\
\|AU_\Mopt A\|_2^2 &= |\Tr U_\Mopt PU_\Mopt P| \le \|U_\Mopt PU_\Mopt\|_2 \|P\|_2 = \|P\|_2^2 = \Tr \sigma =1.
\end{align*}
} gives
\begin{align*}
 \bigl\lVert\Delta-(\operatorname{Tr}\Delta)\sigma\bigr\rVert_1^2
 &\le\bigl\lVert\sigma^{-1/4}\Delta\sigma^{-1/4}
       -(\operatorname{Tr}\Delta)\sigma^{1/2}\bigr\rVert_2^2\\
 &=\operatorname{Tr}\!\left(\sigma^{-1/2}\Delta
       \sigma^{-1/2}\Delta\right)-(\operatorname{Tr}\Delta)^2.
\end{align*}

For the equality in the second line, expanding the square without assuming that $\sigma$ and
$\Delta$ commute gives
\begin{align*}
 \bigl\lVert\sigma^{-1/4}\Delta\sigma^{-1/4}
       -(\operatorname{Tr}\Delta)\sigma^{1/2}\bigr\rVert_2^2
 &= \Tr\bigl(\sigma^{-1/4}\Delta\sigma^{-1/4}
       -(\operatorname{Tr}\Delta)\sigma^{1/2}\bigr)^2\\
 &=\operatorname{Tr}\!\left[
       \bigl(\sigma^{-1/4}\Delta\sigma^{-1/4}\bigr)^2\right]
 -(\operatorname{Tr}\Delta)
       \operatorname{Tr}\!\left(\sigma^{-1/4}\Delta\sigma^{1/4}\right)\\
 &\qquad
 -(\operatorname{Tr}\Delta)
       \operatorname{Tr}\!\left(\sigma^{1/4}\Delta\sigma^{-1/4}\right)
 +(\operatorname{Tr}\Delta)^2\operatorname{Tr}\sigma.
\end{align*}
By cyclicity of trace, the first term is
$\operatorname{Tr}(\sigma^{-1/2}\Delta\sigma^{-1/2}\Delta)$.
Moreover, $0\le\sigma_z\le\sigma$ implies that $\sigma_z$, and hence $\Delta$, is supported on
$\operatorname{supp}(\sigma)$.  Since $\sigma^{1/4}\sigma^{-1/4}$ is the projection onto this
support, each of the two middle traces is $\operatorname{Tr}\Delta$.  Finally,
$\operatorname{Tr}\sigma=1$, so the last three terms sum to $-(\operatorname{Tr}\Delta)^2$, as
claimed.

The difference of the two measurement effects is
$\sigma^{-1/2}\Delta\sigma^{-1/2}$.  Therefore
\begin{align*}
 \operatorname{Tr}\!\left(\sigma^{-1/2}\Delta\sigma^{-1/2}\Delta\right)
 &=\operatorname{Tr}\!\left[
   \bigl(M^{\mathsf{PGM}}_0-M^{\mathsf{PGM}}_1\bigr)(\sigma_0-\sigma_1)
   \right]\\
 &=\operatorname{Tr}(M^{\mathsf{PGM}}_0\sigma_0)
   +\operatorname{Tr}(M^{\mathsf{PGM}}_1\sigma_1)
   -\operatorname{Tr}(M^{\mathsf{PGM}}_1\sigma_0)
   -\operatorname{Tr}(M^{\mathsf{PGM}}_0\sigma_1)\\
 &=\Prb[\widehat Z=Z]-\Prb[\widehat Z\ne Z]
 =2 \Prb[\widehat Z=Z] - 1\\
 &\le2 \left(p_\mathrm{guess}(Z|\widehat Z) - \frac 12 \right)
= 2d(Z\mid\widehat Z).
\end{align*}
The inequality holds because the raw outcome $\widehat Z$ is one possible decision rule, whereas
$p_\mathrm{guess}$ is the optimal guessing probability. 
In summary, it holds that 
\begin{align*}
 \bigl\lVert\Delta-(\operatorname{Tr}\Delta)\sigma\bigr\rVert_1^2
 &\le \operatorname{Tr}\!\left(\sigma^{-1/2}\Delta
       \sigma^{-1/2}\Delta\right)-(\operatorname{Tr}\Delta)^2.\\
 &\le \operatorname{Tr}\!\left(\sigma^{-1/2}\Delta\sigma^{-1/2}\Delta\right)\\
 &\le 2d(Z\mid\widehat Z).
\end{align*}

The triangle inequality and the expression~\eqref{eq:binary-pgm-block} now give the desired bound
\begin{align*}
 d(Z\mid Q)
 = \frac 12 \|\Delta\|_1
 \le \frac 12 \left( \|\Delta - (\Tr\Delta)\sigma\|_1 + \|(\Tr\Delta)\sigma\|_1\right)
 &\le\frac12
   \bigl\lVert\Delta-(\operatorname{Tr}\Delta)\sigma\bigr\rVert_1
   +d(Z)\\
 &\le\sqrt{\frac{d(Z\mid\widehat Z)}2}+d(Z).
\end{align*}
\end{proof}

\begin{Prop}
\label{prop:KT-lifting}
Let $0<\eta<1$.  If $\operatorname{Ext}$ is a strong $(K,\eta)$ classical bit extractor, then, for
every cq-state
$\rho_{XQ}$ satisfying
\[
 H_{\min}(X\mid Q)\ge K+\log\frac1\eta,
\]
and every independent uniform randomness $R$,
\[
 d\bigl(\operatorname{Ext}(X,R)\mid RQ\bigr)\le2\sqrt\eta.
\]
\end{Prop}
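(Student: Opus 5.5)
The plan is to fix each seed value, apply \cref{lem:binary-pgm-lifting} to the bit $Z_r:=\operatorname{Ext}(X,r)$, and then control the two resulting terms with the classical extractor property. Write $\rho_{XQ}=\sum_x p(x)|x\rangle\!\langle x|\otimes\rho_x$ and $\sigma:=\rho_Q=\sum_x p(x)\rho_x$. Since $R$ is classical and independent of $XQ$, \cref{Fact_App_trace_classical} gives $d(\operatorname{Ext}(X,R)\mid RQ)=\E_r\, d(Z_r\mid Q)$. For each $r$, \cref{lem:binary-pgm-lifting} gives $d(Z_r\mid Q)\le\sqrt{d(Z_r\mid\widehat Z_r)/2}+d(Z_r)$, where $\widehat Z_r$ is the PGM outcome for the ensemble of $Z_r$. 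Averaging over $r$ and applying Jensen to the concave square root will give
\[
 d\bigl(\operatorname{Ext}(X,R)\mid RQ\bigr)\le\sqrt{\tfrac12\E_r\, d(Z_r\mid\widehat Z_r)}+d\bigl(\operatorname{Ext}(X,R)\mid R\bigr).
\]
The second term is at most $\eta$ by the extractor property, because $H_{\min}(X)\ge H_{\min}(X\mid Q)\ge K$.

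The key observation for the first term is that the binary PGM for $Z_r$ is a coarse-graining of a single $r$-independent measurement. The effects are $M^{\mathsf{PGM}}_z=\sigma^{-1/2}\bigl(\sum_{x:\operatorname{Ext}(x,r)=z}p(x)\rho_x\bigr)\sigma^{-1/2}$, and $\sigma$ does not depend on $r$. Hence $\widehat Z_r$ has the same joint law with $X$ as $\operatorname{Ext}(\widehat X,r)$, where $\widehat X$ is the outcome of the PGM $\{\sigma^{-1/2}p(x)\rho_x\sigma^{-1/2}\}_x$ for $X$ itself. Since $\widehat Z_r$ is a function of $\widehat X$, \eqref{eq:classical-bit-distance} and the optimality of guessing give $d(Z_r\mid\widehat Z_r)\le d(Z_r\mid\widehat X)$. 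Therefore
\[
 \E_r\, d(Z_r\mid\widehat Z_r)\le d\bigl(\operatorname{Ext}(X,R)\mid R\widehat X\bigr),
\]
where $\widehat X$ is classical and independent of $R$.

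It then remains to bound this purely classical quantity by $2\eta$, using a Markov argument on min-entropy. By \cref{Fact_App_guess_DP}, $p_{\mathrm{guess}}(X\mid\widehat X)\le p_{\mathrm{guess}}(X\mid Q)\le\eta 2^{-K}$. By \cref{Fact_App_guess_classical}, $\E_{\hat x}\max_x\Pr[X=x\mid\widehat X=\hat x]\le\eta2^{-K}$, so Markov gives $\Pr_{\hat x}\bigl[\max_x\Pr[X=x\mid\hat x]>2^{-K}\bigr]\le\eta$. On the remaining values $\hat x$, the conditional source satisfies $H_{\min}(X\mid\widehat X=\hat x)\ge K$ and is independent of $R$, so the extractor property gives distance at most $\eta$. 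On the bad values the distance is trivially at most $1$. Averaging over $\hat x$ via \cref{Fact_App_trace_classical} yields $d(\operatorname{Ext}(X,R)\mid R\widehat X)\le 2\eta$.

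Combining the two bounds gives $d(\operatorname{Ext}(X,R)\mid RQ)\le\sqrt{\eta}+\eta\le2\sqrt\eta$ for $0<\eta<1$. I expect the main obstacle to be the step identifying $\widehat Z_r$ with $\operatorname{Ext}(\widehat X,r)$. Without it, the measurement used for the bit depends on the seed $r$, and the classical extractor guarantee, which averages over a seed independent of the side information, cannot be invoked. I would verify this identity carefully, including the handling of the support of $\sigma$ in the inverse square roots.
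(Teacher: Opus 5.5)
Your proposal is correct and follows essentially the same route as the paper. Like the paper, you take the $R$-independent pretty-good measurement $\widehat X$ for $X$ and bound $d(\operatorname{Ext}(X,R)\mid R\widehat X)\le 2\eta$ by Markov on $\max_x\Pr[X=x\mid\widehat X]$. You then identify the binary PGM for $\operatorname{Ext}(X,r)$ as the coarse-graining $\operatorname{Ext}(\widehat X,r)$, and combine \cref{lem:binary-pgm-lifting} seedwise with Jensen and $d(\operatorname{Ext}(X,R)\mid R)\le\eta$ to get $\sqrt\eta+\eta\le2\sqrt\eta$. The only difference is that you present the steps top-down rather than bottom-up.
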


\begin{proof}
Let $\widehat X$ be the outcome of the pretty-good measurement for $X$ given $Q$, whose effects are
\[
 \left\{
 \rho_Q^{-1/2}\langle x|\rho_{XQ}|x\rangle\rho_Q^{-1/2}
 \right\}_x.
\]
The inverse is taken on $\operatorname{supp}(\rho_Q)$, on which these effects sum to the identity.
This measurement is independent of $R$.  Moreover, data processing~(\cref{Fact_App_guess_DP}) and the entropy assumption give
\begin{equation}
 p_{\mathrm{guess}}(X\mid\widehat X)
 \le p_{\mathrm{guess}}(X\mid Q)
 \le\eta2^{-K}.
 \label{eq:Prop_bound_classical}
\end{equation}
Applying Markov's inequality to the nonnegative random variable
\[
 \max_x\Prb[X=x\mid\widehat X],
\]
whose expectation is $p_{\mathrm{guess}}(X\mid\widehat X)$ as in \cref{Fact_App_guess_classical}, yields
\begin{equation}
 \Prb\left[
   \max_x\Prb[X=x\mid\widehat X]>2^{-K}
 \right]\le\eta.
 \label{eq:pgm-bad-outcomes}
\end{equation}
For the other outcome $\hat x$ satisfying $\max_x\Prb[X=x\mid\widehat X= \hat x] \le 2^{-K}$, the conditional source has min-entropy at least $K$, and $R$ remains
uniform and independent. Averaging the value $d\bigl(\operatorname{Ext}(X,R)\mid R\widehat X\bigr)$ over these outcomes and using the
bound $d\le1$ on the exceptional outcomes in \eqref{eq:pgm-bad-outcomes} together with the extractor guarantee, we obtain
\begin{align}
 d\bigl(\operatorname{Ext}(X,R)\mid R\widehat X\bigr)
 &= \E_{\widehat X} [d(\operatorname{Ext}(X,R)\mid R)_{\widehat X}]\notag \\ 
 &\le \Prb[\widehat X \text{~not exceptional}] \cdot \eta + \Prb[\widehat X \text{~exceptional}]\notag \\ 
 &\le 2\eta
 \label{align:fixed-measurement-extractor}
\end{align}

For a fixed instance $r$, postprocessing $\widehat X$ to
$\operatorname{Ext}(\widehat X,r)$ coarsens this measurement to the two effects
\[
 \sum_{x:\operatorname{Ext}(x,r)=z}
 \rho_Q^{-1/2}\langle x|\rho_{XQ}|x\rangle\rho_Q^{-1/2},
 \qquad z\in\{0,1\}.
\]
Because the subnormalized states belonging to the same extractor output add, these are precisely
the binary pretty-good measurement effects for $\operatorname{Ext}(X,r)$.  Furthermore,
\eqref{align:fixed-measurement-extractor} and data processing~(\cref{Fact_App_guess_DP}) imply
\[
 d\bigl(\operatorname{Ext}(X,R)\mid
        R\operatorname{Ext}(\widehat X,R)\bigr)
 \le2\eta.
\]
Moreover, the bound \eqref{eq:Prop_bound_classical} on $p_{\mathrm{guess}}(X\mid Q)$ tells 
\[
 p_{\mathrm{guess}}(X)
 \le p_{\mathrm{guess}}(X\mid Q)
 \le\eta2^{-K}
 \le2^{-K}.
\]
Thus $H_{\min}(X)\ge K$, and the property of the extractor $\operatorname{Ext}$ also gives
\[
 d\bigl(\operatorname{Ext}(X,R)\mid R\bigr)\le\eta.
\]
Applying \cref{Fact_App_trace_classical} and Lemma~\ref{lem:binary-pgm-lifting} for each randomness $r$ and then Jensen's inequality, we conclude
that
\begin{align*}
 d\bigl(\operatorname{Ext}(X,R)\mid RQ\bigr)
 &\le\sqrt{\frac{d\bigl(\operatorname{Ext}(X,R)\mid
 R\operatorname{Ext}(\widehat X,R)\bigr)}{2}}
        +d\bigl(\operatorname{Ext}(X,R)\mid R\bigr)\\
 &\le\sqrt\eta+\eta
 \le2\sqrt\eta.
\end{align*}
\end{proof}

\subsection{From random parities $\operatorname{Par}^{(N,j)}_S(X)$ to a random substring $X_T$}

Let $X\in\{0,1\}^N$ be the classical part of a cq-state $\rho_{XQ}$.  For $0\le j\le N$, let
$S$ be uniform on $\binom{[N]}j$.  By \eqref{eq:quantum-bit-distance},
\begin{equation}
 p_{\mathrm{guess}}\bigl(\operatorname{Par}^{(N,j)}_S(X)\mid SQ\bigr)
 =\frac12+d\bigl(\operatorname{Par}^{(N,j)}_S(X)\mid SQ\bigr).
 \label{eq:parity-distance-identity}
\end{equation}
For $j=0$, the guessing probability on the left is one.

We first upper-bound the guessing probability $p_{\mathrm{guess}}\bigl(\operatorname{Par}^{(N,j)}_S(X)\mid SQ\bigr)$ as follows.
\begin{Prop}
\label{prop:random-parity}
Let $0<\eta<1/2$, let $1\le j\le N/2$, and put
\[
 \delta_j:=\frac{A}{j}\ln\frac2\eta,
\]
where $A$ and $B$ are the constants from Lemma~\ref{lem:fixed-weight-list-decoding}.  Suppose that
$\delta_j\le1/2$.  If
\begin{equation}
 H_{\min}(X\mid Q)
 >H_2(\delta_j)N+4\log\frac1\eta+\log(8B),
 \label{eq:random-parity-entropy}
\end{equation}
then
\[
 p_{\mathrm{guess}}\bigl(\operatorname{Par}^{(N,j)}_S(X)\mid SQ\bigr)
 \le\frac12+2\sqrt\eta.
\]
\end{Prop}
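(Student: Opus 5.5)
The plan is to chain the three earlier ingredients: \cref{lem:fixed-weight-list-decoding} gives approximate list decoding, \cref{lem:code-to-extractor} turns that into a classical extractor, and \cref{prop:KT-lifting} lifts the extractor to quantum side information. At the end, \eqref{eq:parity-distance-identity} converts the resulting trace-distance bound into a guessing-probability bound. The extractor will be $\operatorname{Ext}(x,S):=\operatorname{Par}^{(N,j)}_S(x)$. Choosing a uniformly random coordinate of the codeword $\operatorname{Par}^{(N,j)}(x)$ is the same as choosing $S$ uniformly from $\binom{[N]}j$. So $\operatorname{Ext}$ has exactly the form $\operatorname{Ext}_C(x,R)=C(x)_R$ of \cref{lem:code-to-extractor}, with $C=\operatorname{Par}^{(N,j)}$ and $R=S$.

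\emph{Step 1 (list decoding).} Apply \cref{lem:fixed-weight-list-decoding} with $\theta:=\eta/2$ and $\delta:=\delta_j$. The side conditions hold: $\eta<1/2$ gives $0<\theta<1/2$, and the hypothesis gives $0<\delta_j\le 1/2$. Condition \eqref{eq:fixed-weight-list-condition} holds with equality, since $\delta_j j=A\ln(2/\eta)=A\ln(1/\theta)$. The conclusion is that $\operatorname{Par}^{(N,j)}$ is $(\eta/2,\delta_j,4B/\eta^2)$ approximately list-decodable.

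\emph{Step 2 (classical extractor).} Apply \cref{lem:code-to-extractor} with this $\eta$, using $L=4B/\eta^2$. It shows that $\operatorname{Ext}$ is a strong $(K,\eta)$ classical bit extractor for every $K$ satisfying
\[
 K>H_2(\delta_j)N+\log\frac{4B}{\eta^2}+\log\frac2\eta
 =H_2(\delta_j)N+3\log\frac1\eta+\log(8B).
\]

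\emph{Step 3 (quantum lifting).} Set $K:=H_{\min}(X\mid Q)-\log(1/\eta)$. Hypothesis \eqref{eq:random-parity-entropy} is exactly the statement that this $K$ satisfies the strict inequality of Step~2. Hence $\operatorname{Ext}$ is a strong $(K,\eta)$ extractor, and $H_{\min}(X\mid Q)\ge K+\log(1/\eta)$ holds by construction. \Cref{prop:KT-lifting}, applied with $R=S$ independent and uniform, then gives $d(\operatorname{Par}^{(N,j)}_S(X)\mid SQ)\le 2\sqrt\eta$. Substituting this into \eqref{eq:parity-distance-identity} yields the claimed bound $\frac12+2\sqrt\eta$.

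The only step needing care is the bookkeeping of constants. The extra $\log(1/\eta)$ demanded by \cref{prop:KT-lifting} must combine with the $3\log(1/\eta)+\log(8B)$ from Steps~1--2 to give exactly the threshold in \eqref{eq:random-parity-entropy}. One must also check that $\theta=\eta/2$ makes the list-decoding hypothesis hold with $\delta_j$ as defined. I expect no genuine obstacle beyond this, since all the substantive work sits in the cited lemmas.
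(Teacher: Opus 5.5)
Your proposal is correct and matches the paper's proof step for step. You apply \cref{lem:fixed-weight-list-decoding} with $\theta=\eta/2$, obtain the extractor threshold $H_2(\delta_j)N+3\log(1/\eta)+\log(8B)$ from \cref{lem:code-to-extractor}, lift with \cref{prop:KT-lifting}, and finish with \eqref{eq:parity-distance-identity}; your explicit choice $K:=H_{\min}(X\mid Q)-\log(1/\eta)$ is just a concrete instance of the paper's ``choose such a $K$''.
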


\begin{proof}
Apply Lemma~\ref{lem:fixed-weight-list-decoding} with $\theta=\eta/2$.  The radius is chosen to be $\delta_j$
because
\[
 \delta_j\,j=A\ln\frac2\eta=A\ln\frac1\theta,
\]
so \eqref{eq:fixed-weight-list-condition} holds with equality.  The list size is at most
\[
 \frac{B}{(\eta/2)^2}=\frac{4B}{\eta^2}.
\]
Thus Lemma~\ref{lem:code-to-extractor} shows that the parity map is a strong $(K,\eta)$ classical
bit extractor for every
\[
 K>H_2(\delta_j)N+\log\frac{4B}{\eta^2}+\log\frac2\eta
  =H_2(\delta_j)N+3\log\frac1\eta+\log(8B).
\]
Indeed, for the encoding $\operatorname{Par}^{(N,j)}$, the corresponding extractor is
$\operatorname{Ext}(x,S)=\operatorname{Par}^{(N,j)}_S(x)$.
Assumption~\eqref{eq:random-parity-entropy} allows us to choose such a $K$ with
$H_{\min}(X\mid Q)\ge K+\log(1/\eta)$.  Proposition~\ref{prop:KT-lifting} therefore gives
\[
 d(\operatorname{Par}^{(N,j)}_S(X)\mid SQ)\le2\sqrt\eta.
\]
The identity \eqref{eq:parity-distance-identity} now gives the claimed bound on
$p_{\mathrm{guess}}(\operatorname{Par}^{(N,j)}_S(X)\mid SQ)$.
\end{proof}

\begin{Lem}
\label{lem:parity-to-substring}
Let $T$ be a uniformly random $k$-subset of $[N]$.  Then
\begin{equation}
 p_{\mathrm{guess}}(X_T\mid TQ)
 \le2^{-k}\sum_{j=0}^k\binom kj
 \left(2p_{\mathrm{guess}}\bigl(\operatorname{Par}^{(N,j)}_S(X)\mid SQ\bigr)-1\right),
 \label{eq:parity-to-substring}
\end{equation}
where, in the summand indexed by $j$, the set $S$ is uniform on $\binom{[N]}j$.
\end{Lem}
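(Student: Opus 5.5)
Fix $T=t$ and an arbitrary POVM $\{E^t_v\}_{v\in\{0,1\}^t}$ on $Q$ that guesses $X_t$; call its outcome $\widehat V$. By \cref{Fact_guess_CCQ}, $p_{\mathrm{guess}}(X_T\mid TQ)$ is the average over $t$ of the best such success probabilities, so it suffices to bound $\Prb[\widehat V=X_t]$ for a fixed adaptive measurement. The plan is the standard Fourier (Vazirani XOR-lemma) identity for the indicator of equality of two $k$-bit strings:
\[
 \mathbf 1[\widehat V=X_t]
 =\prod_{i\in t}\frac{1+(-1)^{\widehat V_i\oplus X_i}}{2}
 =2^{-k}\sum_{S\subseteq t}(-1)^{\bigoplus_{i\in S}(\widehat V_i\oplus X_i)} .
\]
Taking expectations gives
\[
 \Prb[\widehat V=X_t]=2^{-k}\sum_{S\subseteq t}\Bigl(2\Prb\bigl[\operatorname{Par}_S(\widehat V)=\operatorname{Par}_S(X)\bigr]-1\Bigr),
\]
where $\operatorname{Par}_S(\widehat V)$ is a guess for the parity $\operatorname{Par}^{(N,|S|)}_S(X)$ obtained by a measurement on $Q$ depending only on $t$.

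Next I average over $T$ uniform on $\binom{[N]}k$ and regroup the pairs $(T,S)$ with $S\subseteq T$ by $j=|S|$. If $T$ is uniform and $S$ is a uniform $j$-subset of $T$, then $S$ is marginally uniform on $\binom{[N]}j$. Each $j$-subset of $t$ contributes, so there are $\binom kj$ subsets of size $j$ and
\[
 p\text{-term}_j=\binom kj\Bigl(2\,\E_{T}\E_{S\subseteq T,|S|=j}\Prb[\text{guess}_{T}(S)=\operatorname{Par}_S(X)]-1\Bigr).
\]
The inner quantity is the success probability of a strategy that, given $S$ and $Q$, samples a uniformly random superset $T\supseteq S$ of size $k$ (conditionally uniform given $S$), measures $\{E^T_v\}$, and outputs the parity of the guess on $S$. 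Private randomness can be absorbed into the POVM, so this strategy is a valid guess of $\operatorname{Par}^{(N,j)}_S(X)$ from $SQ$, and its success probability is at most $p_{\mathrm{guess}}(\operatorname{Par}^{(N,j)}_S(X)\mid SQ)$. Summing over $j$, including $j=0$ where both sides equal $1$, and then maximizing over the measurements on the left gives \eqref{eq:parity-to-substring}.

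The main subtlety is the regrouping step. The Fourier expansion is an exact identity, and the coefficients $2\Prb[\cdot]-1$ may be negative for individual $S$. This is harmless: I only replace each $j$-group's averaged coefficient by an upper bound, and the weight $\binom kj$ is nonnegative, so the inequality direction is preserved. The care needed is in checking the joint law of $(S,T)$: it must be precisely $S$ uniform on $\binom{[N]}j$ with $T$ a uniform completion. This holds by symmetry, since every pair $(S,T)$ with $S\subseteq T$, $|S|=j$, $|T|=k$ is equally likely.
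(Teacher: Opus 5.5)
Your proposal is correct and follows essentially the same route as the paper. Both arguments expand $\mathbf 1\{\widehat X_T=X_T\}$ in characters over subsets $S\subseteq T$, regroup by $|S|=j$ using that $S$ is then marginally uniform on $\binom{[N]}{j}$, and bound each group by the strategy that samples a uniform superset $T\supseteq S$, runs the substring decoder, and outputs the parity on $S$. The only cosmetic difference is that you bound an arbitrary adaptive measurement and maximize at the end, whereas the paper fixes an optimal one from the start.
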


\begin{proof}
For each $t\in\binom{[N]}k$, fix an optimal measurement for guessing $X_t$.  When the random set
$T$ is selected, denote the output of the corresponding measurement by $\widehat X_T$ and define
\[
 W_T:=\widehat X_T\oplus X_T\in\{0,1\}^T.
\]
Thus the decoder succeeds exactly when $W_T=0$.  For each fixed $T=t$, multiplying the identities
\[
 \mathbf{1}\{W_{T,i}=0\}=\frac{1+(-1)^{W_{T,i}}}{2},
 \qquad i\in t,
\]
and expanding the product over $i$'s gives
\[
 \mathbf{1}\{W_T=0\}
 =2^{-k}\sum_{s\subseteq t}(-1)^{\oplus_{i\in s}W_{T,i}}.
\]
After choosing $T$ uniformly, choose $S$ uniformly among its $2^k$ subsets.  Taking expectations in
the preceding identity gives
\begin{equation}
 p_{\mathrm{guess}}(X_T\mid TQ)
 =\E_{T,S}\left[(-1)^{\oplus_{i\in S}W_{T,i}}\right].
 \label{eq:substring-fourier-average}
\end{equation}
The expectation also includes the source and the decoder's measurement outcome.
Equivalently, each element of $T$ is included in $S$ independently with probability $1/2$.
Therefore
\[
 \Prb[|S|=j]=2^{-k}\binom kj.
\]
Conditional on $|S|=j$, the set $S$ is uniform on $\binom{[N]}j$\footnote{  
One way to verify this is to
note that, for every fixed $s\in\binom{[N]}j$,
\[
 \Prb[S=s]
 =\frac{\binom{N-j}{k-j}}{\binom Nk\,2^k},
\]
which depends only on $j$, not on $s$.}.

To relate $p_{\mathrm{guess}}(X_T\mid TQ)$ and $p_{\mathrm{guess}}(\operatorname{Par}^{(N,j)}_S(X)\mid SQ\bigr)$ as in~\eqref{eq:parity-to-substring},
we construct a guessing estimator for $\operatorname{Par}^{(N,j)}_S(X)$ based on $\widehat X_T$ in the following way:
\begin{enumerate}
	\item Condition on $|S|=j$ and reveal $S$.
	\item Sample $T$ according to its conditional distribution given $S\subseteq T$.
	\item Apply the decoder for $X_T$.
 	\item Output the parity of the coordinates of $\widehat X_T$ indexed by $S$.
\end{enumerate}
This is a valid strategy for guessing
$\operatorname{Par}^{(N,j)}_S(X)$; the random choice of $T$ can be incorporated into the measurement.  Its
success probability is
\[
 \frac12\left(
  1+\E\left[(-1)^{\oplus_{i\in S}W_{T,i}}\,\middle|\,|S|=j\right]
 \right).
\]
It cannot exceed the optimal guessing probability.  Consequently,
\[
 \E\left[(-1)^{\oplus_{i\in S}W_{T,i}}\,\middle|\,|S|=j\right]
 \le2p_{\mathrm{guess}}\bigl(\operatorname{Par}^{(N,j)}_S(X)\mid SQ\bigr)-1.
\]
Grouping \eqref{eq:substring-fourier-average} according to $j=|S|$ now gives
\begin{align*}
 p_{\mathrm{guess}}(X_T\mid TQ)
 &=\sum_{j=0}^k2^{-k}\binom kj
   \E\left[(-1)^{\oplus_{i\in S}W_{T,i}}\,\middle|\,|S|=j\right]\\
 &\le2^{-k}\sum_{j=0}^k\binom kj
 \left(2p_{\mathrm{guess}}\bigl(\operatorname{Par}^{(N,j)}_S(X)\mid SQ\bigr)-1\right),
\end{align*}
which is \eqref{eq:parity-to-substring}.
\end{proof}

\subsection{Proof of \cref{lem:wullschleger}}
Combining all the arguments, we prove~\cref{lem:wullschleger} as follows.
\begin{proof}[Proof of Lemma~\ref{lem:wullschleger}]
Fix $\gamma>0$.  We split the sum in Lemma~\ref{lem:parity-to-substring} at degree $k/4$.
The terms of degree at most $k/4$ have exponentially small binomial weight, while
Proposition~\ref{prop:random-parity} bounds every parity bias of larger degree.

\paragraph{Choice of parameters.}
Let $A$ and $B$ be the universal constants from
Lemma~\ref{lem:fixed-weight-list-decoding}.  Choose $\beta>0$ such that, for
$\delta_\beta:=9A\beta\ln2$,
\begin{equation}
 \delta_\beta<\frac12
 \qquad\text{and}\qquad
 H_2(\delta_\beta)<\frac\gamma3.
 \label{eq:sampling-beta-choice}
\end{equation}
Choose $0<\lambda_\gamma\le1/2$ such that
\begin{equation}
 8\beta\lambda_\gamma<\frac\gamma3.
 \label{eq:sampling-density-choice}
\end{equation}
Set
\[
 c_\gamma:=\frac12\min\left\{\frac1{8\ln2},\beta\right\}.
\]
Finally, choose an integer $k_\gamma$ sufficiently large that, for every $k\ge k_\gamma$,
\begin{equation}
 2^{-2\beta k}<\frac12,
 \qquad
 \frac{4A\ln2}{k}\le A\beta\ln2,
 \qquad
 \log(8B)<\frac{\gamma k}{3},
 \qquad
 5\cdot2^{-2c_\gamma k}\le2^{-c_\gamma k}.
 \label{eq:sampling-threshold-choice}
\end{equation}
For any $N$ and $k$ satisfying $k_\gamma\le k\le\lambda_\gamma N$, put
\begin{equation}
 \eta_k:=2^{-2\beta k},
 \qquad
 \delta_j:=\frac Aj\ln\frac2{\eta_k}
 \quad (k/4<j\le k).
 \label{eq:sampling-remaining-parameters}
\end{equation}

\paragraph{Verification of the hypotheses.}
We verify simultaneously that Proposition~\ref{prop:random-parity} applies for every
$j$ with $k/4<j\le k$.  First, \eqref{eq:sampling-threshold-choice} gives
$0<\eta_k<1/2$.  Moreover, $j\ge1$ and
$j\le k\le\lambda_\gamma N\le N/2$.  Using $j>k/4$, we also have
\begin{align*}
 \delta_j
 &=\frac Aj\ln\frac2{\eta_k}\\
 &\le\frac{4A}{k}\bigl(\ln2+2\beta k\ln2\bigr)
 \le9A\beta\ln2
 =\delta_\beta
 <\frac12,
\end{align*}
where the penultimate inequality follows from \eqref{eq:sampling-threshold-choice}.
Since $H_2$ is increasing on $[0,1/2]$, our choices further imply
\begin{align*}
 H_2(\delta_j)N+4\log\frac1{\eta_k}+\log(8B)
 &<\frac{\gamma N}{3}+8\beta k+\frac{\gamma k}{3}\\
 &\le\frac{\gamma N}{3}
     +8\beta\lambda_\gamma N+\frac{\gamma N}{3}
 <\gamma N
 \le H_{\min}(X\mid Q).
\end{align*}
Thus all the hypotheses of Proposition~\ref{prop:random-parity} are satisfied.
The hypotheses of Lemma~\ref{lem:parity-to-substring} are also satisfied because $T$ is, by
assumption, uniform on $\binom{[N]}k$.

\paragraph{Application of the bounds.}
Proposition~\ref{prop:random-parity} gives, for every $j>k/4$,
\[
 2p_{\mathrm{guess}}\bigl(\operatorname{Par}^{(N,j)}_S(X)\mid SQ\bigr)-1
 \le4\sqrt{\eta_k}
 =4\cdot 2^{-\beta k}.
\]
For $j\le k/4$, we use the trivial upper bound $1$ on the same bias.
Lemma~\ref{lem:parity-to-substring} and the standard Chernoff bound therefore yield
\[
 p_{\mathrm{guess}}(X_T\mid TQ)
 \le \Prb[\operatorname{Bin}(k,1/2)\le k/4]+4 \cdot 2^{-\beta k}
 \le e^{-k/8}+4\cdot 2^{-\beta k}.
\]
By the definitions of $c_\gamma$ and $k_\gamma$, the last expression is at most
$5\cdot2^{-2c_\gamma k}\le2^{-c_\gamma k}$, as required.
\end{proof}

\section{An alternative proof of~\cref{lem:functional-sampling}}
\label{app:simplified-proof}

We give a direct proof of \cref{lem:functional-sampling} using the
pretty-good measurement and a bound on the size of a Hamming ball.
The argument also applies when $X^n$ has an arbitrary distribution,
provided that $Y^n$ is uniform and independent of $X^nU$.

Let us first recall several notations. Let $\mathcal X,\mathcal Y,\mathcal W$ be finite nonempty sets and let $g:\mathcal X\times\mathcal Y\to\mathcal W$ have pairwise distinct rows:
\[
 a\ne a'\quad\Longrightarrow\quad
 g(a,b)\ne g(a',b)\text{ for some }b\in\mathcal Y.
\]
Let $X^n$ have an arbitrary distribution on $\mathcal X^n$, with a
finite-dimensional quantum register $U$. Write their joint state as
\[
 \rho_{X^nU}=\sum_{x\in\mathcal X^n}|x\rangle\langle x|\otimes\tau_x,
 \qquad \rho_U=\sum_x\tau_x,
 \qquad \Tr\tau_x=\Prb[X^n=x].
\]
Thus the operators $\tau_x$ are positive semidefinite and
$\Tr\rho_U=1$. Define
\[
 p_{\mathrm{guess}}(X^n\mid U):=
 \max_{\{F_x\}}\sum_x\Tr(F_x\tau_x),
 \qquad
 H_{\min}(X^n\mid U):=-\log_2p_{\mathrm{guess}}(X^n\mid U),
\]
where the maximum ranges over POVMs on $U$ with outcomes in $\mathcal X^n$.

Bob receives $Y^n$, uniform on $\mathcal Y^n$ and independent of $X^nU$.
For each $y$, he applies a POVM $\{M_w^y\}_{w\in\mathcal W^n}$ to $U$.
His success probability is
\[
 \AvSP_n:=\frac1{|\mathcal Y|^n}\sum_{y\in\mathcal Y^n}\sum_{x\in\mathcal X^n}
 \Tr\bigl(M_{g^n(x,y)}^y\tau_x\bigr).
\]
Let $\Delta_H$ be the normalized Hamming distance and denote the size of radius-$r$ Hamming ball in $\mathcal X^n$ as
\[
	B_\mathcal{X}(n,r):= |\{x \mid \Delta_H(x_0,x) \le r/n\}| = \sum_{j=0}^r\binom nj(|\mathcal X|-1)^j
 \qquad(x_0 \in \mathcal X^n,~~0\le r\le n).
\]

\begin{Prop}\label{prop:direct-main}
	Assume $|\mathcal X| \ge 2$.
For every integer $0\le r<n$,
\begin{equation}\label{eq:direct-main}
 \AvSP_n^2\le
 B_\mathcal{X}(n,r)\,2^{-H_{\min}(X^n\mid U)}
 +\exp\!\left(-\frac{r+1}{|\mathcal Y|}\right).
\end{equation}
Consequently, for every fixed $h>0$, there is $\kappa=\kappa(\mathcal X,\mathcal Y,h)>0$
such that $H_{\min}(X^n\mid U)\ge hn$ implies
$\AvSP_n\le2^{-\kappa n}$ for all sufficiently large $n$.
\end{Prop}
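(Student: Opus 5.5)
The plan is to bound $\AvSP_n$ directly by a Cauchy--Schwarz argument around $\rho_U^{-1/4}$, which compares Bob's measurements with the pretty-good measurement for $X^n$. Group the states by Bob's output: for $y\in\mathcal Y^n$ and $w\in\mathcal W^n$, put
\[
 T_{y,w}:=\sum_{x:\,g^n(x,y)=w}\tau_x .
\]
Then $\AvSP_n=\E_y\sum_w\Tr(M^y_wT_{y,w})$. Write $\rho:=\rho_U$ and insert $\rho^{1/4}\rho^{-1/4}$ on both sides of $T_{y,w}$; all $\tau_x$ are supported on $\operatorname{supp}\rho$, so this is legitimate with the inverse taken on the support. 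Cauchy--Schwarz for the Hilbert--Schmidt inner product, jointly over $(y,w)$, gives
\[
 \AvSP_n^2\le
 \Bigl(\E_y\sum_w\Tr\bigl(M^y_w\rho^{1/2}M^y_w\rho^{1/2}\bigr)\Bigr)
 \Bigl(\E_y\sum_w\Tr\bigl(T_{y,w}\rho^{-1/2}T_{y,w}\rho^{-1/2}\bigr)\Bigr).
\]
The first factor is at most one. Indeed, $0\le M^y_w\le I$ gives $\Tr(M\rho^{1/2}M\rho^{1/2})\le\Tr(M\rho)$, because $M^{1/2}\rho^{1/2}M^{1/2}$ is positive and $M\le I$. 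Summing over $w$ then gives $\Tr\rho=1$.

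Expanding the second factor gives
\[
 \E_y\sum_{x,x':\,g^n(x,y)=g^n(x',y)}K(x,x'),
 \qquad
 K(x,x'):=\Tr\bigl(\tau_x\rho^{-1/2}\tau_{x'}\rho^{-1/2}\bigr)\ge0 .
\]
Each $K(x,x')$ is nonnegative because it equals the Hilbert--Schmidt norm squared of $\tau_x^{1/2}\rho^{-1/4}\tau_{x'}^{1/2}$. We also have $\sum_{x,x'}K(x,x')=\Tr\rho=1$. I would split the pairs according to whether $\Delta_H(x,x')\le r/n$.

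\emph{Near pairs.} Drop the collision constraint and use $K(x,x')\le\tfrac12\bigl(K(x,x)+K(x',x')\bigr)$, which is Cauchy--Schwarz again. Each $x$ has at most $B_{\mathcal X}(n,r)$ neighbours, so the near part is at most $B_{\mathcal X}(n,r)\sum_xK(x,x)$. Here $\sum_xK(x,x)$ is exactly the success probability of the pretty-good measurement for $X^n$ given $U$, hence at most $p_{\mathrm{guess}}(X^n\mid U)=2^{-H_{\min}(X^n\mid U)}$.

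\emph{Far pairs.} If $x$ and $x'$ differ in at least $r+1$ coordinates, consider each such coordinate $i$. Pairwise distinct rows give some $b$ with $g(x_i,b)\ne g(x'_i,b)$. Hence a uniform $y_i$ separates the two coordinates with probability at least $1/|\mathcal Y|$, independently across coordinates. Therefore
\[
 \Prb_y\bigl[g^n(x,y)=g^n(x',y)\bigr]\le(1-1/|\mathcal Y|)^{r+1}\le e^{-(r+1)/|\mathcal Y|}.
\]
Weighting by $K(x,x')$ and using $\sum K=1$ bounds the far part by $\exp(-(r+1)/|\mathcal Y|)$. Combining the two parts gives \eqref{eq:direct-main}.

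For the consequence, take $r=\lfloor\alpha n\rfloor$ and use $B_{\mathcal X}(n,r)\le2^{n(H_2(\alpha)+\alpha\log|\mathcal X|)}$ for $\alpha\le1/2$. Choose $\alpha$ small enough that $H_2(\alpha)+\alpha\log|\mathcal X|\le h/2$. Both terms are then $2^{-\Omega(n)}$, and taking square roots yields $\kappa$ depending only on $|\mathcal X|$, $|\mathcal Y|$ and $h$.

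The step I expect to need the most care is the first Cauchy--Schwarz step with the $\rho^{\pm1/4}$ insertion. One must check that the correct pair of factors emerges with no cross terms. Support issues are handled by working on $\operatorname{supp}\rho$, as in \cref{lem:binary-pgm-lifting}. One must also justify $K(x,x')\ge0$ and $\Tr(M\rho^{1/2}M\rho^{1/2})\le\Tr(M\rho)$.
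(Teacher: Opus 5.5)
Your proposal is correct and is essentially the paper's proof: your $K(x,x')$ is exactly the joint law of $(X^n,\widetilde X^n)$ under the pretty-good measurement for $X^n$. Your joint Cauchy--Schwarz therefore reproduces the paper's per-$y$ use of Lemma~\ref{lem:direct-pgm} followed by Jensen, and your near/far split and far-pair bound match the paper's. The one variation is the near-pair estimate, where your Gram-matrix bound $K(x,x')\le\tfrac12(K(x,x)+K(x',x'))$ replaces the paper's operational argument of guessing a uniform point in the Hamming ball around $\widetilde X^n$; both give the same bound. One small slip: the witness for $K(x,x')\ge0$ should be $\tau_x^{1/2}\rho^{-1/2}\tau_{x'}^{1/2}$, not $\tau_x^{1/2}\rho^{-1/4}\tau_{x'}^{1/2}$.
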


We use the following standard inequality for the pretty-good measurement
(Ref.~\cite[Theorem~3.10]{Wat18}) and include its proof for completeness.
Throughout, inverse powers are taken on the support of the relevant state;
measurements may be completed arbitrarily on its orthogonal complement.

\begin{Lem}\label{lem:direct-pgm}
Let $\tau_z\ge0$, $\rho:=\sum_z\tau_z$, and $\Tr\rho=1$.
The POVM $E_z^{\mathrm{PGM}}:=\rho^{-1/2}\tau_z\rho^{-1/2}$ satisfies, for every POVM
$\{M_z\}_z$,
\begin{equation}\label{eq:direct-pgm}
 \left(\sum_z\Tr(M_z\tau_z)\right)^2
 \le\sum_z\Tr(E_z^{\mathrm{PGM}}\tau_z).
\end{equation}
\end{Lem}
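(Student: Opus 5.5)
The plan is to prove \eqref{eq:direct-pgm} by a single Cauchy--Schwarz inequality for the Hilbert--Schmidt inner product, after splitting each $\tau_z$ symmetrically around $\rho^{1/4}$. Everything can be restricted to $\operatorname{supp}(\rho)$, since $0\le\tau_z\le\rho$ forces each $\tau_z$ to be supported there. On this support $\Pi:=\rho^{1/4}\rho^{-1/4}$ acts as the identity, so we may write
\[
 \Tr(M_z\tau_z)=\Tr\bigl(\rho^{1/4}M_z\rho^{1/4}\cdot\rho^{-1/4}\tau_z\rho^{-1/4}\bigr).
\]
Summing over $z$ turns this into an inner product between the tuples $(\rho^{1/4}M_z\rho^{1/4})_z$ and $(\rho^{-1/4}\tau_z\rho^{-1/4})_z$ in the direct sum of Hilbert--Schmidt spaces. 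Both tuples consist of Hermitian operators, so all the traces involved are real.

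Cauchy--Schwarz on the direct sum then gives
\[
 \Bigl(\sum_z\Tr(M_z\tau_z)\Bigr)^2\le\Bigl(\sum_z\Tr\bigl(\rho^{1/4}M_z\rho^{1/2}M_z\rho^{1/4}\bigr)\Bigr)\cdot\Bigl(\sum_z\Tr\bigl(\rho^{-1/2}\tau_z\rho^{-1/2}\tau_z\bigr)\Bigr).
\]
The second factor equals $\sum_z\Tr(E_z^{\mathrm{PGM}}\tau_z)$ by cyclicity of the trace. So the whole task reduces to showing that the first factor is at most $1$.

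For the first factor, write $\sum_z\Tr(M_z\rho^{1/2}M_z\rho^{1/2})$ and apply Cauchy--Schwarz once more, now with the pair $\rho^{1/4}M_z\rho^{1/4}$ and $\rho^{1/4}M_z\rho^{1/4}$ viewed differently. The cleanest route is the operator inequality $M_z\rho^{1/2}M_z\le\|\rho^{1/2}\|\,M_z^2$, but that is too weak. Instead I would use
\[
 \Tr(M_z\rho^{1/2}M_z\rho^{1/2})\le\sqrt{\Tr(M_z^2\rho)\,\Tr(\rho^{1/2}M_z^2\rho^{1/2}\cdot\,)}
\]
and the simpler fact that $|\Tr(A\rho^{1/2}A\rho^{1/2})|\le\Tr(A^2\rho)$ for Hermitian $A$. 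This follows from Cauchy--Schwarz with $X=\rho^{1/4}A\rho^{1/4}$, since $\Tr(X^2)\le\Tr(\rho^{1/2}A^2\rho^{1/2})$; equivalently, it is Lieb-type convexity for the pair $(\rho^{1/2},\rho^{1/2})$.

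Combining $\Tr(M_z^2\rho)\le\Tr(M_z\rho)$, which holds because $0\le M_z\le I$ gives $M_z^2\le M_z$, with $\sum_z M_z=I$ and $\Tr\rho=1$ shows that the first factor is at most $1$, as required.

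I expect the main obstacle to be the inequality $\Tr(A\rho^{1/2}A\rho^{1/2})\le\Tr(A^2\rho)$ for Hermitian $A$ and $\rho\ge0$. To prove it, I would diagonalize $\rho=\sum_i\lambda_i|i\rangle\langle i|$. The left side becomes $\sum_{i,j}|A_{ij}|^2\sqrt{\lambda_i\lambda_j}$ and the right side becomes $\sum_{i,j}|A_{ij}|^2\lambda_j$, which symmetrizes to $\sum_{i,j}|A_{ij}|^2(\lambda_i+\lambda_j)/2$. The inequality then follows from the AM--GM bound $\sqrt{\lambda_i\lambda_j}\le(\lambda_i+\lambda_j)/2$.
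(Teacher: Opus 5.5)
Your proposal is correct and follows the paper's proof: the same splitting $\Tr(M_z\tau_z)=\Tr(\rho^{1/4}M_z\rho^{1/4}\cdot\rho^{-1/4}\tau_z\rho^{-1/4})$, one Hilbert--Schmidt Cauchy--Schwarz inequality, and the identification of the second factor with $\sum_z\Tr(E_z^{\mathrm{PGM}}\tau_z)$. The only difference is in bounding the first factor: the paper gets $\Tr(\rho^{1/2}M_z\rho^{1/2}M_z)\le\Tr(\rho^{1/2}M_z\rho^{1/2})$ in one line from $\rho^{1/2}M_z\rho^{1/2}\ge0$ and $I-M_z\ge0$, whereas your detour through $\Tr(M_z^2\rho)$ is valid but unnecessary (your eigenbasis/AM--GM argument proves that step, while the ``Cauchy--Schwarz with $X=\rho^{1/4}A\rho^{1/4}$'' remark only restates the claim).
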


\begin{proof}[Proof of Lemma~\ref{lem:direct-pgm}]
Work on the support of $\rho$ and put
\[
 A_z:=\rho^{-1/4}\tau_z\rho^{-1/4},
 \qquad C_z:=\rho^{1/4}M_z\rho^{1/4}.
\]
Cyclicity of trace gives $\Tr(A_zC_z)=\Tr(M_z\tau_z)$.
Hilbert--Schmidt Cauchy--Schwarz, applied to the families $(A_z)_z$ and
$(C_z)_z$, implies
\[
 \left(\sum_z\Tr(M_z\tau_z)\right)^2
 \le\left(\sum_z\Tr A_z^2\right)
     \left(\sum_z\Tr C_z^2\right).
\]
Since $0\le M_z\le I$,
\begin{align*}
 \sum_z\Tr C_z^2
 &=\sum_z\Tr\bigl(\rho^{1/2}M_z\rho^{1/2}M_z\bigr)\\
 &\le\sum_z\Tr\bigl(\rho^{1/2}M_z\rho^{1/2}\bigr)
 =\Tr\rho=1.
\end{align*}
The inequality follows from $\rho^{1/2}M_z\rho^{1/2}\ge0$ and
$I-M_z\ge0$. Finally,
\[
 \Tr A_z^2
 =\Tr\bigl(\rho^{-1/2}\tau_z\rho^{-1/2}\tau_z\bigr)
 =\Tr(E_z^{\mathrm{PGM}}\tau_z),
\]
which establishes \eqref{eq:direct-pgm}.
\end{proof}

\begin{proof}[Proof of Proposition~\ref{prop:direct-main}]
Measure $U$ with the pretty-good measurement for $X^n$,
\[
 E_{x'}^{\mathrm{PGM}}:=\rho_U^{-1/2}\tau_{x'}\rho_U^{-1/2},
\]
and denote its outcome by $\widetilde X^n$. This measurement is independent
of $Y^n$, so the pair $(X^n,\widetilde X^n)$ is independent of $Y^n$.
We compare the original protocol with the strategy that outputs
$g^n(\widetilde X^n,Y^n)$.

For fixed $y\in\mathcal Y^n$, the ensemble for the desired output is
\[
 \tau_w^y:=\sum_{x:\,g^n(x,y)=w}\tau_x,
 \qquad \sum_w\tau_w^y=\rho_U.
\]
Its pretty-good measurement is obtained by grouping the outcomes of
$\{E_{x'}^{\mathrm{PGM}}\}$:
\[
 E_w^{y,\mathrm{PGM}}:=\rho_U^{-1/2}\tau_w^y\rho_U^{-1/2}
 =\sum_{x':\,g^n(x',y)=w}E_{x'}^{\mathrm{PGM}}.
\]
Let $s_y$ be the original protocol's success probability conditioned on
$Y^n=y$. Lemma~\ref{lem:direct-pgm}, applied to $\{\tau_w^y\}_w$, gives
\[
 s_y^2\le
 \Prb\bigl[g^n(\widetilde X^n,y)=g^n(X^n,y)\bigr].
\]
Averaging over $y$ and applying Jensen's inequality, we obtain
\begin{equation}\label{eq:direct-comparison}
 \AvSP_n^2\le\E_y[s_y^2]
 \le\Prb\bigl[g^n(\widetilde X^n,Y^n)=g^n(X^n,Y^n)\bigr].
\end{equation}

We now split the set $(X^n, \widetilde X^n)$ into the two parts: 
$\Delta_H(X^n, \widetilde X^n) \le r/n$ and $\Delta_H(X^n, \widetilde X^n) > r/n$,
and we then bound $\Prb\bigl[g^n(\widetilde X^n,Y^n)=g^n(X^n,Y^n)\bigr]$ in each case separately.

\emph{Close input pairs.}
After observing $\widetilde X^n$, guess a uniformly random string in its
radius-$r$ Hamming ball. This is a valid guessing strategy for $X^n$
from $U$, with success probability
\[
	\frac{\Prb[\Delta_H(X^n,\widetilde X^n)\le r/n]}{B_\mathcal{X}(n,r)}.
\]
Therefore,
\begin{equation}\label{eq:direct-close}
 \Prb[\Delta_H(X^n,\widetilde X^n)\le r/n]
 \le B_\mathcal{X}(n,r)\,p_{\mathrm{guess}}(X^n\mid U).
\end{equation}

\emph{Distant input pairs.}
Distinct rows imply that, for $a\ne a'$,
\[
	\Prb_{Y\sim\mathrm{Unif}(\mathcal Y)}[g(a,Y)=g(a',Y)]\le1-\frac{1}{|\mathcal Y|}.
\]
Hence, for fixed $x,x'\in\mathcal X^n$ with $\Delta_H(x,x')>r/n$, independence of
the coordinates of $Y^n$ gives
\begin{equation}\label{eq:direct-far}
 \Prb[g^n(x,Y^n)=g^n(x',Y^n)]
 \le\left(1-\frac{1}{|\mathcal Y|}\right)^{n\Delta_H(x,x')}
\le e^{-(r+1)/|\mathcal Y|}.
\end{equation}
Since $(X^n,\widetilde X^n)$ is independent of $Y^n$, we may average
\eqref{eq:direct-far} over all pairs at distance greater than $r$. Thus
\[
 \Prb[g^n(\widetilde X^n,Y^n)=g^n(X^n,Y^n)]
 \le\Prb[\Delta_H(X^n,\widetilde X^n)\le r/n]+e^{-(r+1)/|\mathcal{Y}|}.
\]
Together with \eqref{eq:direct-comparison} and \eqref{eq:direct-close}, this proves
\eqref{eq:direct-main}.

\paragraph{Exponential decay.}
Suppose $H_{\min}(X^n\mid U)\ge hn$. 
Choose a constant $0<\delta<1/2$ such that
\[
 H_2(\delta)+\delta\log_2(|\mathcal X|-1)<h/2,
\]
and set $r=\lfloor\delta n\rfloor$.
The Hamming-ball estimate similar to Expression~\eqref{eq:hamming-ball-volume} shows
\begin{equation}\label{eq:direct-ball}
	B_\mathcal{X}(n,r)\le
 2^{n[H_2(\delta)+\delta\log_2(|\mathcal X|-1)]}.
\end{equation}
Since $r+1>\delta n$, Inequality~\eqref{eq:direct-main} now yields
\[
 \AvSP_n\le\sqrt{2^{-hn/2}+e^{-\delta n/|\mathcal Y|}}
 \le\sqrt2\,2^{-an/2},
 \qquad
 a:=\min\!\left\{\frac h2,\frac\delta{|\mathcal Y|\ln2}\right\}>0.
\]
Taking $\kappa=a/4$ proves the claim for $n\ge2/a$.
\end{proof}

The argument places no product or uniformity assumption on $X^n$, and no
restriction on the encoding into $U$. If $|\mathcal X|=1$, then
$H_{\min}(X^n\mid U)=0$, so the positive-entropy hypothesis is impossible.
In particular, applying Proposition~\ref{prop:direct-main} to the uniform input
distribution in \cref{lem:functional-sampling} proves that lemma without
using \cref{lem:wullschleger}.

\end{document}